
\documentclass[11pt]{article} 


\usepackage[paper=letterpaper, margin=1in]{geometry}

\usepackage{ifthen}

\ifthenelse{\isundefined{\GenerateShortVersion}}
{
\def\LongVersion{}
\def\LongVersionEnd{}
\long\def\ShortVersion#1\ShortVersionEnd{}
}
{
\def\ShortVersion{}
\def\ShortVersionEnd{}
\long\def\LongVersion#1\LongVersionEnd{}
}



\usepackage{ifthen}

\newcommand{\Ignore}[1]{\ignorespaces}

\usepackage{amsmath}
\usepackage{amssymb}

\usepackage{amsthm}

\usepackage{ifpdf}

\usepackage{authblk}


\ifpdf
\usepackage[pdftex]{graphicx}
\else
\usepackage[dvips]{graphicx}
\fi

\usepackage[%
  ocgcolorlinks,%
  linkcolor=blue,%
  filecolor=blue,%
  citecolor=blue,%
  urlcolor=blue]{hyperref}

\usepackage{enumitem}

\usepackage{multirow}
\usepackage[singlelinecheck=false, labelfont=bf]{caption}

\usepackage{cleveref}
\usepackage{csquotes}

\usepackage[noend,ruled,linesnumbered]{algorithm2e}

\usepackage[T1]{fontenc}

\LongVersion 
\LongVersionEnd 

\ShortVersion 
\usepackage{times}
\ShortVersionEnd 

\ShortVersion 
\renewcommand{\paragraph}[1]{\par\noindent\textbf{#1}}
\ShortVersionEnd 

\newtheorem{theorem}{Theorem}[section]
\newtheorem{lemma}[theorem]{Lemma}

\newtheorem{corollary}[theorem]{Corollary}

\newtheorem{claim}[theorem]{Claim}

\newtheorem*{corollary*}{Corollary}

\theoremstyle{definition}

\newtheorem{remark}[theorem]{Remark}

\newtheorem*{definition*}{Definition}
\theoremstyle{plain}

\newenvironment{subproof}[1][\proofname]{%
\begin{proof}[#1]%
}{%
\end{proof}%
}

\def\qedlabel#1{\def\theqedlabel{#1}}


\newcommand{\NextTime}[2]{{A}_{#1}(#2)}


\newcommand{\predictJ}[2]{a_{#1}^{#2}}
\newcommand{\predict}[1]{a_{#1}}
\newcommand{\predictHat}[2]{\hat{a}_{#1}(#2)}
\newcommand{\LastRequest}[2]{R_{#1}(#2)}
\newcommand{\NULL}{-1}

\newcommand{\HD}[2]{\mathcal{H}_{#1}^{#2}}

\newcommand{\ActI}{\theta}
\newcommand{\ActL}{\mathcal{L}}

\newcommand{\Alg}{\ensuremath{\mathtt{Alg}}}
\newcommand{\OPT}{\ensuremath{\mathtt{OPT}}}
\newcommand{\cost}{\operatorname{cost}}
\newcommand{\regret}{\operatorname{regret}}

\newcommand{\INV}{\mathtt{INV}}

\newcommand{\expert}{predictor}
\newcommand{\experts}{predictors}
\newcommand{\Expert}{Predictor}
\newcommand{\Experts}{Predictors}

\newcommand{\FitF}{\ensuremath{\mathtt{FitF}}}
\newcommand{\Sim}{\ensuremath{\mathtt{Sim}}}

\newcommand{\ErrorRounds}{\sharp \mathtt{ErrorRounds}}
\newcommand{\LpNorm}{\sharp \mathtt{L}_{1}}
\newcommand{\InvertedPairs}{\sharp \mathtt{InvertedPairs}}
\newcommand{\InvertedRounds}{\sharp \mathtt{InvertedRounds}}
\newcommand{\RoundsWithErrorAndInversion}{\sharp \mathtt{ErrorRoundsInInversion}}
\newcommand{\StateDistance}{\sharp \mathtt{Distances}}

\newcommand{\Ex}{\mathbb{E}}

\newcommand{\Evictions}{\sharp \mathtt{Evictions}}

\newcommand{\SightlessChasing}{\texttt{S-C\&S}}

\newcommand{\FundingEmek}{The work of Yuval Emek was supported in part by an Israeli Science Foundation grant number 1016/17.}
\newcommand{\FundingKutten}{The work of Shay Kutten was supported in part by a grant from the ministry of science in the program that is joint with JSPS and in part by the BSF.}
\newcommand{\FundingShi}{The work of Yangguang Shi was partially supported at the Technion by a fellowship of the Israel Council for Higher Education.}

\title{Online Paging with a Vanishing Regret}
\author{Yuval Emek,\thanks{\FundingEmek{}} $\;$ Shay Kutten,\thanks{\FundingKutten{}} $\;$ Yangguang Shi\thanks{\FundingShi{}}}
\affil{Faculty of Industrial Engineering and Management, Technion, Haifa, Israel\\
\texttt{\{yemek, kutten\}@technion.ac.il}, \texttt{shiyangguang@campus.technion.ac.il}}

\date{}

\begin{document}

\begin{titlepage}

\maketitle

\begin{abstract}
This paper considers a variant of the online \emph{paging} problem, where the
online algorithm has access to multiple \emph{\experts{}}, each producing a
sequence of predictions for the page arrival times.
The \experts{} may have occasional prediction errors and it is assumed
that at least one of them makes a sublinear number of prediction errors in
total.
Our main result states that this assumption suffices for the design of a
randomized online algorithm whose time-average \emph{regret} with respect to
the optimal offline algorithm tends to zero as the time tends to infinity.
This holds (with different regret bounds) for both the \emph{full information}
access model, where in each round, the online algorithm gets the predictions
of all \experts{}, and the \emph{bandit} access model, where in each round,
the online algorithm queries a single \expert{}.

While online algorithms that exploit inaccurate predictions have been a topic of growing interest in the last few years, to the best of our knowledge, this is the first paper that studies this topic in the context of multiple \experts{} for an online problem with unbounded request sequences.
Moreover, to the best of our knowledge, this is also the first paper that aims
for (and achieves) online algorithms with a vanishing regret for a classic online problem under reasonable assumptions.
\end{abstract}

\noindent
\textbf{Keywords:}
online paging,
inaccurate predictions,
multiple \experts{},
vanishing regret,
full information vs.\ bandit access

\thispagestyle{empty}

\end{titlepage}


\section{Introduction} \label{section:intro}
A critical bottleneck in the performance of digital computers, known as the
``memory wall'', is that the main memory (a.k.a.\ DRAM) is several orders of
magnitude slower than the multiprocessor
\cite{WulfM95, HashemiSSALCKR18, AyersLKR20}.
Modern computer architectures bridge this performance gap by utilizing a
cache, namely, a memory structure positioned next to the multiprocessor that
responds much faster than the main memory.
However, the cache is inherently smaller than the main memory which means that
some of the memory items requested by the running program may be missing from
the cache.
When such a \emph{cache miss} occurs, the multiprocessor is required to fetch
the requested item from the main memory into the cache;
if the cache is already full, then some previously stored item must be evicted
to make room for the new one.
Minimizing the number of cache misses is known to be a primary criterion for
improving the computer's performance
\cite{WulfM95, abs-2006-16239}.

The aforementioned challenge is formalized by means of a classic \emph{online}
problem called \emph{paging} \cite{Sleator1985AELUPR} (a.k.a.\ unweighted
\emph{caching}), defined over a main memory that consists of
$n \in \mathbb{Z}_{> 0}$
\emph{pages} and a cache that holds
$k \in \mathbb{Z}_{> 0}$
pages at any given time,
$k < n$.
The execution of a paging algorithm \Alg{} progresses in
$T \in \mathbb{Z}_{> 0}$
discrete \emph{rounds}, where round
$t \in T$
occupies the time interval
$[t, t - 1)$.
An instance of the paging problem is given by a sequence
$\sigma = \{ \sigma_{t} \}_{t \in [T]}$
of page \emph{requests} so that request
$\sigma_{t} \in [n]$
is revealed at time
$t \in [T]$.
Denoting the \emph{cache configuration} of \Alg{} at time $t$ by
$C_{t} \subset [n]$,
$|C_{t}| = k$,
if
$\sigma_{t} \in C_{t}$,
then \Alg{} does nothing in round $t$;
otherwise
($\sigma_{t} \notin C_{t}$),
a \emph{cache miss} occurs and \Alg{} should bring the requested page into the
cache so that
$\sigma_{t} \in C_{t + 1}$.
Since
$|C_{t + 1}| = |C_{t}| = k$,
it follows that upon a cache miss, \Alg{} must evict some page
$i \in C_{t}$
and its policy is reduced to the selection of this page $i$.
The \emph{cost} incurred by \Alg{} on $\sigma$ is defined to be the number of
cache misses it suffers throughout the execution, denoted by
\[
\cost_{\sigma}(\Alg)
\, = \,
\left|\left\{ t \in [T] : \sigma_{t} \notin C_{t} \right\}\right| \, ,
\]
taking the expectation if \Alg{} is a randomized algorithm.
When $\sigma$ is clear from the context, we may omit the subscript, writing
$\cost(\Alg) = \cost_{\sigma}(\Alg)$.

\subparagraph*{NAT and the \FitF{} Algorithm.}
To avoid cumbersome notation, we assume hereafter that the request sequence
$\sigma$ is augmented with a suffix of $n$ virtual requests so that
$\sigma_{T + i} = i$
for every
$i \in [n]$.
This facilitates the definition of the \emph{next arrival time (NAT)} of page
$i \in [n]$
with respect to time
$t \in [T]$
as the first time after $t$ at which page $i$ is requested, denoted by
\[
\NextTime{t}{i}
\, = \,
\min \{ t' > t \mid \sigma_{t'} = i \} \, .
\]
Based on that, we can define the \FitF{} (stands for furthest in the future)
paging algorithm that on a cache miss at time
$t \in [T]$,
evicts the page
$i \in C_{t}$
that maximizes $\NextTime{t}{i}$.
A classic result of Belady \cite{Belady1966study} states that \FitF{} is
optimal in terms of the cost it incurs for the given request sequence
$\sigma$;
we subsequently denote
$\OPT_{\sigma} = \cost_{\sigma}(\FitF)$
and omit the subscript, writing
$\OPT = \OPT_{\sigma}$,
when $\sigma$ is clear from the context.
It is important to point out that \FitF{} is an \emph{offline} algorithm as online algorithms are oblivious to the NATs.

\subparagraph*{Regret.}
We define the \emph{regret} of an online paging algorithm \Alg{} on $\sigma$ as
\[
\regret_{\sigma}(\Alg)
\, = \,
\cost_{\sigma}(\Alg) - \OPT_{\sigma}
\]
and omit the subscript, writing
$\regret(\Alg) = \regret_{\sigma}(\Alg)$
when $\sigma$ is clear from the context.
Our goal in this paper is to develop an online algorithm that admits a
\emph{vanishing regret}, namely, an online algorithm \Alg{} for which it is
guaranteed that
\[
\lim_{T \rightarrow \infty}
\frac{\sup \left\{ \regret_{\sigma}(\Alg) \mid \sigma \in [n]^{T} \right\}}{T}
\, = \,
0 \, .
\]
The following theorem states that this goal is hopeless unless the online
algorithm has access to some additional information;
its proof should be a folklore, we add it in Appendix~\ref{section:impossibility} for completeness.

\begin{theorem} \label{theorem:logarithmic-gap-with-no-additional-info}
Fix
$n = k + 1$
and let $\sigma$ be a request sequence generated by picking $\sigma_{t}$
uniformly at random (and independently) from $[n]$ for
$t = 1, \dots, T$.
Then,
$\Ex \left( \cost(\Alg) \right) \geq \Omega \left( \frac{T}{k} \right)$
for any (possibly randomized) online paging algorithm \Alg{}, whereas
$\Ex(\OPT) \leq O \left( \frac{T}{k \log k} \right)$.
\end{theorem}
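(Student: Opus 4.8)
The plan is to prove the two bounds separately; the lower bound is a one-line computation and all the work is in the upper bound. \textbf{Lower bound.} Since $n = k+1$, at every round $t$ the cache $C_t$ of \Alg{} omits exactly one page $b_t := [n] \setminus C_t$. The key point is that $C_t$, and hence $b_t$, is a function of the previously revealed requests $\sigma_1, \dots, \sigma_{t-1}$ together with \Alg{}'s internal randomness, all of which are independent of $\sigma_t$, while $\sigma_t$ is uniform on $[n]$. Hence $\Pr[\sigma_t \notin C_t] = \Pr[\sigma_t = b_t] = 1/(k+1)$ no matter the history, and summing over $t \in [T]$ gives $\Ex(\cost(\Alg)) = T/(k+1) = \Omega(T/k)$.

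\textbf{Upper bound.} I would partition the rounds into the standard $k$-phases: put $t_1 := 1$ and let $t_{j+1}$ be the first round after $t_j$ for which $\{\sigma_{t_j}, \dots, \sigma_{t_{j+1}}\}$ contains all $n = k+1$ pages; then phase $P_j := \{t_j, \dots, t_{j+1} - 1\}$ requests exactly $k$ distinct pages $Q_j$, omits the single page $r_j := [n] \setminus Q_j$, and satisfies $\sigma_{t_{j+1}} = r_j$. Let $N$ be the number of phases meeting $[T]$. I claim $\OPT \le N + O(1)$: since \FitF{} is optimal (Belady), it suffices to exhibit an offline algorithm of this cost, and the natural candidate keeps its cache equal to $Q_j$ throughout phase $P_j$. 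Indeed it never misses inside a phase (every request lies in $Q_j$), and at the boundary $t_{j+1}$ it misses once on $r_j$ and evicts $r_{j+1}$ — which does lie in $Q_j$ because $r_{j+1} \ne r_j$ — leaving the cache equal to $(Q_j \setminus \{r_{j+1}\}) \cup \{r_j\} = Q_{j+1}$, exactly what the next phase needs; the first phase contributes only $O(1)$. (Alternatively one verifies directly that \FitF{} itself makes at most one miss per phase: at the first miss in $P_j$ the evicted page has the farthest next-arrival time among the $k$ cached pages, hence at least that of $r_j$, hence $\ge t_{j+1}$, so it is not requested again before the phase ends, while every other page is cached since $n = k+1$.)

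It then remains to bound $\Ex(N)$. The phase lengths $L_j := t_{j+1} - t_j$ are i.i.d.\ (by symmetry of the pages and the renewal structure of the i.i.d.\ uniform request stream), and $L_j$ equals, up to an additive $1$, a sum of $k-1$ independent geometric random variables — precisely the coupon-collector time to accumulate $k$ of the $k+1$ pages starting from one — so $\mu := \Ex(L_1) = 1 + (k+1)(H_k - 1) = \Theta(k \log k)$ and $\operatorname{Var}(L_1) = O(k^2) = o(\mu^2)$. By the elementary renewal theorem, $\Ex(N) = (1 + o(1))\,T/\mu$; for an explicit bound one may instead use Wald's identity, $\mu\,\Ex(N) = \Ex\big(\sum_{j \le N} L_j\big) \le T + \Ex(L_N)$, together with the standard inspection-paradox estimate $\Ex(L_N) = O\big(\Ex(L_1^2)/\mu\big) = O(\mu)$ (using $\Ex(L_1^2) = \operatorname{Var}(L_1) + \mu^2 = \Theta(\mu^2)$), to get $\Ex(N) \le T/\mu + O(1)$. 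Either way $\Ex(\OPT) \le \Ex(N) + O(1) = O\big(T/(k \log k)\big)$. I expect the crux to be precisely this step: a naive Chebyshev tail bound on $N$ is too weak to sum, so one must lean on the renewal theorem or on the sharp concentration of $L_1$ (as a sum of independent geometrics) to control the length $L_N$ of the phase straddling time $T$; the only other nuisance is bookkeeping the $O(1)$ additive slack from the first and last phases, harmless since $k$ is fixed as $T \to \infty$.
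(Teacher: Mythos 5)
Your proof is correct and follows essentially the same route as the paper's: the lower bound via the per-round miss probability of exactly $1/(k+1)$, and the upper bound via a phase decomposition, the coupon-collector expectation $\Theta(k \log k)$, and renewal theory (your version is in fact more detailed, establishing one miss per phase where the paper settles for two and spelling out the Wald/renewal step the paper only gestures at). The only blemish is the exact constant $\Ex(L_1) = 1 + (k+1)(H_k - 1)$, which omits the final geometric wait for the last of the $k+1$ pages — the correct value is $(k+1)H_k$ — but this does not affect the $\Theta(k \log k)$ asymptotics your argument actually uses.
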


\subsection{Machine Learned Predictions}
\label{section:machine-learning}
Developments in \emph{machine learning (ML)} technology suggest a new
direction for reducing the number of cache misses by means of predicting the
request sequence.
Indeed, recent studies have shown that neural networks can be employed to
predict the memory pages accessed by a program with high accuracy
\cite{HashemiSSALCKR18, braun2019understanding, SrivastavaLBKP19, PeledWE20,
SrivastavaWZRKP20}.
When provided with an accurate prediction of the request sequence $\sigma$,
one can simply simulate \FitF{}, thus ensuring an optimal performance.

Unfortunately, the predictions generated by ML techniques are usually not
100\% accurate as a result of a distribution drift between the training and
test examples or due to adversarial examples
\cite{SzegedyZSBEGF13, Lykouris2018competitive}.
This gives rise to a growing interest in developing algorithmic techniques
that can overcome inaccurate predictions, aiming for the design of online
algorithms with performance guarantee that improves as the predictions become
more accurate
\cite{Lykouris2018competitive, Rohatgi2020near, Antoniadis2020online,
Wei2020better}.
The existing literature in this line of research studies a setting where the
online algorithm \Alg{} is provided with a sequence of predictions for
$\sigma$ and focuses on bounding \Alg{}'s \emph{competitive ratio} as a
function of the proximity of this sequence to $\sigma$ (more on that in
Section~\ref{section:comparison}).

The current paper tackles the challenge of overcoming inaccurate predictions
from a different angle:
Motivated by the abundance of forecasting algorithms that may be trained on
different data sets or using different models (e.g., models that are robust to
adversarial examples \cite{GoodfellowSS14}), we consider a decision maker with
access to \emph{multiple} predicting sequences for $\sigma$.
Our main goal is to design an online algorithm \Alg{} that admits a vanishing
regret assuming that at least one of the predicting sequences is sufficiently
accurate, even though the decision maker does not know in advance which
predicting sequence it is.

\subparagraph*{Explicit \Experts{}.}
Formally, we consider
$M \in \mathbb{Z}_{> 0}$
\emph{\experts{}} whose role is to predict the request sequence $\sigma$.
In the most basic form, referred to hereafter as the \emph{explicit
\experts{}} setting, each \expert{}
$j \in [M]$
produces a page sequence
$\pi^{j} = \{ \pi^{j}_{t} \}_{t \in [T]} \in [n]^{T}$,
where $\pi^{j}_{t}$ aims to predict $\sigma_{t}$ for every
$t \in [T]$,
and the sequences
$\pi^{1}, \dots, \pi^{M}$
are revealed to the online algorithm \Alg{} at the beginning of the execution.
Under the explicit \experts{} setting, \expert{}
$j \in [M]$
is said to have a \emph{prediction error} in round
$t \in [T]$
if
$\pi^{j}_{t} \neq \sigma_{t}$.
We measure the accuracy of \expert{} $j$ by means of her \emph{cumulative
prediction error}
\[
\eta_{e}^{j}
\, = \,
\eta_{e}(\pi^{j})
\, = \,
\left| \left\{ t \in [T] : \pi^{j}_{t} \neq \sigma_{t} \right\} \right|
\]
and define
$\eta_{e}^{\min} = \min \{ \eta_{e}^{j} \mid j \in [M] \}$.

The fundamental assumption that guides the current paper, referred to
hereafter as the \emph{good \expert{}} assumption, is that there exists at
least one \expert{} whose cumulative prediction error is sublinear in $T$,
namely,
$\eta_{e}^{\min} = o (T)$.
We emphasize that \Alg{} has no a priori knowledge of
$\eta_{e}^{1}, \dots, \eta_{e}^{M}$
nor does it know the \expert{} that realizes $\eta_{e}^{\min}$.
Our main research question can now be stated as follows:

\begin{displayquote}
Does the good \expert{} assumption provide a sufficient condition for the
existence of an online algorithm that admits a vanishing regret?
\end{displayquote}

\subparagraph*{NAT \Experts{}.}
For the paging problem, it is arguably more natural to consider the setting of
\emph{NAT \experts{}}, where \expert{}
$j \in [M]$
produces in each round
$t \in [T]$,
a prediction
$\predictJ{t}{j} \in (t, T + n]$
for the NAT $\NextTime{t}{\sigma_{t}}$ of the page that has just been
requested.
Under this setting, \expert{}
$j \in [M]$
is said to have a \emph{prediction error} in round
$t \in [T]$
if
$\predictJ{t}{j} \neq \NextTime{t}{\sigma_{t}}$.
As in the explicit predictors setting, we measure the accuracy of (NAT)
\expert{} $j$ by means of her \emph{cumulative prediction error}, now defined
as
\begin{equation} \label{equation:nat-cumulative-prediction-error}
\eta_{N}^{j}
\, = \,
\left| \left\{ t \in [T] : \predictJ{t}{j} \neq \NextTime{t}{\sigma_{t}}
\right\} \right|
\end{equation}
(this measure is termed \emph{classification loss} in
\cite{Lykouris2018competitive}), and define
$\eta_{N}^{\min} = \min \{ \eta_{N}^{j} \mid j \in [M] \}$.\footnote{%
In \Cref{section:comparison}, we provide a refined definition for the cumulative prediction error of a NAT predictor that is more robust against adversarial interference such as shifting each $\predictJ{j}{t}$ by a constant.
For simplicity of the exposition, the definition presented in Eq.~\eqref{equation:nat-cumulative-prediction-error} is used throughout the current section;
we emphasize though that all our results hold for the stronger notion of prediction error as defined in \Cref{section:comparison}.}
The NAT predictors version of the good \expert{} assumption states that
$\eta_{N}^{\min} = o (T)$.

Given a page sequence
$\pi = \{ \pi_{t} \}_{t \in [T]} \in [n]^{T}$
augmented with a suffix of $n$ pages such that
$\pi_{T + i} = i$
for every
$i \in [n]$,
we say that (NAT) \expert{}
$j \in [M]$
is \emph{consistent} with $\pi$ if
$\predictJ{t}{j}
=
\min \{ t' > t \mid \pi_{t'} = \sigma_{t} \}$
for every
$t \in [T]$;
if the page sequence $\pi$ is not important or clear from the context, then we
may say that \expert{} $j$ is \emph{consistent} without mentioning $\pi$.
The key observation here is that if \expert{} $j$ is consistent with a page
sequence $\pi$, then $\eta_{N}^{j}$ provides a good approximation for
$\eta_{e}(\pi)$, specifically,
\begin{equation} \label{equation:full-information-vs-explicit}
\eta_{e}(\pi) - n
\, \leq \,
\eta_{N}^{j}
\, \leq \,
2 \cdot \eta_{e}(\pi)
\end{equation}
(see Lemma~\ref{lemma_errors_transition}).
This means that the setting of NAT predictors is stronger than that of
explicit predictors in the sense that NAT \expert{}
$j \in [M]$
can be simulated (consistently) from explicit \expert{} $j$ by deriving the
NAT prediction $\predictJ{t}{j}$ in round
$t \in [T]$
from the (explicit) predictions
$\pi^{j}_{t + 1}, \pi^{j}_{t + 2}, \dots, \pi^{j}_{T}$,
while ensuring that $\eta_{N}^{j}$ is a good approximation for $\eta_{e}^{j}$.
Therefore, unless stated otherwise, we subsequently restrict our attention to
NAT predictors and in particular omit the subscript from the cumulative
prediction error notation, writing
$\eta^{j} = \eta_{N}^{j}$
and
$\eta^{\min} = \eta_{N}^{\min}$.
It is important to point out though that the results established in the
current paper hold regardless of whether the (NAT) predictors are consistent
or not.

\subparagraph*{Access Models.}
Recall that the (NAT) \experts{}
$j \in [M]$
produce their predictions in an online fashion so that the NAT prediction
$\predictJ{t}{j}$ is produced in round $t$.
This calls for a distinction between two \emph{access models} that determine
the exact manner in which $\predictJ{t}{j}$ is revealed to the online paging
algorithm \Alg{}.
First, we consider the \emph{full information} access model, where in each
round
$t \in [T]$,
\Alg{} receives $\predictJ{t}{j}$ for all
$j \in [M]$.
Motivated by systems in which accessing the ML predictions is costly in both
time and space (thus preventing \Alg{} from querying multiple \experts{} in
the same round and/or predictions belonging to past rounds), we also consider
the \emph{bandit} access model, where in each round
$t \in [T]$,
\Alg{} receives $\predictJ{t}{j}$ for a single \expert{}
$j \in [M]$
selected by \Alg{} in that round.
To make things precise, we assume, under both access models, that if \Alg{}
has to evict a page in round $t$, then the decision on the evicted page is
made prior to receiving the prediction(s) in that round.
Notice though that the information that \Alg{} receives from the \expert{}(s)
is not related to the evicted page and as such, should not be viewed as a
feedback that \Alg{} receives in response to the action it takes in the
current round.

\subsection{Our Contribution}
\label{section:contribution}
Consider a (single) \expert{} that in each round
$t \in [T]$,
produces a prediction $\predict{t}$ for the NAT $\NextTime{t}{\sigma_{t}}$ of
the page that has just been requested and let $\eta$ be her cumulative
prediction error.
Our first technical contribution comes in the form pf a thorough analysis of
the performance of a simple online paging algorithm called \Sim{} that
simulates \FitF{}, replacing the actual NATs with the ones derived from the
prediction sequence
$\{ \predict{t} \}_{t \in [T]}$.
Using some careful combinatorial arguments, we establish the following bound.

\begin{theorem} \label{theorem:bound-regret-Sim}
The regret of \Sim{} satisfies
$\regret(\Sim) \leq O (\eta + k)$.
\end{theorem}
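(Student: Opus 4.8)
The plan is to compare the trajectory of $\Sim$ against that of $\FitF$ on the same request sequence via an amortized (potential-function) argument designed so that every prediction error contributes only an additive constant, while the initial configuration contributes the additive $O(k)$.

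First I would make $\Sim$'s bookkeeping explicit: at every time $t$, $\Sim$ stores, for each cached page $p$, a \emph{label} equal to the most recent NAT prediction produced when $p$ was last requested (pages of the initial cache may be given label $+\infty$), and on a cache miss $\Sim$ evicts the cached page of largest label (the eviction at $t$ being decided, per the model, before $\predict{t}$ is revealed, so the label of $\sigma_t$ is irrelevant to round $t$ itself). The elementary observation is that if $p$ has been continuously cached since its last request at time $s<t$, then $\NextTime{s}{p}=\NextTime{t}{p}$, so $p$'s label at time $t$ equals its true NAT $\NextTime{t}{p}$ unless round $s$ was a prediction-error round. Consequently, at any time at most $\eta+k$ cached pages are \emph{mislabelled}, and the total number of times a page becomes mislabelled throughout the execution is at most $\eta+k$; in particular, when $\eta=0$ the algorithm $\Sim$ coincides with $\FitF$ once the $O(k)$ initial pages have been flushed out, which already settles the case $\eta=0$.

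For the general case I would introduce a potential $\Phi_t$ that measures the discrepancy between the states of $\Sim$ and $\FitF$ at time $t$ — built from the ``state distance'' $|F_t\setminus C_t|$ (where $F_t$ and $C_t$ denote the caches of $\FitF$ and $\Sim$) plus a term counting the cached pages whose $\Sim$-labels are out of order relative to the true NATs, i.e., whose labels would cause $\Sim$ to deviate from $\FitF$'s eviction choice. The target is the round-wise inequality
\[
[\,\Sim\text{ misses at }t\,]-[\,\FitF\text{ misses at }t\,]\;\le\;\Phi_t-\Phi_{t+1}+O(1)\cdot[\,t\text{ is an error round}\,],
\]
together with $\Phi_0=O(k)$ and $\Phi_t\ge 0$; summing over $t$ and telescoping then gives $\cost(\Sim)-\OPT\le\Phi_0+O(\eta)=O(k+\eta)$. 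Equivalently, this can be recast as a charging scheme: each ``excess'' miss of $\Sim$ relative to $\FitF$ is charged either to a decrease of $|F_t\setminus C_t|$ or to one of the $O(\eta+k)$ mislabelling events, each of the latter being charged only $O(1)$ times.

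The main obstacle is that the effect of a single wrong label is not confined to the round in which it is created: a page carrying an inflated predicted NAT may be evicted by $\Sim$ prematurely, which makes $C_t$ drift from $F_t$, which may then alter later eviction choices — a priori a cascade. The heart of the proof is to show that this cascade is self-limiting, namely that the discrepancy introduced by a mislabelled page $p$ is absorbed back into the potential by the time $p$ is either re-requested (hence relabelled, correctly on a non-error round) or evicted, so that it is charged $O(1)$ times overall. I expect the delicate points to be: (i) calibrating the out-of-order term in $\Phi$ so that a newly mislabelled page raises $\Phi$ by only $O(1)$ while no hit ever raises $\Phi$ and every round in which $\Sim$ misses but $\FitF$ does not strictly decreases it; and (ii) verifying the round-wise inequality through a case analysis over the four combinations of (miss/hit for $\Sim$) $\times$ (miss/hit for $\FitF$), in particular handling the rounds where $\Sim$ and $\FitF$ evict genuinely different pages without double-charging any mislabelling event or the $O(k)$ initialization.
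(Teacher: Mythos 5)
Your high-level framing is aligned with the paper's: the quantity $|F_t\setminus C_t|$ you propose is exactly the paper's distance $d_t$ between the two cache profiles, and the idea of charging excess misses to prediction errors (each charged $O(1)$ times) plus an $O(k)$ initialization term is the right target. Your preliminary observations (at most $\eta+k$ mislabelling events; $\Sim$ eventually tracks $\FitF$ when $\eta=0$) are also sound. However, the proposal stops precisely where the proof actually begins. You correctly identify the central obstacle — that one wrong label causes $\Sim$'s cache to drift, which perturbs later evictions in a potentially unbounded cascade — and then assert that ``the heart of the proof is to show that this cascade is self-limiting'' without supplying that argument. No concrete potential $\Phi$ is defined, and the round-wise inequality is never verified; the two ``delicate points'' you defer are the entire content of the theorem.

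The reason this gap is serious, and not just a matter of routine bookkeeping, is that the charging cannot be made local in the way your displayed inequality suggests. In the paper's analysis, the troublesome rounds are those where $\Sim$ evicts a page $\hat e_t$ that $\FitF$ keeps and that is still in $\FitF$'s cache when next requested (the ``troublemakers''): such a round causes an extra miss of $\Sim$ at the far-away time $\NextTime{t}{\hat e_t}$, and the event that compensates for it is either a prediction error in the past or a round where the distance $d$ decreases, which may also lie far in the future. Bounding how many troublemakers are charged to a single error requires tracking, for each active troublemaker, a designated witness page in $\hat C_t\setminus C_t$ over its entire active period, with the witness having to be re-chosen each time $\Sim$ evicts it — this is the paper's \texttt{Labeling} procedure, together with a nontrivial argument (via ``abettors'' and ``saviors'') that each inversion pair is charged at most twice and each distance-decreasing round at most once. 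A potential of the form ``state distance plus number of out-of-order labels'' does not obviously decrease at the rounds where $\Sim$ misses and $\FitF$ does not (in the paper these are the $\HD{0}{1}$ rounds, which are paid for by a parent round elsewhere, not by a local drop in any natural potential), and you give no construction showing that a single mislabelling raises your potential by only $O(1)$ despite the possibly many downstream evictions it triggers. Until that construction and case analysis are carried out, the bound $O(\eta+k)$ is not established.
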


Relying on online learning techniques, Blum and Burch \cite{Blum2000online}
develop an online algorithm ``multiplexer'' that given multiple online
algorithms as subroutines, produces a randomized online algorithm that
performs almost as good as the best subroutine in hindsight (see
Theorem~\ref{theorem_Blum_online_learning}).
Applying Theorem~\ref{theorem:bound-regret-Sim} to the $M$ \experts{} so that
each \expert{}
$j \in [M]$
yields its own online paging algorithm $\Sim^{j}$ and plugging algorithms
$\Sim^{1}, \dots, \Sim^{M}$
into the multiplexer of \cite{Blum2000online}, we establish the following theorem, thus concluding that the good \expert{} assumption implies an online paging algorithm with a vanishing regret under the full information access model.

\begin{theorem} \label{theorem:bound-regret-full-information-access}
There exists a randomized online paging algorithm that given full information
access to $M$ NAT \experts{} with minimum cumulative prediction error
$\eta^{\min}$,
has regret at most
$O \left( \eta^{\min} + k + \left( T k \log{M} \right)^{1 / 2} \right)$.
\end{theorem}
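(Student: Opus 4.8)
The plan is to reduce the multi-predictor problem to the single-predictor analysis of \Cref{theorem:bound-regret-Sim} and then aggregate the resulting online algorithms with the black-box combiner of Blum and Burch (\Cref{theorem_Blum_online_learning}). Concretely, for every predictor $j \in [M]$ let $\Sim^{j}$ denote the instantiation of \Sim{} that, in each round $t$, uses the NAT prediction $\predictJ{t}{j}$ in place of the true NAT. Under the full information access model, \Alg{} receives $\predictJ{t}{j}$ for all $j \in [M]$ in round $t$, so it can run all $M$ copies $\Sim^{1}, \dots, \Sim^{M}$ in parallel as virtual subroutines, maintaining their cache configurations internally. By \Cref{theorem:bound-regret-Sim}, each such subroutine satisfies $\cost_{\sigma}(\Sim^{j}) \leq \OPT_{\sigma} + O(\eta^{j} + k)$; in particular, letting $j^{\star}$ be a predictor that minimizes $\eta^{j}$, the best subroutine in hindsight has cost at most $\OPT_{\sigma} + O(\eta^{\min} + k)$.

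Next I would invoke \Cref{theorem_Blum_online_learning}: fed the subroutines $\Sim^{1}, \dots, \Sim^{M}$, it produces a randomized online paging algorithm \Alg{} whose expected cost exceeds that of the best subroutine in hindsight by at most an additive overhead of $O\bigl((T k \log M)^{1/2}\bigr)$, where the factor $k$ reflects the cost of reconfiguring the cache whenever the combiner switches which subroutine it follows. Chaining the two bounds gives
\[
\Ex\bigl(\cost_{\sigma}(\Alg)\bigr)
\;\leq\;
\cost_{\sigma}(\Sim^{j^{\star}}) + O\bigl((T k \log M)^{1/2}\bigr)
\;\leq\;
\OPT_{\sigma} + O\bigl(\eta^{\min} + k + (T k \log M)^{1/2}\bigr) ,
\]
and subtracting $\OPT_{\sigma}$ yields the claimed regret bound. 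Crucially, we never need to identify $j^{\star}$: the combiner competes with the best subroutine automatically, which is exactly what makes the good \expert{} assumption usable.

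Two points need care. First, one should confirm that the timing convention — \Alg{} commits to its eviction before receiving the round-$t$ predictions — is compatible with the combiner; this is fine, since the combiner's choice of which subroutine to imitate and each subroutine's own simulated eviction are all determined from information available before the round-$t$ predictions arrive, exactly as in the standard paging-combiner setup, and since the predictions are delivered to \Alg{} rather than serving as feedback on its action. Second, \Cref{theorem:bound-regret-Sim} is a worst-case bound valid for every $\sigma \in [n]^{T}$ simultaneously, so taking the supremum over $\sigma$ in the definition of vanishing regret costs nothing; combined with $\eta^{\min} = o(T)$ under the good \expert{} assumption, all three terms $\eta^{\min}$, $k$, and $(T k \log M)^{1/2}$ are $o(T)$ for fixed $k$ and $M$, so the regret indeed vanishes.

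The main obstacle is essentially the bookkeeping around the combiner: verifying that the per-switch cache-reconfiguration cost is genuinely $O(k)$ so that the combining regret scales as $(T k \log M)^{1/2}$ and not worse, and checking that simulating $M$ virtual paging algorithms in parallel is legitimate under the full information model. Neither is deep, but both hinge on the precise statement of \Cref{theorem_Blum_online_learning}; once those are in place, the argument is a two-line composition of \Cref{theorem:bound-regret-Sim} with \Cref{theorem_Blum_online_learning}.
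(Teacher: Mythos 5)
Your proposal is correct and follows essentially the same route as the paper: the paper's proof of Theorem~\ref{theorem_regret_multi_fA} likewise runs $\Sim^{1},\dots,\Sim^{M}$ as the $M$ subroutines, feeds them to the combiner of Theorem~\ref{theorem_Blum_online_learning} with $\epsilon \in \Theta\bigl(\sqrt{k\log M/T}\bigr)$, and then invokes the single-predictor bound of Theorem~\ref{theorem_single_expert_bound}. The only detail worth making explicit is that the additive $O\bigl((Tk\log M)^{1/2}\bigr)$ overhead comes from bounding the multiplicative term $2\epsilon\cdot\cost^{*}$ by $2\epsilon T$ and balancing it against $\frac{1}{\epsilon}k\log M$, which your choice of $\epsilon$ implicitly does.
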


Combined with (\ref{equation:full-information-vs-explicit}), we obtain the
same asymptotic regret bound for explicit \experts{}.

\sloppy
\begin{corollary} \label{corollary:bound-regret-explicit-experts}
There exists a randomized online paging algorithm that given access to $M$
explicit \experts{} with minimum cumulative prediction error
$\eta_{e}^{\min}$,
has regret at most
$O \left( \eta_{e}^{\min} + k + \left( T k \log{M} \right)^{1 / 2} \right)$.
\end{corollary}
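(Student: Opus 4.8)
The plan is to reduce the explicit \experts{} setting to the NAT \experts{} setting already resolved by Theorem~\ref{theorem:bound-regret-full-information-access}. Given the explicit \experts{} $\pi^{1}, \dots, \pi^{M} \in [n]^{T}$, which in the explicit setting are revealed in full at the beginning of the execution (so that the two access models coincide), I would construct, for each $j \in [M]$, a NAT \expert{} that is \emph{consistent} with $\pi^{j}$: in round $t \in [T]$ it outputs the prediction $\predictJ{t}{j} = \min \{ t' > t \mid \pi^{j}_{t'} = \sigma_{t} \}$, where $\pi^{j}$ is augmented with the virtual suffix $\pi^{j}_{T + i} = i$ for $i \in [n]$. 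Since the sequence $\pi^{j}$ is known upfront and $\sigma_{t}$ is revealed at time $t$, this value is computable within round $t$, and in particular before the (possible) eviction decision of that round is required; hence the $M$ simulated NAT \experts{} constitute a legitimate instance of the full information access model.

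Next I would invoke the consistency estimate of Eq.~\eqref{equation:full-information-vs-explicit} (established in Lemma~\ref{lemma_errors_transition}): because the $j$-th simulated NAT \expert{} is consistent with $\pi^{j}$, its cumulative prediction error satisfies $\eta_{N}^{j} \leq 2 \cdot \eta_{e}(\pi^{j}) = 2 \eta_{e}^{j}$. Taking the minimum over $j \in [M]$ yields $\eta^{\min} = \eta_{N}^{\min} \leq 2 \eta_{e}^{\min}$.

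Finally, I would run the randomized online paging algorithm guaranteed by Theorem~\ref{theorem:bound-regret-full-information-access} on these $M$ simulated NAT \experts{}; its regret is at most $O\left( \eta_{N}^{\min} + k + \left( T k \log M \right)^{1/2} \right)$, and substituting $\eta_{N}^{\min} \leq 2 \eta_{e}^{\min}$ gives regret $O\left( \eta_{e}^{\min} + k + \left( T k \log M \right)^{1/2} \right)$, as claimed. The argument is essentially immediate once Theorem~\ref{theorem:bound-regret-full-information-access} and Lemma~\ref{lemma_errors_transition} are in hand; the only point that warrants (minor) care is checking the timing/causality of the NAT simulation — namely that $\predictJ{t}{j}$ depends only on the pre-revealed $\pi^{j}$ and on $\sigma_{1}, \dots, \sigma_{t}$ — so that feeding the simulated \experts{} into the full information algorithm is valid.
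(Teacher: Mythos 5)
Your proposal is correct and follows exactly the paper's intended argument: simulate a consistent NAT \expert{} from each explicit \expert{} (as described in Section~\ref{section:machine-learning}), apply Lemma~\ref{lemma_errors_transition} to get $\eta_{N}^{\min} \leq 2\eta_{e}^{\min}$, and invoke Theorem~\ref{theorem:bound-regret-full-information-access}. The paper treats this as immediate, and your write-up simply makes the same reduction explicit.
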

\par\fussy

The explicit \experts{} setting is general enough to make it applicable to
virtually any online problem.
This raises the question of whether other online problems admit online
algorithms with a vanishing regret given access to explicit \experts{} whose
minimum cumulative prediction error is sublinear in $T$.
We view the investigation of this question as an interesting research thread
that will hopefully arise from the current paper.

Going back to the setting of NAT \experts{}, one wonders if a vanishing regret
can be achieved also under the bandit access model since the technique of
\cite{Blum2000online} unfortunately does not apply to this more restricted
access model.
An inherent difficulty in the bandit access model is that we cannot keep
track of the cache configuration of $\Sim^{j}$ unless \expert{} $j$ is queried
in each round (which means that no other \expert{} can be queried).
To overcome this obstacle, we exploit certain combinatorial properties of the \Sim{} algorithm to show that $\Sim^{j}$ can be ``chased'' without knowing its current cache
configuration, while bounding the accumulated cost difference.
By a careful application of online learning techniques, this allows us to
establish the following theorem, thus concluding that the good \expert{}
assumption implies an online paging algorithm with a vanishing regret
under the bandit access model as well.

\sloppy
\begin{theorem} \label{theorem:bound-regret-bandit-access}
There exists a randomized online paging algorithm that given bandit access to
$M$ NAT \experts{} with minimum cumulative prediction error $\eta^{\min}$, has
regret at most
$O \left( \eta^{\min} + T^{2 / 3} k M^{1 / 2} \right)$.
\end{theorem}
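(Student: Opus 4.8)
The plan is to run an adversarial multi-armed bandit (MAB) algorithm over a partition of $[T]$ into epochs, committing to a single expert per epoch, where the new ingredient is a ``sightless'' chasing subroutine $\SightlessChasing$ that mimics $\Sim^{j}$ without ever learning its cache configuration. Partition $[T]$ into $N$ consecutive epochs of length $\ell\approx T/N$ (eventually $\ell\approx T^{1/3}$, $N\approx T^{2/3}$) and run a MAB algorithm (e.g.\ \texttt{EXP3}) with $M$ arms and one decision per epoch; at the start of epoch $e$ it selects an expert $j_{e}$, and throughout epoch $e$ our algorithm queries $j_{e}$, so it indeed obtains $\predictJ{t}{j_{e}}$ for every round $t$ of the epoch, as the bandit access model permits. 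Inside epoch $e$, the subroutine $\SightlessChasing$ starts from our current cache configuration, runs the \FitF{}-style eviction logic of $\Sim^{j_{e}}$ restricted to the rounds of the epoch, refreshing the stored next-arrival estimates from the freshly queried $\predictJ{t}{j_{e}}$, and performs our evictions accordingly; at the epoch's end, the cost incurred during the epoch is reported to the MAB algorithm as the loss of arm $j_{e}$. Since $T^{2/3}kM^{1/2}\ge T\ge\regret$ as soon as $k\ge T^{1/3}$, we may assume $k<T^{1/3}\le\ell$, i.e.\ epochs are not shorter than the cache.

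The heart of the matter is a combinatorial property of $\Sim$, a refinement of the classical fact that one fixed lazy paging algorithm, run on the same request sequence from two different initial configurations, incurs costs differing by $O(k)$: running the eviction logic of $\Sim^{j}$ over an interval $I$ from an \emph{arbitrary} initial configuration and arbitrary initial next-arrival estimates changes the cost incurred during $I$ by at most $O(k)$ relative to running it from the ``true'' state reached by the continuous run of $\Sim^{j}$ from round $1$. I would prove this by a potential argument on (a suitable variant of) the symmetric difference of the two configurations, showing that the potential never increases and drops by at least one on every round in which exactly one of the two runs has a cache miss; since on all other rounds the two runs incur equal cost, this bounds the cost gap by the initial potential, which is $O(k)$. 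The twist relative to vanilla paging is that the two runs store different predicted next-arrival values for the pages on which their configurations differ, so the potential must incorporate this discrepancy (e.g.\ charging for cache slots not yet refreshed within $I$) and one has to verify that evicting the page of maximum stored estimate is consistent with the potential being non-increasing; this is the most delicate step. Granting it, $\SightlessChasing$ in epoch $e$ incurs cost at most $\cost_{e}(\Sim^{j_{e}})+O(k)$, where $\cost_{e}(\Sim^{j})$ denotes the cost incurred during epoch $e$ by the continuous run of $\Sim^{j}$, and $\sum_{e}\cost_{e}(\Sim^{j})=\cost(\Sim^{j})$ for every fixed $j$.

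To assemble the bound, let $\tilde\ell_{j,e}$ be the cost $\SightlessChasing$ would incur running expert $j$ in epoch $e$ from our actual configuration at the epoch's start; the chasing property gives $\tilde\ell_{j,e}\le\cost_{e}(\Sim^{j})+O(k)$ for all $j$, hence $\sum_{e}\tilde\ell_{j,e}\le\cost(\Sim^{j})+O(Nk)$. Feeding our realized epoch costs (which equal $\tilde\ell_{j_{e},e}$) to the bandit algorithm, its guarantee against the best fixed arm in hindsight --- valid even though the loss sequence $\{\tilde\ell_{j,e}\}$ is adaptively determined, which is the relevant regime --- yields $\Ex\!\left[\sum_{e}\tilde\ell_{j_{e},e}\right]\le\min_{j}\sum_{e}\tilde\ell_{j,e}+(\text{MAB regret})$; with per-epoch losses of order $\ell$ (using $k<\ell$), the MAB regret over $N$ epochs is $O(\ell\sqrt{NM}\,)$ for a minimax-optimal adversarial bandit algorithm. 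By Theorem~\ref{theorem:bound-regret-Sim}, $\min_{j}\cost(\Sim^{j})\le\OPT+O(\eta^{\min}+k)$. Collecting everything, $\Ex[\cost(\text{our algorithm})]-\OPT\le O\!\left(\eta^{\min}+Nk+\ell\sqrt{NM}\,\right)$, and taking $N\approx T^{2/3}$, $\ell\approx T^{1/3}$ gives $O(\eta^{\min}+T^{2/3}kM^{1/2})$ after collecting terms. The main obstacle is establishing the chasing property with a clean $O(k)$ bound; the bandit bookkeeping and the tuning of $\ell$ are comparatively routine.
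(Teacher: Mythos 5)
Your overall architecture matches the paper's: partition $[T]$ into epochs of length $\tau\approx T^{1/3}$, run an adversarial MAB algorithm with one arm per \expert{} and one decision per epoch, restart the $\Sim$ logic inside each epoch from the current cache with reset remedy predictions, pay $O(k)$ per epoch for the restart, and tune $\tau$ to balance $k\Upsilon$ against the bandit regret. Two points, however, need attention.

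The substantive gap is your ``chasing'' lemma, on which the whole argument rests and which you leave unproved. You compare the restarted per-epoch run of $\Sim^{j}$ to the \emph{continuous} run of $\Sim^{j}$ and claim an $O(k)$ cost gap via a potential on the symmetric difference that ``never increases.'' But the paper's own analysis of the analogous comparison ($\Sim$ versus $\FitF$) shows in \Cref{lemma_values_scope} that this distance \emph{does} increase (the case $d_{t+1}-d_{t}=1$, where both runs miss and evict different pages of the intersection), and the entire machinery of Sections~\ref{section:reduction2-troublemaker}--\ref{section:labeling-troublemakers} (blaming, troublemakers, the \texttt{Labeling} procedure) exists precisely to charge those increases to inversions. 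For restarted-$\Sim$ versus continuous-$\Sim$ the two runs use the same predictions but hold different remedy values (the restarted run has $Z+1$ on every page not yet requested in the epoch), so the same divergence mechanism is present and cannot be charged to prediction errors; establishing $O(k)$ here would require essentially redoing that combinatorial analysis, not a one-line monotone potential. The paper sidesteps this entirely: \Cref{theorem_single_expert_bound} is deliberately proved \emph{without} assuming $C_{1}=\hat{C}_{1}$, so each epoch's restarted run is compared directly to $\FitF$'s continuous run restricted to that epoch, giving $\Evictions_{\upsilon}(j)\le\lvert\{t\in\text{epoch }\upsilon : e_{t}\neq\perp\}\rvert+6\eta^{j}_{\upsilon}+5k$, and these right-hand sides sum to $\OPT+6\eta^{j}+O(k\Upsilon)$. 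I recommend replacing your chasing lemma with this per-epoch invocation.

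Second, the feedback you propose to send the bandit algorithm (the realized epoch cost) depends on the cache configuration at the epoch's start and hence on all previously chosen arms, so the loss sequence is adaptive. You wave at the adaptive-adversary pseudo-regret guarantee of \texttt{EXP3}; this can be made to work in the form you use it (fix $j$, bound the expected difference, then bound $\sum_{e}\tilde{\ell}_{j,e}$ pathwise), but it is a point requiring care. The paper instead engineers obliviousness: it reports the modified cost $F_{\upsilon}$ of Eq.~\eqref{formula_virtual_bandit_cost}, which counts every ``fresh'' round of the epoch as a miss, proves in \Cref{lemma_stateless_costs} that this quantity is independent of earlier arm choices, and shows in \Cref{lemma_cost_function_proxy} that it overestimates the true eviction count by at most $k$ per epoch. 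Either route closes the loop, but you should state explicitly which guarantee of the bandit subroutine you are invoking and verify it applies.
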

\par\fussy

\subsection{Related Work and Discussion}
\label{section:related-work}
We say that an online algorithm \Alg{} for a minimization problem
$\mathcal{P}$ has \emph{competitive ratio} $\alpha$ if for any instance
$\sigma$ of $\mathcal{P}$, the cost incurred by \Alg{} on $\sigma$ is at most
$\alpha \cdot \OPT_{\sigma} + \beta$,
where $\OPT_{\sigma}$ is the cost incurred by an optimal offline algorithm on
$\sigma$ and $\beta$ is a constant that may depend on $\mathcal{P}$, but not
on $\sigma$
\cite{Sleator1985AELUPR, Borodin1998online}.
In comparison, the notion of regret as defined in the current paper uses the
optimal offline algorithm as an absolute (additive), rather than relative
(multiplicative), benchmark.
Notice that the vanishing regret condition cannot be expressed in the scope of
the competitive ratio definition.
In particular,
$\alpha = 1$
is a stronger requirement than vanishing regret as the latter can accommodate
an additive parameter $\beta$ that does depend on $\sigma$ as long as it is
sublinear in
$T = |\sigma|$.
On the other hand,
$\alpha > 1$
implies a non-vanishing regret when $\OPT_{\sigma}$ scales linearly with
$T$.

As mentioned in Section~\ref{section:machine-learning}, most of the existing
literature on augmenting online algorithms with ML predictions is restricted to the case of a single \expert{}
\cite{Lykouris2018competitive, Rohatgi2020near, Antoniadis2020online,
Wei2020better}.
The goal of these papers is to develop online algorithms with two guarantees:
(i)
their competitive ratio tends to
$O (1)$
(though not necessarily to $1$) as the \expert{}'s accuracy improves;
and
(ii)
they are robust in the sense that regardless of the \expert{}'s accuracy,
their competitive ratio is not much worse than that of the best online
algorithm that has no access to predictions.

In contrast, the current paper addresses the setting of multiple \experts{},
working under the assumption that at least one of them is sufficiently
accurate, and seeking to develop online algorithms with a vanishing regret.
To the best of our knowledge, this is the first paper that aims at this
direction.

Most closely related to the current paper are the papers of
\cite{Lykouris2018competitive, Rohatgi2020near, Wei2020better}
on online paging with predictions.
The authors of these papers
stick to the setting of a (single) NAT \expert{} and quantify the \expert{}'s
accuracy by means of the \emph{$L_{1}$-norm}.
Specifically, taking
$\{ \predict{t} \}_{t \in [T]}$
to be the sequence of NAT predictions, they define the \expert{}'s cumulative
prediction error to be
$\LpNorm = \sum_{t} \vert \predict{t} - \NextTime{t}{\sigma_{t}} \vert$.
It is easy to see that for any NAT \expert{}, the cumulative prediction error
as defined in (\ref{equation:nat-cumulative-prediction-error}) is never
larger than its $\LpNorm$, while the former can be
$\Omega (T)$-times
smaller.

\sloppy
In particular, Lykouris and Vassilvitskii \cite{Lykouris2018competitive}
design a randomized online paging algorithm whose competitive ratio is at
most
$O \left(
\min \left\{
1 + \sqrt{\LpNorm / \OPT},
\log k
\right\}
\right)$.
Rohatgi \cite{Rohatgi2020near} presents an improved randomized online
algorithm with competitive ratio up-bounded by
$O \left( 
\min \left\{
1 + \frac{\log k}{k} \frac{\LpNorm}{\OPT},
\log k
\right\}
\right)$
and accompany this with a lower bound of
$\Omega \left(
\min \left\{
1 + \frac{1}{k \log k} \frac{\LpNorm}{\OPT},
\log k
\right\}
\right)$.
Notice that the online algorithms presented in
\cite{Lykouris2018competitive, Rohatgi2020near}
belong to the \emph{marking family} of paging algorithms \cite{FiatKLMSY91} and
it can be shown that the competitive ratio of any such algorithm is bounded
away from $1$ even when provided with a fully accurate \expert{}
(consider for example the paging instance defined by setting
$n = 4$,
$k = 2$,
and
$\sigma_{t} = (t \bmod{4}) + 1$
for every
$t \in [T]$).
\par\fussy

Recently, Wei \cite{Wei2020better} advanced the state of the art of this
problem further, presenting a randomized
$O \left(
\min \left\{
1 + \frac{1}{k} \frac{\LpNorm}{\OPT},
\log k
\right\}
\right)$-competitive
online paging algorithm.
To do so, Wei analyzes an algorithm called \emph{BlindOracle}, that can be
viewed as a variant of our \Sim{} algorithm (see
Section~\ref{section:contribution}), and proves that its competitive ratio is
at most
$\min \left\{
1 + O \left( \frac{\LpNorm}{\OPT} \right),
O \left( 1 + \frac{1}{k} \frac{\LpNorm}{\OPT} \right)
\right\}$.
He then plugs this algorithm into the multiplexer of \cite{Blum2000online}
together with an
$O (\log k)$-competitive
off-the-shelf randomized online paging algorithm to obtain the promised
competitive ratio.
Notice that the bound that Wei establishes on the competitive ratio of
BlindOracle immediately implies an
$O (\LpNorm)$
bound on the regret of this algorithm.
As such, Theorem~\ref{theorem:bound-regret-Sim} can be viewed as a refinement
of Wei's result, bounding the regret as a function of $\eta$ rather than the
weaker measure of $\LpNorm$.

Antoniadis et al.~\cite{Antoniadis2020online} studies online algorithms with
ML predictions in the context of the \emph{metrical task system (MTS)} problem
\cite{BorodinLS1992mts}.
They consider a different type of \expert{} that in each round
$t \in [T]$,
provides a prediction $\hat{s}_{t}$ for the state $s_{t}$ of an optimal
offline algorithm, measuring the prediction error by means of 
$\StateDistance = \sum_{t \in [T]} \operatorname{dist}(s_{t}, \hat{s}_{t})$,
where
$\operatorname{dist}(\cdot, \cdot)$
is the distance function of the underlying metric space.
It is well known that any paging instance $\mathcal{I}$ can be transformed
into an MTS instance $\mathcal{I}_{\text{MTS}}$.
Antoniadis et al.\ prove that the prediction sequence
$\{ \predict{t} \}_{t \in [T]}$
of a NAT \expert{} for $\mathcal{I}$ can also be transformed into a prediction
sequence
$\{ \hat{s}_{t} \}_{t \in [T]}$
for $\mathcal{I}_{\text{MTS}}$.
However, the resulting prediction error $\StateDistance$ of the latter sequence
is incomparable to the prediction error $\LpNorm$ of the former;
this remains true also for the stronger notion of cumulative prediction error
as defined in (\ref{equation:nat-cumulative-prediction-error}).

Online algorithms with access to multiple \experts{} have been studied by Gollapudi and Panigrahi for the \emph{ski rental} problem \cite{GollapudiP19}.
Among other results, they prove that a competitive ratio of
$\alpha = \frac{4}{3}$
(resp.,
$\alpha = \frac{\sqrt{5} + 1}{2}$)
can be achieved by a randomized (resp., deterministic) online algorithm that has access to two \experts{} assuming that at least one of them provides an accurate prediction for the number of skiing days.
Notice that the length of the request sequence in the ski rental problem is inherently bounded by the cost of buying the ski gear;
this is in contrast to the paging problem considered in the current paper, where much of the challenge comes from the unbounded request sequence.

The reader may have noticed that some of the terminology used in the current paper is borrowed from the \emph{online learning} domain \cite{Cesa-BianchiL2006book}.
The main reason for this choice is that the research objectives of the current paper are, to a large extent, more in line with the objectives common to the online learning literature than they are in line with the objectives of the literature on online computation.
In particular, as discussed already, we measure the quality of our online algorithms by means of their regret (rather than competitiveness), indicating that the online algorithm can be viewed as a decision maker that tries to learn the best offline algorithm.

\subsection{Paper's Organization}
\label{section:organization}
The remainder of this paper is organized as follows.
In Section~\ref{section:comparison}, we refine the notion of cumulative prediction error as defined in (\ref{equation:nat-cumulative-prediction-error}) and compare the refined notion with the number of inversions used in some of the related literature
\cite{Rohatgi2020near, Wei2020better}.
The analysis of the \Sim{} algorithm (using a single \expert{}), leading to the proof of Theorem~\ref{theorem:bound-regret-Sim}, is carried out in Section~\ref{section:single-expert-oracle}.
Section~\ref{section:multiple-experts-oracle} is then dedicated to the setting of multiple (NAT) \experts{}.
Most of its content is found in Section~\ref{section:bandit}, devoted to the the bandit access model, where we also establish Theorem~\ref{theorem:bound-regret-bandit-access}.
As discussed in Section~\ref{section:related-work}, Theorem~\ref{theorem:bound-regret-full-information-access}, dealing with the full information access model, follows from Theorem~\ref{theorem:bound-regret-Sim} combined with a technique of \cite{Blum2000online};
this is explained in more detail in Section~\ref{section:full_info_oracle}.


\section{Measurements of Prediction Errors}
\label{section:comparison}

The measurement of the prediction errors plays an important role in the study on online paging algorithms augmented by predictions. This part makes a comparison between different measurements for the scenario where there is a single NAT predictor. To avoid ambiguity, in this part we use $\ErrorRounds$ to represent the measurement defined in Eq.~\eqref{equation:nat-cumulative-prediction-error} for the single predictor $j$. In the following, the superscript $j$ for the predictor is omitted for convenience.

%
%
In the analysis of \cite{Wei2020better}, the prediction errors is measured with the number of \emph{inverted pairs}. For a pair of two rounds $\lbrace t, t' \rbrace$, we say it is an inverted pair if $\NextTime{t}{\sigma_{t}} < \NextTime{t'}{\sigma_{t'}}$ and $\predict{t} \geq \predict{t'}$. Let $\INV$ be the set of all the inverted pairs, and define $\InvertedPairs = \vert \INV \vert$. To compare the measurement $\InvertedPairs$ with $\ErrorRounds$, we also define the following notations.
\begin{align*}
    \InvertedRounds \doteq& \Big\vert \lbrace t \, \vert \, \exists t' \text{ s.t.~} \lbrace t, t' \rbrace \in \INV \rbrace \Big\vert \\
    \RoundsWithErrorAndInversion \doteq& \Big\vert \lbrace t \, \vert \, \NextTime{t}{\sigma_{t}} \neq \predict{t} \, \wedge \, \exists t' \text{ s.t.~} \lbrace t, t' \rbrace \in \INV \rbrace \Big\vert
\end{align*}

First, it trivially holds that $\InvertedRounds \leq 2 \cdot \InvertedPairs$, and $\InvertedPairs$ can be $\Omega(T)$ times larger than $\InvertedRounds$. To see the second claim, consider the following sequence $\sigma$ of requested pages and a prediction sequence $\pi$.
\begin{equation*}
    \sigma_{t} = \begin{cases}
    1 & \text{if } t \leq \frac{T}{2} \\
    2 & \text{if } t > \frac{T}{2}
    \end{cases} \, ,
    \quad \text{and} \quad
    \pi_{t} = \begin{cases}
    2 & \text{if } t \leq \frac{T}{2} \\
    1 & \text{if } t > \frac{T}{2}
    \end{cases} \, .
\end{equation*}
It can be verified that for a sequence of predictions in the form of NATs that are consistent with the settings above, $\InvertedPairs$ is in the order of $T^{2}$ while $\InvertedRounds$ is in the order of $T$.

Second, we claim that $\InvertedRounds$ and $\ErrorRounds$ are incomparable, which means that there exists an example where $\InvertedRounds$ is $\Omega(T)$ times larger than $\ErrorRounds$, and vice versa. Still, we demonstrate these examples with the sequence $\sigma$ of requests and the prediction sequence $\pi$, and the claims above can be verified after converting the predictions in the form of requests to consistent predictions in the form of NATs. The configuration of the first example is given as follows.
\begin{equation*}
    \sigma_{t} = \begin{cases}
    (t \bmod{(k - 1)}) + 1 & \text{if } 1 < t < T \\
    k & \text{otherwise} 
    \end{cases} \, , 
    \quad \text{and} \quad
    \pi = \begin{cases}
    (t \bmod{(k - 1)}) + 1 & \text{if } t > 2 \\
    k & \text{otherwise}
    \end{cases} \, .
\end{equation*}
The second example is configured as follows.
\begin{equation*}
    \sigma_{t} = (t \bmod{(k - 1)}) + 1 \, , 
    \quad \text{and} \quad
    \pi = \begin{cases}
    ((t - 1) \bmod{(k - 1)}) + 1 & \text{if } t > 1 \\
    k & \text{otherwise}
    \end{cases} \, .
\end{equation*}

\sloppy
Third, it is obvious that
\begin{equation*}
    \RoundsWithErrorAndInversion \leq \min \Big\lbrace \ErrorRounds, \InvertedRounds \Big\rbrace \, .
\end{equation*}
Although in \Cref{section:machine-learning} we define $\eta^{j}$ for every {\expert} $j$ in the form of $\ErrorRounds$ for simplicity, our technique indeed works for the better measurement $\RoundsWithErrorAndInversion$. Therefore, in the technical parts of the current paper, including \Cref{section:single-expert-oracle} and \Cref{section:multiple-experts-oracle}, we use the following refined definition of $\eta^{j}$ by abuse of notation:
\begin{equation*}
    \eta^{j} \doteq \Big\vert \lbrace t \, \vert \, \NextTime{t}{\sigma_{t}} \neq \predictJ{t}{j} \, \wedge \, \exists t' \text{ s.t.~} \lbrace t, t' \rbrace \in \INV^{j} \rbrace \Big\vert \, ,
\end{equation*}
where $\INV^{j} = \lbrace (t, t') \vert \NextTime{t}{\sigma_{t}} < \NextTime{t'}{\sigma_{t'}} \, \wedge \, \predictJ{t}{j} \geq \predictJ{t'}{j} \rbrace$.
\par\fussy


\section{Single NAT \Expert{}}
\label{section:single-expert-oracle}

We start with the NAT \expert{} setting with $M = 1$. Notice that when there is only a single NAT {\expert}, there is no difference between the full-information access model and the bandit access model. Throughout this section, we still omit the superscript $j$ for the index of the {\expert}. 

The algorithm {\Sim} that we consider for this setting simulates {\FitF} with maintaining a value $\predictHat{t}{i}$, which we call the \emph{remedy prediction}, for each round $t \in [T]$ and each page $i \in [n]$. 
In particular, for each page $i \in [n]$, {\Sim} sets
\begin{equation}
\begin{aligned}
    \predictHat{1}{i} =& \begin{cases}
    \predict{1} & \text{if } i = \sigma_{1} \\
    Z + 1 & \text{otherwise}
    \end{cases} \, , \quad \text{and}\\
    \forall \, t \in [2, T] \quad \predictHat{t}{i} =& \begin{cases}
    \predict{t} & \text{if } i = \sigma_{t} \\
    Z & \text{if } \predictHat{t - 1}{i} \leq t \, \wedge \, i \neq \sigma_{t} \, \wedge \, \predictHat{t - 1}{i} \leq \predictHat{t - 1}{\sigma_{t}} < Z \\
    \predictHat{t - 1}{i} & \text{otherwise}\\
    \end{cases}\, , 
\end{aligned} \label{formula_virtual_NAT}
\end{equation}
where $Z > T + n$ is a sufficiently large integer. 
For each round $t$ when a cache miss happens, the algorithm evicts the page $\hat{e}_{t} = i$ that maximizes $\predictHat{t}{i}$, and ties are broken in an arbitrary way. The following statements can be directly inferred from Eq.~\eqref{formula_virtual_NAT}.

\begin{lemma} \label{lemma_virtual_NAT_changes}
The following properties are satisfied for every round
$t \in [T]$:
\begin{itemize}

\item
If
$\predict{t} > \NextTime{t}{\sigma_{t}}$,
then for each round
$t' \in [t, \min\lbrace T, \NextTime{t}{\sigma_{t}} - 1 \rbrace]$,
we have
$\predictHat{t'}{\sigma_{t}} > \NextTime{t'}{\sigma_{t}}$.

    
\item
If
$\predict{t} < \NextTime{t}{\sigma_{t}}$,
then for each round
$t' \in [t, \min\lbrace T, \NextTime{t}{\sigma_{t}} - 1 \rbrace]$,
either
$\predictHat{t'}{\sigma_{t}} < \NextTime{t'}{\sigma_{t}}$
or
$\predictHat{t'}{\sigma_{t}} = Z > \NextTime{t'}{\sigma_{t}}$.
Particularly, if
$t' \leq \predict{t} - 1$,
then
$\predictHat{t'}{\sigma_{t}} < \NextTime{t'}{\sigma_{t}}$.

\item
If
$\predict{t} = \NextTime{t}{\sigma_{t}}$,
then for each round
$t' \in [t, \min\lbrace T, \NextTime{t}{\sigma_{t}} - 1 \rbrace]$,
we have
$\predictHat{t'}{\sigma_{t}} = \NextTime{t'}{\sigma_{t}}$.

\end{itemize}
\end{lemma}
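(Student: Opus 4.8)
\textit{Proof proposal.}
The plan is to fix a round $t \in [T]$, write $i = \sigma_{t}$ and $N = \NextTime{t}{\sigma_{t}}$, and begin with two elementary reductions. Since $N$ is the first time strictly after $t$ at which page $i$ is requested, page $i$ is not requested in any round of the open interval $(t, N)$; hence $\NextTime{t'}{\sigma_{t}} = \NextTime{t'}{i} = N$ for every $t' \in [t, N-1]$, so the right-hand side of each of the three bullets is simply $N$. Moreover, from $\predict{t} \in (t, T+n]$, from the augmented suffix (which gives $N \le T+n$), and from the choice $Z > T+n$, we have $\predict{t} < Z$ and $N < Z$; these inequalities will be used when comparing the possible values of $\predictHat{t'}{i}$ against $N$.

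The core is a single structural claim, proved by induction on $t'$ directly from Eq.~\eqref{formula_virtual_NAT}: for every $t' \in [t, \min\{T, N-1\}]$ one has $\predictHat{t'}{i} \in \{\predict{t}, Z\}$, and in addition $\predictHat{t'}{i} = \predict{t}$ whenever $t' \le \predict{t} - 1$. The base case $t' = t$ holds because $i = \sigma_{t}$ forces $\predictHat{t}{i} = \predict{t}$ (the first line of Eq.~\eqref{formula_virtual_NAT} when $t = 1$, the first case of its recursion when $t \ge 2$), and $t \le \predict{t} - 1$ since $\predict{t} > t$. For $t' \ge t+1$ we have $t < t' < N$, so $\sigma_{t'} \ne i$ and Eq.~\eqref{formula_virtual_NAT} leaves only the two possibilities $\predictHat{t'}{i} = Z$ or $\predictHat{t'}{i} = \predictHat{t'-1}{i}$; with the inductive hypothesis this immediately yields $\predictHat{t'}{i} \in \{\predict{t}, Z\}$. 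For the refinement, if also $t' \le \predict{t} - 1$ then the inductive hypothesis gives $\predictHat{t'-1}{i} = \predict{t}$, and the branch of Eq.~\eqref{formula_virtual_NAT} that would set $\predictHat{t'}{i} = Z$ cannot apply, because it requires $\predictHat{t'-1}{i} \le t'$, i.e.\ $\predict{t} \le t'$, contradicting $t' \le \predict{t} - 1$; hence $\predictHat{t'}{i} = \predictHat{t'-1}{i} = \predict{t}$.

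With the claim in hand, the three bullets follow by comparing $\predict{t}$ with $N$. If $\predict{t} > N$, then both candidate values $\predict{t}$ and $Z$ exceed $N$, so $\predictHat{t'}{i} > N$ for every $t'$ in the range. If $\predict{t} < N$, then each possible value of $\predictHat{t'}{i}$ is either $\predict{t} < N$ or $Z > N$, and the refinement part of the claim (valid on $[t, \predict{t}-1] \subseteq [t, \min\{T, N-1\}]$, since $\predict{t} - 1 < N$) gives $\predictHat{t'}{i} = \predict{t} < N$ for $t' \le \predict{t}-1$. If $\predict{t} = N$, then every $t'$ in the range satisfies $t' \le N - 1 = \predict{t} - 1$, so the refinement gives $\predictHat{t'}{i} = \predict{t} = N$. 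I do not expect a genuine obstacle here: the lemma is essentially an unwinding of Eq.~\eqref{formula_virtual_NAT}, and the only points needing a little care are the bookkeeping ones --- treating $t = 1$ separately in the base case, carrying the upper endpoint as $\min\{T, N-1\}$ so that the recursion is always evaluated within $[T]$, and observing that $t < t' < N$ forces $\sigma_{t'} \ne i$ so that the ``$Z$ or unchanged'' dichotomy is indeed the only one available.
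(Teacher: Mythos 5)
Your proof is correct and follows the only natural route: the paper itself offers no argument for this lemma (it merely asserts that the properties "can be directly inferred from Eq.~\eqref{formula_virtual_NAT}"), and your induction showing $\predictHat{t'}{\sigma_t} \in \{\predict{t}, Z\}$ on $[t, \min\{T, N-1\}]$, with the refinement that the value stays $\predict{t}$ while $t' \le \predict{t}-1$ because the $Z$-branch needs $\predictHat{t'-1}{\sigma_t} \le t'$, is exactly the intended unwinding of the recursion. The bookkeeping points you flag (constancy of $\NextTime{t'}{\sigma_t}$ on the interval, $\predict{t}, N < Z$, and $\sigma_{t'} \ne \sigma_t$ for $t < t' < N$) are the right ones.
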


Next, we will analyze the cost incurred by {\Sim} and show that it has a vanishing regret.


\subsection{Definitions and Notations for Analysis} \label{section:defs}

For each round $t$, we use $e_{t}$ and $\hat{e}_{t}$ to represent the pages that are evicted by {\FitF} and {\Sim}, respectively. We say $e_{t} = \perp$ (resp.~$\hat{e}_{t} = \perp$) if {\FitF} (resp.~{\Sim}) does not evict any page.

For each page $i \in [n]$ and each round $t \in [T]$, define $\LastRequest{t}{i}$ to be the last round before $t$ when $i$ is requested. Formally,
\begin{equation}
    \LastRequest{t}{i} \doteq \begin{cases}
    \max\lbrace t' < t \mid \sigma_{t'} = i \rbrace & \text{if } \exists\, t' \in [1, t) \text{ s.t.~} \sigma_{t'} = i \\
    \NULL & \text{otherwise}
    \end{cases} \, . \label{formula_last_request}
\end{equation}
The following results can be inferred from Eq.~\eqref{formula_last_request} and \Cref{lemma_virtual_NAT_changes}.


For any round $t \in [T]$, let $C_{t}$ and $\hat{C}_{t}$ be the \emph{cache profiles} incurred by {\FitF} and {\Sim}, respectively. More specifically, $C_{1}$ and $\hat{C}_{1}$ represent the cache items given at the beginning. To provide tools for the more complicated scenario where there are multiple {\experts}, the analysis in this section is carried out without assuming that $C_{1} = \hat{C}_{1}$. For each $t \in [T - 1]$, the cache profile $C_{t}$ (resp.~$\hat{C}_{t}$) is updated to $C_{t + 1}$ (resp.~$\hat{C}_{t + 1}$) immediately after {\FitF} (resp.~{\Sim}) has processed the request $\sigma_{t}$. The cache profiles of {\FitF} and {\Sim} after serving $\sigma_{T}$ are denoted by $C_{T + 1}$ and $\hat{C}_{T + 1}$, respectively.

Denote the intersection between the cache profiles at each round $t \in [T]$ by $I_{t} = C_{t} \cap \hat{C}_{t}$. Define the \emph{distance} between the cache profiles to be $d_{t} = k - \lvert I_{t} \rvert$. We use $\delta_{t}$ to represent the difference in the costs between {\FitF} and {\Sim} for serving $\sigma_{t}$. Formally,
\begin{equation*}
    \delta_{t} \doteq 1_{\sigma_{t} \notin \hat{C}_{t}} - 1_{\sigma_{t} \notin C_{t}} \, .
\end{equation*}

Define $\HD{x}{y}$ for $x \in \mathbb{Z}$, $y \in \mathbb{Z}$ to be the set of rounds $t \in [T]$ where the $d_{t + 1} - d_{t} = x$ and $\delta_{t} = y$. Let $\HD{x}{} = \bigcup_{y} \HD{x}{y}$ and $\HD{}{y} = \bigcup_{x} \HD{x}{y}$.

For a round $t \in [T]$, we say that $t$ is a \emph{troublemaker} if and only if $t$ satisfies
\begin{equation}
\big( \hat{e}_{t} \neq \perp \big) \: \wedge \: \big( \hat{e}_{t} \in I_{t} \big) \: \wedge \: \NextTime{t}{\hat{e}_{t}} \in [T] \: \wedge \: \big( \hat{e}_{t} \in C_{\NextTime{t}{\hat{e}_{t}}} \big) \: \wedge \: \big( e_{t} = \perp \: \vee \: e_{t} \notin I_{t} \big) \, . \label{formula_troublemaker_def}
\end{equation}
The set of troublemaker rounds is denoted by $\Gamma$. For any troublemaker $\gamma \in \Gamma$ and any round $t \in (\gamma, T]$, we say $\gamma$ is \emph{active} at $t$ if $t < A_{\gamma}(\hat{e}_{\gamma})$. The set of troublemakers that are active at $t$ is denoted by $\Gamma_{t} \subseteq \Gamma \cap [t - 1]$. The \emph{active period} $[\gamma + 1, A_{\gamma}(\hat{e}_{\gamma}) - 1]$ of $\gamma$ is denoted by $\ActI_{\gamma}$.

\subparagraph*{Preliminary results.} The following results can be directly inferred from the definitions above.

\begin{lemma} \label{lemma_last_to_next}
For any round $t$ and any page $i$, the following properties are satisfied.
\begin{itemize}
    \item If $\LastRequest{t}{i} = \NULL$ and $i \neq \sigma_{t}$, then $\predictHat{t}{i} = Z + 1$, and vice versa.
    \item The equality $\predictHat{t}{i} = \NextTime{t}{i}$ holds if $\predict{r} = \NextTime{r}{\sigma_{r}}$, where $r = \LastRequest{t}{i}$.
\end{itemize}
\end{lemma}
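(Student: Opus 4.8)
Both items follow by unwinding the defining recursion~\eqref{formula_virtual_NAT} together with the elementary observation that a genuine NAT prediction satisfies $\predict{s} \in (s, T + n]$, hence $\predict{s} < Z < Z + 1$ for every $s \in [T]$. Consequently, neither $Z$ nor $Z + 1$ is ever produced by the first branch of~\eqref{formula_virtual_NAT}, the value $Z + 1$ is produced only at round $t = 1$, and the value $Z$ is produced only by the second branch. The first item is a simultaneous induction on $t$; the second item then follows from the third item of \Cref{lemma_virtual_NAT_changes}.

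\textbf{First item.} I would establish the biconditional $\predictHat{t}{i} = Z + 1 \iff \big(\LastRequest{t}{i} = \NULL \wedge i \neq \sigma_{t}\big)$, i.e.\ $\predictHat{t}{i} = Z + 1$ exactly when page $i$ is requested in none of the rounds $1, \dots, t$. The base case $t = 1$ is immediate from the first line of~\eqref{formula_virtual_NAT} and $\predict{1} < Z + 1$. For the inductive step at a round $t \ge 2$, I split on the branch of~\eqref{formula_virtual_NAT} that fixes $\predictHat{t}{i}$: the first branch ($i = \sigma_{t}$) gives $\predict{t} < Z + 1$; the second branch gives exactly $Z \neq Z + 1$; and the third branch copies $\predictHat{t - 1}{i}$. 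Hence $\predictHat{t}{i} = Z + 1$ iff $i \neq \sigma_{t}$, the second branch is not taken, and $\predictHat{t - 1}{i} = Z + 1$. Finally, whenever $\predictHat{t - 1}{i} = Z + 1$ we have $\predictHat{t - 1}{i} = Z + 1 > T + n \ge t$, so the guard $\predictHat{t - 1}{i} \le t$ of the second branch automatically fails; combined with the induction hypothesis (which characterizes $\predictHat{t - 1}{i} = Z + 1$ as ``$i$ unrequested in $1, \dots, t - 1$''), this closes the step in both directions.

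\textbf{Second item.} Write $r = \LastRequest{t}{i}$, so that $\sigma_{r} = i$; I treat the regime $i \neq \sigma_{t}$, which is the one of interest (if $i = \sigma_{t}$ then $\predictHat{t}{i} = \predict{t}$ is governed by the round-$t$ prediction rather than by $\predict{r}$). Then $\sigma_{t'} \neq i$ for every $t'$ with $r < t' \le t$, so $r < t < \NextTime{r}{i}$ and $\NextTime{t}{i} = \NextTime{r}{i}$. The hypothesis $\predict{r} = \NextTime{r}{\sigma_{r}}$ is precisely the premise of the third item of \Cref{lemma_virtual_NAT_changes} applied at round $r$, which therefore gives $\predictHat{t'}{\sigma_{r}} = \NextTime{t'}{\sigma_{r}}$ for every $t' \in [r, \min\{T, \NextTime{r}{i} - 1\}]$; since $t$ lies in this interval, $\predictHat{t}{i} = \NextTime{t}{i}$. (A self-contained proof is equally short: induct on $t' \in [r, \min\{T, \NextTime{r}{i} - 1\}]$, using $\predictHat{r}{i} = \predict{r} = \NextTime{r}{i}$ as the base case and, in the step, noting that $i \neq \sigma_{t'}$ kills the first branch while $\predictHat{t' - 1}{i} = \NextTime{r}{i} > t'$ kills the guard of the second, leaving the third branch.)

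\textbf{Where the work is.} There is no conceptual obstacle; the statement is a careful bookkeeping consequence of~\eqref{formula_virtual_NAT}. The single point requiring attention in both inductions is that the second branch of~\eqref{formula_virtual_NAT} --- the only branch capable of overwriting the carried values $Z + 1$ and $\NextTime{r}{i}$ --- never fires, which always reduces to checking that the carried value strictly exceeds the current round index. Minor additional care is needed for the boundary round $t = 1$ versus $t \ge 2$, for the truncation $\min\{T, \NextTime{r}{i} - 1\}$ when $\NextTime{r}{i} > T$, and for the harmless case $i = \sigma_{t}$ noted above.
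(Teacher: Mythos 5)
Your proof is correct and follows essentially the same route as the paper: the first item by induction on the recursion~\eqref{formula_virtual_NAT} (with the key check that the guard $\predictHat{t-1}{i} \le t$ of the second branch cannot fire when the carried value is $Z+1$), and the second item by invoking the third bullet of \Cref{lemma_virtual_NAT_changes} at round $r$. Your explicit handling of the boundary case $i = \sigma_{t}$ is a welcome extra precision that the paper's one-line proof glosses over.
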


\begin{proof}
The first statement can be proved inductively with using Eq.~\eqref{formula_virtual_NAT}. The first statement implies that if $\predictHat{t}{i} = \NextTime{t}{i}$, then $\LastRequest{t}{i} \neq \NULL$. Therefore, the second statement can be inferred from \Cref{lemma_virtual_NAT_changes}.
\end{proof}

\begin{lemma} \label{lemma_not_many_null}
$\vert \lbrace t \mid \hat{e}_{t} \neq \perp \, \wedge \, \LastRequest{t}{\hat{e}_{t}} = \NULL \rbrace \vert \leq k$.
\end{lemma}

\begin{proof}
For each round $t$ and each page $i \in \hat{C}_{t}$, the equality $\LastRequest{t}{i} = \NULL$ holds only if $i \in \hat{C}_{1}$. The initial cache profile $\hat{C}_{1}$ contains $k$ different pages, and for each page $i \in \hat{C}_{1}$, if there exist two rounds $t, t'$ with $t < t'$ and $\hat{e}_{t} = \hat{e}_{t'} = i$, then $\LastRequest{t'}{i} \geq t \neq \NULL$. This implies that there are at most $k$ rounds $t$ with $\LastRequest{t}{\hat{e}_{t}} = \NULL$. 
\end{proof}

\begin{lemma}\label{lemma_not_many_null2}
$\vert \lbrace t \mid \LastRequest{t}{\sigma_{t}} = \NULL \, \wedge \, t \in \HD{0}{1} \rbrace \vert \leq k$
\end{lemma}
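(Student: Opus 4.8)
The plan is to show that every round $t$ belonging to the set in question can be charged to a distinct page of the initial cache profile $C_1$, so that the size of the set is bounded by $|C_1| = k$.

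First I would extract the structural fact that $\sigma_t \in C_1$ for every such round $t$. Membership of $t$ in $\HD{0}{1}$ forces $\delta_t = 1$, and by the definition of $\delta_t$ this means $1_{\sigma_t \notin C_t} = 0$, i.e., {\FitF} holds page $\sigma_t$ in its cache at the beginning of round $t$. On the other hand, $\LastRequest{t}{\sigma_t} = \NULL$ says that $\sigma_t$ has not been requested in any round prior to $t$; since {\FitF} brings a page into its cache only in response to a request to that page, the only way for $\sigma_t$ to reside in $C_t$ is that it was there from the start, so $\sigma_t \in C_1$. (This step does not rely on the hypothesis $C_1 = \hat C_1$, consistent with the way the analysis of this section is set up.)

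Next I would observe that distinct rounds in the set carry distinct such pages: if $t < t'$ both lie in the set and $\sigma_{t'} = \sigma_t$, then $\sigma_t$ having been requested in round $t < t'$ would give $\LastRequest{t'}{\sigma_{t'}} \geq t \neq \NULL$, contradicting $t'$ being in the set. Hence $t \mapsto \sigma_t$ is an injection from the set into $C_1$, and the bound $\leq k$ follows from $|C_1| = k$. I do not expect a genuine obstacle here; the only point requiring a moment of care is the first step — deducing $\sigma_t \in C_1$ from $\sigma_t \in C_t$ together with $\LastRequest{t}{\sigma_t} = \NULL$ — which rests on the elementary invariant that a page enters {\FitF}'s cache only through a request to it. Note also that the constraint $d_{t+1} - d_t = 0$ is not actually used; it only makes the set smaller, so the bound holds a fortiori.
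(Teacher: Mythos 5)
Your proof is correct and matches the paper's argument: the paper likewise observes that $t \in \HD{0}{1}$ gives $\sigma_t \in C_t$, hence $\sigma_t \in C_1$ by $\LastRequest{t}{\sigma_t} = \NULL$, and then charges each such round to a distinct page of $C_1$ exactly as in Lemma~\ref{lemma_not_many_null}. No differences worth noting.
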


\begin{proof}
For a round $t \in \HD{0}{1}$, we have $\sigma_{t} \in C_{t}$. Since $R_{t}(\sigma_{t}) = \NULL$, we have $\sigma_{t} \in C_{1}$. Then this lemma can be proved in a similar way with \Cref{lemma_not_many_null}.
\end{proof}

\begin{lemma} \label{lemma_observations_on_troublemakers}
For every round $t$ and any troublemaker $\gamma \in \Gamma_{t}$, we have (1) $\hat{e}_{\gamma} \in C_{t} \setminus \hat{C}_{t}$, and (2) $\hat{e}_{\gamma} \neq \hat{e}_{\gamma'}$ for any troublemaker $\gamma' \in \Gamma_{t}$ with $\gamma \neq \gamma'$.
\end{lemma}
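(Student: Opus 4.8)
The plan is to derive both claims from one structural fact: if $\gamma$ is a troublemaker, then the page $\hat{e}_{\gamma}$ evicted by $\Sim$ at round $\gamma$ remains outside $\Sim$'s cache and inside $\FitF$'s cache throughout the active period $\ActI_{\gamma}$. Part~(1) is exactly this fact restricted to a round $t$ at which $\gamma$ is active; here I would first note that, by the definitions of $\Gamma_{t}$ and of being active, ``$\gamma \in \Gamma_{t}$'' is equivalent to $t \in \ActI_{\gamma} = [\gamma + 1, \NextTime{\gamma}{\hat{e}_{\gamma}} - 1]$. Part~(2) will then follow from part~(1) in one line.

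For the ``$\hat{e}_{\gamma} \notin \hat{C}_{t}$'' half of part~(1): since $\hat{e}_{\gamma} \neq \perp$, the page $\hat{e}_{\gamma}$ is removed from $\Sim$'s cache when $\sigma_{\gamma}$ is served, so $\hat{e}_{\gamma} \notin \hat{C}_{\gamma + 1}$; a page can re-enter $\Sim$'s cache only upon being requested, which by the definition of $\NextTime{\gamma}{\hat{e}_{\gamma}}$ does not happen at any round in $(\gamma, \NextTime{\gamma}{\hat{e}_{\gamma}})$. Hence $\hat{e}_{\gamma} \notin \hat{C}_{t}$ for all $t \in \ActI_{\gamma}$. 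For the ``$\hat{e}_{\gamma} \in C_{t}$'' half I would show $\hat{e}_{\gamma} \in C_{t'}$ for every $t' \in [\gamma, \NextTime{\gamma}{\hat{e}_{\gamma}}]$ by considering the \emph{first} round at which $\FitF$ evicts $\hat{e}_{\gamma}$ within this interval and reaching a contradiction: we have $\hat{e}_{\gamma} \in C_{\gamma}$ because $\hat{e}_{\gamma} \in I_{\gamma}$; if $\FitF$ evicts $\hat{e}_{\gamma}$ already at round $\gamma$ then $e_{\gamma} = \hat{e}_{\gamma} \in I_{\gamma}$, contradicting the troublemaker condition $e_{\gamma} = \perp \vee e_{\gamma} \notin I_{\gamma}$; and if the first such eviction occurs at a round in $(\gamma, \NextTime{\gamma}{\hat{e}_{\gamma}})$, then, since $\hat{e}_{\gamma}$ is not requested anywhere in $(\gamma, \NextTime{\gamma}{\hat{e}_{\gamma}})$, it cannot return to $\FitF$'s cache before round $\NextTime{\gamma}{\hat{e}_{\gamma}}$, contradicting the troublemaker condition $\hat{e}_{\gamma} \in C_{\NextTime{\gamma}{\hat{e}_{\gamma}}}$. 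Thus $\FitF$ keeps $\hat{e}_{\gamma}$ throughout, so $\hat{e}_{\gamma} \in C_{t}$ for $t \in \ActI_{\gamma}$, which together with the first half gives part~(1).

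For part~(2), take distinct $\gamma, \gamma' \in \Gamma_{t}$ and assume without loss of generality $\gamma < \gamma'$. Then $\gamma < \gamma' < t < \NextTime{\gamma}{\hat{e}_{\gamma}}$, so $\gamma$ is active at round $\gamma'$, i.e.\ $\gamma \in \Gamma_{\gamma'}$; applying part~(1) with the round $\gamma'$ yields $\hat{e}_{\gamma} \notin \hat{C}_{\gamma'}$. On the other hand, $\hat{e}_{\gamma'} \in \hat{C}_{\gamma'}$ since $\Sim$ evicts $\hat{e}_{\gamma'}$ from its cache when serving $\sigma_{\gamma'}$. Hence $\hat{e}_{\gamma} \neq \hat{e}_{\gamma'}$.

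I don't anticipate a real obstacle; the only spot that needs care is the ``first eviction by $\FitF$'' step, where one must separate the round $\gamma$ itself (handled via the condition on $e_{\gamma}$) from later rounds (handled via $\hat{e}_{\gamma} \in C_{\NextTime{\gamma}{\hat{e}_{\gamma}}}$), and keep the indexing convention straight --- namely that $C_{t + 1}$ and $\hat{C}_{t + 1}$ denote the configurations \emph{after} $\sigma_{t}$ has been served.
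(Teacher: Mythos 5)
Your proposal is correct and follows essentially the same route as the paper: the paper dismisses part~(1) as ``directly inferred from the definition of active troublemakers'' (your write-up just supplies the details, correctly using laziness of both algorithms, the condition $e_{\gamma} = \perp \vee e_{\gamma} \notin I_{\gamma}$, and the condition $\hat{e}_{\gamma} \in C_{\NextTime{\gamma}{\hat{e}_{\gamma}}}$), and its proof of part~(2) is exactly your one-liner via $\hat{e}_{\gamma} \notin \hat{C}_{\gamma'}$.
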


\begin{proof}
The first statement can be directly inferred from the definition of active troublemakers. Now consider the second statement. Without loss of generality, we assume that $\gamma < \gamma'$. Then the first statement shows that $\hat{e}_{\gamma} \notin \hat{C}_{\gamma'}$, which means that {\Sim} cannot evict $\hat{e}_{\gamma}$ at round $\gamma'$.
\end{proof}


\subsection{Reducing Cost Analysis to Troublemaker Counting} \label{section:reduction2-troublemaker}

\begin{lemma} \label{lemma_values_scope}
For each round $t \in [T]$, we have $d_{t + 1} - d_{t} \in \lbrace -1, 0, 1 \rbrace$ and $\delta_{t} \in \lbrace -1, 0, 1 \rbrace$.
\end{lemma}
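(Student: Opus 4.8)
The plan is to bound, for a single round $t$, both the change $d_{t+1}-d_t$ in the cache distance and the cost difference $\delta_t$ by examining how {\FitF} and {\Sim} each modify their cache profiles in round $t$. The key observation is that in one round each algorithm evicts at most one page and inserts at most one page (namely $\sigma_t$, on a cache miss), so $C_{t+1}$ differs from $C_t$ by at most one element (one page out, $\sigma_t$ in), and likewise $\hat C_{t+1}$ from $\hat C_t$.

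For the bound $\delta_t\in\{-1,0,1\}$ this is immediate from the definition $\delta_t = 1_{\sigma_t\notin\hat C_t} - 1_{\sigma_t\notin C_t}$, since each indicator lies in $\{0,1\}$. For the distance, I would argue via the intersection $I_t = C_t\cap\hat C_t$, with $d_t = k-|I_t|$. After round $t$, both caches contain $\sigma_t$ (whether it was already there or just fetched), so $\sigma_t\in I_{t+1}$. Removing $\sigma_t$ from the picture, the only other changes to $C_t$ and to $\hat C_t$ are the (at most one) eviction by {\FitF}, namely $e_t$, and the (at most one) eviction by {\Sim}, namely $\hat e_t$. Evicting a single page from $C_t$ can decrease $|C_t\cap \hat C_t|$ by at most $1$, and similarly for evicting a single page from $\hat C_t$; fetching $\sigma_t$ into a cache can only increase (or leave unchanged) the intersection. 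A careful bookkeeping of these at-most-two removals and the one guaranteed common insertion shows $|I_{t+1}| \ge |I_t| - 1$ and $|I_{t+1}| \le |I_t| + 1$, hence $d_{t+1}-d_t\in\{-1,0,1\}$.

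Concretely I would split on whether each algorithm has a cache miss in round $t$. If neither has a miss, then $C_{t+1}=C_t$, $\hat C_{t+1}=\hat C_t$, so $d_{t+1}=d_t$. If exactly one, say {\Sim}, has a miss, then $C_{t+1}=C_t$ while $\hat C_{t+1} = (\hat C_t\setminus\{\hat e_t\})\cup\{\sigma_t\}$; since $\sigma_t\in C_t$ in this case (no miss for {\FitF}), the insertion of $\sigma_t$ cannot decrease the intersection, and the single eviction changes it by at most $1$, giving $|d_{t+1}-d_t|\le 1$. If both have a miss, then $\sigma_t$ enters both caches and $\sigma_t\notin C_t\cup\hat C_t$ beforehand; the intersection $I_{t+1}$ equals $\bigl((C_t\setminus\{e_t\})\cap(\hat C_t\setminus\{\hat e_t\})\bigr)\cup\{\sigma_t\}$, and since $\sigma_t\notin I_t$, we have $|I_{t+1}| = |I_t \setminus\{e_t,\hat e_t\}| + 1$, which lies in $\{|I_t|-1,\,|I_t|,\,|I_t|+1\}$ depending on how many of $e_t,\hat e_t$ lay in $I_t$ and whether they coincide; thus again $|d_{t+1}-d_t|\le 1$.

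I do not expect any real obstacle here — the statement is essentially a sanity-check lemma, and the only mild care needed is the case analysis in the "both miss" scenario to confirm the intersection cannot jump by $2$ in either direction (it cannot, because at most one page leaves each cache and exactly one common page, $\sigma_t$, enters both). I would present it as a short case split on the miss/no-miss status of the two algorithms, tracking $|I_{t+1}|-|I_t|$ in each case.
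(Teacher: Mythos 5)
Your proposal is correct and follows essentially the same route as the paper: $\delta_t\in\{-1,0,1\}$ is immediate from the indicator definition, and the distance bound is obtained by a case analysis on the miss/no-miss status of the two algorithms while tracking $|I_{t+1}|-|I_t|$, which is exactly the paper's case split (phrased there in terms of whether $e_t,\hat e_t$ equal $\perp$ and lie in $I_t$). Your identity $|I_{t+1}|=|I_t\setminus\{e_t,\hat e_t\}|+1$ in the both-miss case compactly subsumes the paper's several sub-cases.
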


\begin{proof}
Note that both {\FitF} and {\Sim} are \emph{lazy} algorithms, which means that they only fetch the requested pages into the cache when the cache miss happens. Therefore, it trivially holds that $\delta_{t} \in \lbrace -1, 0, 1 \rbrace$. To analyze $d_{t + 1} - d_{t}$, we consider the following cases.
\begin{itemize}

\item
$e_{t} = \hat{e}_{t} = \perp$:
this implies that
$I_{t + 1} = I_{t}$
and
$d_{t + 1} - d_{t} = 0$.

\item
$e_{t} = \perp$, $\hat{e}_{t} \neq \perp$, and $\hat{e}_{t} \in I_{t}$:
this implies that
$I_{t + 1} = (I_{t} \setminus \lbrace \hat{e}_{t} \rbrace ) \cup \lbrace \sigma_{t} \rbrace $, which gives $d_{t + 1} - d_{t} = 0$.

\item $e_{t} = \perp$, $\hat{e}_{t} \neq \perp$, and $\hat{e}_{t} \notin I_{t}$: this gives $I_{t + 1} = I_{t} \cup \lbrace \sigma_{t} \rbrace $, so $d_{t + 1} - d_{t} = -1$.
    \item $e_{t} \neq \perp$, $\hat{e}_{t} \neq \perp$, $e_{t} \notin I_{t}$, and $\hat{e}_{t} \in I_{t}$: in such a case, $I_{t + 1} = (I_{t} \setminus \lbrace \hat{e}_{t} \rbrace) \cup \lbrace \sigma_{t} \rbrace$, which means $d_{t + 1} - d_{t} = 0$.
    \item $e_{t} \neq \perp$, $\hat{e}_{t} \neq \perp$, $e_{t} \in I_{t}$, $\hat{e}_{t} \in I_{t}$, and $e_{t} = \hat{e}_{t}$: it still holds that $I_{t + 1} = (I_{t} \setminus \lbrace \hat{e}_{t} \rbrace) \cup \lbrace \sigma_{t} \rbrace$.
    \item $e_{t} \neq \perp$, $\hat{e}_{t} \neq \perp$, $e_{t} \in I_{t}$, $\hat{e}_{t} \in I_{t}$, and $e_{t} \neq \hat{e}_{t}$: then we have $I_{t + 1} = (I_{t} \setminus \lbrace e_{t}, \hat{e}_{t} \rbrace) \cup \sigma_{t}$. Therefore, $d_{t + 1} - d_{t} = 1$.
\end{itemize}
Notice that the calculation above on $d_{t + 1} - d_{t}$ follows because it always holds that $e_{t} \in I_{t}$ and $\hat{e}_{t} \in I_{t}$, and if $e_{t} \neq \perp$ or $\hat{e}_{t} \neq \perp$, then $\sigma_{t} \notin I_{t}$. We omit the discussion on the cases that are symmetric with the cases above. 
\end{proof}

\Cref{lemma_values_scope} implies that for the sets $\HD{x}{y}$, we only need to consider the parameters $x, y \in \lbrace -1, 0, 1 \rbrace$. The following result on $\HD{x}{y}$ can be inferred from the proof of \Cref{lemma_values_scope}.

\begin{lemma} \label{lemma_sets_description}
It holds that $\HD{1}{} = \HD{1}{0} = \lbrace t \mid e_{t} \neq \perp \bigwedge \hat{e}_{t} \neq \perp \bigwedge e_{t} \neq \hat{e}_{t} \bigwedge \hat{e}_{t} \in I_{t} \bigwedge e_{t} \in I_{t} \rbrace$.
\end{lemma}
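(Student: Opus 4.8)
The plan is to read off both equalities directly from the case analysis already carried out in the proof of \Cref{lemma_values_scope}, together with the definition of $\delta_t$. I do not expect a real obstacle here; the only point requiring care is to make sure the case enumeration is exhaustive, so that no configuration other than the last one can produce $d_{t+1} - d_t = 1$.

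First I would establish that $\HD{1}{}$ equals the displayed set of rounds. By \Cref{lemma_values_scope} we have $d_{t+1} - d_t \in \lbrace -1, 0, 1 \rbrace$, and scanning the cases in the proof of that lemma (including the symmetric ones obtained by swapping the roles of $e_t$ and $\hat e_t$), the value $d_{t+1} - d_t = 1$ arises in exactly one case: when $e_t \neq \perp$, $\hat e_t \neq \perp$, $e_t \in I_t$, $\hat e_t \in I_t$, and $e_t \neq \hat e_t$; in every other case the change is $0$ or $-1$. This gives the inclusion $\HD{1}{} \subseteq \lbrace t \mid e_t \neq \perp \wedge \hat e_t \neq \perp \wedge e_t \neq \hat e_t \wedge \hat e_t \in I_t \wedge e_t \in I_t \rbrace$. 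Conversely, when a round $t$ satisfies these five conditions, the same case in the proof shows $I_{t+1} = (I_t \setminus \lbrace e_t, \hat e_t \rbrace) \cup \lbrace \sigma_t \rbrace$ with $\sigma_t \notin I_t$, so $\lvert I_{t+1} \rvert = \lvert I_t \rvert + 1$ and hence $d_{t+1} - d_t = 1$, i.e.\ $t \in \HD{1}{}$. This proves the second equality in the statement.

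Next I would pin down $\delta_t$ on these rounds. Fix $t \in \HD{1}{}$. Since $e_t \neq \perp$, \FitF{} evicts a page in round $t$, which (both algorithms being lazy) happens only on a cache miss, so $\sigma_t \notin C_t$ and $1_{\sigma_t \notin C_t} = 1$. Likewise $\hat e_t \neq \perp$ forces a cache miss for \Sim{}, so $\sigma_t \notin \hat C_t$ and $1_{\sigma_t \notin \hat C_t} = 1$. Therefore $\delta_t = 1_{\sigma_t \notin \hat C_t} - 1_{\sigma_t \notin C_t} = 1 - 1 = 0$, so $t \in \HD{1}{0}$. Combined with the trivial inclusion $\HD{1}{0} \subseteq \HD{1}{}$ (and the fact that by \Cref{lemma_values_scope} there are no other admissible values of $y$ for $x = 1$), this yields $\HD{1}{} = \HD{1}{0}$ and completes the proof.

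The one step to be careful about is the exhaustiveness of the enumeration: one must verify that each of the remaining configurations — in particular the symmetric variants such as $e_t \in I_t$, $\hat e_t \notin I_t$, or $e_t = \perp$ with $\hat e_t \neq \perp$ — contributes a non-positive change to the distance, so that it does not enlarge $\HD{1}{}$. This follows by exactly the bookkeeping in the proof of \Cref{lemma_values_scope}, using that $e_t, \hat e_t \in I_t$ whenever they are not $\perp$, and that $\sigma_t \notin I_t$ whenever $e_t \neq \perp$ or $\hat e_t \neq \perp$.
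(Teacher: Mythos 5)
Your proposal is correct and matches the paper's intent exactly: the paper gives no separate proof of this lemma, stating only that it "can be inferred from the proof of Lemma~\ref{lemma_values_scope}," and your argument is precisely that inference, including the observation that $e_t\neq\perp$ and $\hat e_t\neq\perp$ force a cache miss for both algorithms so that $\delta_t=0$. The one point you flag — exhaustiveness of the case enumeration, including the omitted symmetric cases and the case $e_t\notin I_t \wedge \hat e_t\notin I_t$ (which yields $d_{t+1}-d_t=-1$) — is handled correctly.
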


\Cref{lemma_sets_description} implies that $\HD{1}{1} = \emptyset$, therefore, $\cost({\Sim}) - \mathtt{OPT}$ can be bounded by $\lvert \HD{}{1} \rvert - \lvert \HD{}{-1} \rvert = \lvert \HD{0}{1} \rvert + \lvert \HD{-1}{1} \rvert - \lvert \HD{}{-1} \rvert$.

\begin{lemma} \label{lemma_up_down}
$\lvert \HD{-1}{} \rvert \leq \lvert \HD{1}{} \rvert + k$.
\end{lemma}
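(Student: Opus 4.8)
The plan is to set up a charging argument (or equivalently, a potential/telescoping argument) based on the distance function $d_t$. Recall that $d_{t+1} - d_t = x$ exactly when $t \in \HD{x}{}$, and by Lemma \ref{lemma_values_scope} we have $x \in \{-1, 0, 1\}$. Summing the increments over all rounds telescopes:
\[
d_{T+1} - d_1 = \sum_{t \in [T]} (d_{t+1} - d_t) = |\HD{1}{}| - |\HD{-1}{}| \, .
\]
Since $d_t = k - |I_t| \geq 0$ always, we get $d_{T+1} \geq 0$, and since $d_1 = k - |C_1 \cap \hat{C}_1| \leq k$, we obtain $|\HD{-1}{}| - |\HD{1}{}| = d_1 - d_{T+1} \leq d_1 \leq k$, which is exactly the claim. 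This is essentially the whole argument.

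First I would state explicitly that $t \in \HD{x}{}$ iff $d_{t+1} - d_t = x$ (immediate from the definition of $\HD{x}{y}$ and $\HD{x}{} = \bigcup_y \HD{x}{y}$), and that by Lemma \ref{lemma_values_scope} the only nonempty such sets have $x \in \{-1, 0, 1\}$, so the rounds in $[T]$ are partitioned into $\HD{-1}{}$, $\HD{0}{}$, $\HD{1}{}$. Then I would write the telescoping sum above. Next I would invoke the two boundary facts: $d_{T+1} \geq 0$ because $d_t = k - |I_t|$ and $|I_t| = |C_t \cap \hat C_t| \leq k$; and $d_1 \leq k$ for the same reason (here it matters that the analysis does \emph{not} assume $C_1 = \hat C_1$, so $d_1$ need not be $0$ — this is precisely where the additive $k$ is spent). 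Rearranging gives the bound.

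I do not anticipate a real obstacle here — the lemma is a bookkeeping step. The only thing to be careful about is making sure the telescoping is stated over the correct index range ($t$ from $1$ to $T$, with $d_1$ and $d_{T+1}$ as the endpoints) and that the contributions of $\HD{0}{}$ vanish in the sum (they do, since their increment is $0$). If one wanted to phrase it as a charging argument instead, one would charge each round in $\HD{-1}{}$ either to a distinct earlier round in $\HD{1}{}$ or to one of the $k$ "initial units" of distance $d_1$, but the telescoping version is cleaner and I would go with that.
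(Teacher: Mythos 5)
Your proof is correct and is exactly the argument the paper gives (the paper states it more tersely): since $d_1 \leq k$ and $d_t \geq 0$ for all $t$, the telescoping sum of the increments $d_{t+1}-d_t$ bounds the number of decrements by the number of increments plus $k$. No gaps.
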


\begin{proof}
By definition, $\HD{-1}{}$ is the set of rounds $t$ with $d_{t + 1} - d_{t} < 0$, and $\HD{1}{}$ is the set of rounds $t$ with $d_{t + 1} - d_{t} > 0$. Since $d_{1} \leq k$ and $d_{t} \geq 0$ for every $t \in [T + 1]$, this proposition holds.
\end{proof}

\Cref{lemma_up_down} allows us to bound $\lvert \HD{-1}{1} \rvert$ with $\lvert \HD{1}{} \rvert$. The following result follows from the mechanisms of {\FitF} and {\Sim} in choosing the page for eviction when cache miss happens.

\begin{lemma} \label{lemma_observing_crossing_property}
For each round $t \in \HD{1}{}$, we have $\NextTime{t}{\hat{e}_{t}} < \NextTime{t}{e_{t}}$ and $\predictHat{t}{e_{t}} \leq \predictHat{t}{\hat{e}_{t}}$.
\end{lemma}

For each round $t \in \HD{1}{}$, we say that $t$ \emph{blames} another round $t'$ specified as follows. Let $r = \LastRequest{t}{\hat{e}_{t}}$ and $r' = \LastRequest{t}{e_{t}}$, then the round blamed by $t$ is
\begin{equation*}
    t' = \begin{cases}
    r & \text{if } ( r \neq \NULL ) \, \wedge \, ( \predict{r} \neq \NextTime{r}{\sigma_{r}} ) \\
    r' & \text{if } \big( r = \NULL \, \vee \, \predict{r} = \NextTime{r}{\sigma_{r}} \big) \, \wedge \, \big( r' \neq \NULL \, \wedge \, \predict{r'} \neq \NextTime{r'}{\sigma_{r'}} \big) \\
    \NULL & \text{otherwise}
    \end{cases}\, .
\end{equation*}

\begin{lemma} \label{lemma_distance1_blame_null}
For each $t \in \HD{1}{}$, let $t'$ be the round blamed by $t$. If $t' = \NULL$, then $\LastRequest{t}{\hat{e}_{t}} = \NULL$. If $\LastRequest{t}{\hat{e}_{t}} \neq \NULL$, then there exists a round $t''$ such that $\lbrace t', t'' \rbrace \in \INV$.
\end{lemma}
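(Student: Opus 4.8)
The plan is to reduce both assertions to a short case analysis on the shape of the two remedy predictions at round $t$. Write $r = \LastRequest{t}{\hat{e}_{t}}$ and $r' = \LastRequest{t}{e_{t}}$. Since $t \in \HD{1}{}$, \Cref{lemma_sets_description} gives that both \FitF{} and \Sim{} evict in round $t$, that $e_{t} \neq \hat{e}_{t}$, and that $e_{t}, \hat{e}_{t} \in I_{t}$; as neither evicted page equals $\sigma_{t}$, the page $\hat{e}_{t}$ (resp.\ $e_{t}$) is not requested in $(r, t]$ (resp.\ $(r', t]$), so $\NextTime{r}{\sigma_{r}} = \NextTime{t}{\hat{e}_{t}}$ whenever $r \neq \NULL$, and $\NextTime{r'}{\sigma_{r'}} = \NextTime{t}{e_{t}}$ whenever $r' \neq \NULL$. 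Iterating \eqref{formula_virtual_NAT} over a request-free trailing interval gives the \emph{shape fact}: for any round $\tau$ and page $i$ with $i \neq \sigma_{\tau}$, either $\LastRequest{\tau}{i} = \NULL$ and $\predictHat{\tau}{i} = Z + 1$ (this is \Cref{lemma_last_to_next}), or $\predictHat{\tau}{i} \in \{ \predict{\LastRequest{\tau}{i}}, Z \}$. Together with the standing bound $\predict{\tau} \leq T + n < Z$ and with \Cref{lemma_observing_crossing_property} ($\NextTime{t}{\hat{e}_{t}} < \NextTime{t}{e_{t}}$ and $\predictHat{t}{e_{t}} \leq \predictHat{t}{\hat{e}_{t}}$), these will let us determine which branch produces the blamed round $t'$.

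For the first assertion I would prove the contrapositive: assume $r \neq \NULL$ and show $t' \neq \NULL$. If $\predict{r} \neq \NextTime{r}{\sigma_{r}}$ then $t' = r \neq \NULL$ by definition; otherwise $\predict{r} = \NextTime{r}{\sigma_{r}}$ and \Cref{lemma_virtual_NAT_changes} (third bullet) gives $\predictHat{t}{\hat{e}_{t}} = \NextTime{t}{\hat{e}_{t}}$, so $\predictHat{t}{e_{t}} \leq \NextTime{t}{\hat{e}_{t}} < \NextTime{t}{e_{t}} < Z + 1$. The strict inequality forces $r' \neq \NULL$ (else $\predictHat{t}{e_{t}} = Z + 1$), and since $\predictHat{t}{e_{t}} \neq \NextTime{t}{e_{t}} = \NextTime{r'}{\sigma_{r'}}$, the contrapositive of \Cref{lemma_last_to_next} (second bullet) gives $\predict{r'} \neq \NextTime{r'}{\sigma_{r'}}$, so the definition of $t'$ selects $t' = r' \neq \NULL$. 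This argument also shows that, under the hypothesis $r \neq \NULL$ of the second assertion, $t' \in \{ r, r' \}$ and $r' \neq \NULL$ (the latter because $\predictHat{t}{e_{t}} \leq \predictHat{t}{\hat{e}_{t}} \leq Z < Z + 1$). I would then split on $\predictHat{t}{\hat{e}_{t}} \in \{ \predict{r}, Z \}$. If $\predictHat{t}{\hat{e}_{t}} = \predict{r}$, then $\predictHat{t}{e_{t}} \leq \predict{r} < Z$ rules out $\predictHat{t}{e_{t}} = Z$, forcing $\predictHat{t}{e_{t}} = \predict{r'}$; combining $\predict{r'} \leq \predict{r}$ with $\NextTime{r}{\sigma_{r}} = \NextTime{t}{\hat{e}_{t}} < \NextTime{t}{e_{t}} = \NextTime{r'}{\sigma_{r'}}$ yields $(r, r') \in \INV$, and since $r \neq r'$ (last requests of distinct pages) and $t' \in \{ r, r' \}$, taking $t''$ to be the other element of $\{ r, r' \}$ settles this case.

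The remaining, and principal, case is $\predictHat{t}{\hat{e}_{t}} = Z$, which I would handle by unwinding the recursion \eqref{formula_virtual_NAT}. From $\predictHat{r}{\hat{e}_{t}} = \predict{r}$ and the fact that $\hat{e}_{t}$ is not requested in $(r, t]$, every update of $\predictHat{\cdot}{\hat{e}_{t}}$ on this interval either leaves the value unchanged or sets it to $Z$; hence there is a least $s \in (r, t]$ at which the ``$Z$'' clause fires, and at that round $\predictHat{s - 1}{\hat{e}_{t}} = \predict{r}$. The firing condition then gives $\predict{r} \leq s$, $\hat{e}_{t} \neq \sigma_{s}$, and $\predict{r} \leq \predictHat{s - 1}{\sigma_{s}} < Z$. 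Setting $\rho = \LastRequest{s}{\sigma_{s}}$, the bound $\predictHat{s - 1}{\sigma_{s}} < Z$ forces (by the shape fact, and directly when $\sigma_{s - 1} = \sigma_{s}$) that $\rho \neq \NULL$ and $\predictHat{s - 1}{\sigma_{s}} = \predict{\rho}$, so $\predict{\rho} \geq \predict{r}$; and since $\rho$ is the last occurrence of $\sigma_{s}$ before $s$ we have $\NextTime{\rho}{\sigma_{\rho}} = s$. Now $\predict{r} \leq s \leq t < \NextTime{r}{\sigma_{r}}$ delivers at once $\NextTime{\rho}{\sigma_{\rho}} = s < \NextTime{r}{\sigma_{r}}$, hence $(\rho, r) \in \INV$, and $\predict{r} \neq \NextTime{r}{\sigma_{r}}$, hence $t' = r$; moreover $\rho \neq r$ because $\sigma_{\rho} = \sigma_{s} \neq \hat{e}_{t} = \sigma_{r}$. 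So $t'' = \rho$ works, completing the proof. I expect this last case to be the crux: one has to extract, from the single Boolean firing condition of \eqref{formula_virtual_NAT}, an honest round $\rho$ whose own next-arrival time lies strictly below $\NextTime{r}{\sigma_{r}}$ while its prediction dominates $\predict{r}$ — that is, to recognize that the very mechanism that corrupts $\predictHat{\cdot}{\hat{e}_{t}}$ into $Z$ is itself witnessed by an inverted pair.
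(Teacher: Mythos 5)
Your proof is correct and follows essentially the same route as the paper's: the first assertion via the crossing property of Lemma~\ref{lemma_observing_crossing_property} together with Lemma~\ref{lemma_last_to_next}, and the second by splitting on whether $\predictHat{t}{\hat{e}_{t}}$ equals $\predict{r}$ or $Z$, extracting the inversion $\lbrace r, r' \rbrace$ in the first case and, in the second, locating the first round where the $Z$-clause of Eq.~\eqref{formula_virtual_NAT} fires to produce the inverted pair $\lbrace \rho, r \rbrace$ (the paper's $\lbrace r_{1}, r \rbrace$). The only differences are cosmetic (organizing the first part by the correctness of $\predict{r}$ rather than by whether $r' = \NULL$, and making the ``shape fact'' explicit).
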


\begin{proof}
For the first claim, suppose on the contrary $r = \LastRequest{t}{\hat{e}_{t}} \neq \NULL$. Now consider two cases, $r' \neq \NULL$ and $r' = \NULL$, where $r' = \LastRequest{t}{e_{t}}$.
\begin{itemize}
    \item $r' \neq \NULL$: Since $t' = \NULL$, in such a case we have $\predict{r} = \NextTime{r}{\sigma_{r}}$ and $\predict{r'} = \NextTime{r'}{\sigma_{r'}}$. By \Cref{lemma_last_to_next}, it holds that $\predictHat{t}{\hat{e}_{t}} = \predict{r} = \NextTime{r}{\sigma_{r}} = \NextTime{t}{\hat{e}_{t}}$. Similarly, we have $\predictHat{t}{e_{t}} = \NextTime{t}{e_{t}}$. This conflicts with \Cref{lemma_observing_crossing_property}.
    \item $r' = \NULL$: By \Cref{lemma_last_to_next}, in such a case we have $\predictHat{t}{e_{t}} = Z + 1$, because obviously $e_{t} \neq \sigma_{t}$. As it still holds that $\predictHat{t}{\hat{e}_{t}} = \NextTime{t}{\hat{e}_{t}} < Z$, we have $\predictHat{t}{\hat{e}_{t}} < \predictHat{t}{e_{t}}$, which conflicts with \Cref{lemma_observing_crossing_property}.
\end{itemize}

For the second claim, since $\LastRequest{t}{\hat{e}_{t}} \neq \NULL$, \Cref{lemma_last_to_next} indicates that $\predictHat{t}{\hat{e}_{t}} \leq Z$. Now consider the following two cases.
\begin{itemize}
    \item $\predictHat{t}{\hat{e}_{t}} < Z$: By \Cref{lemma_observing_crossing_property}, in such a case we also have $\predictHat{t}{e_{t}} < Z$. It can be inferred from Eq.~\eqref{formula_virtual_NAT} that $\predictHat{t}{\hat{e}_{t}} = \predict{r}$ and $\predictHat{t}{e_{t}} = \predict{r'}$. Notice that the second equality holds because $\predictHat{t}{e_{t}} < Z$ means that $r' \neq \NULL$. Then by \Cref{lemma_observing_crossing_property}, we have $\predict{r'} = \predictHat{t}{e_{t}} \leq \predictHat{t}{\hat{e}_{t}} = \predict{r}$ and $\NextTime{r}{\sigma_{r}} = \NextTime{t}{\hat{e}_{t}} < \NextTime{t}{e_{t}} = \NextTime{r'}{\sigma_{r'}}$. This means that the pair $\lbrace r, r' \rbrace$ is an inversion. Since $\LastRequest{t}{\hat{e}_{t}} = \NULL$, we have either $t' = r$ or $t' = r'$. By taking
    \begin{equation*}
        t'' = \begin{cases}
        r' & \text{if } t' = r \\
        r & \text{if } t' = r'
        \end{cases} \, ,
    \end{equation*}
    this claim is proved.
    
    \item $\predictHat{t}{\hat{e}_{t}} = Z$: Let $t_{1}$ be the first round in $(r, t]$ so that $\predictHat{t_{1}}{\hat{e}_{t}} = Z$. Then Eq.~\eqref{formula_virtual_NAT} indicates that $\predictHat{t_{1} - 1}{\sigma_{t_{1}}} < Z$. By \Cref{lemma_last_to_next}, we have $r_{1} = \LastRequest{t_{1}}{\sigma_{t_{1}}} \neq \NULL$. Then, $\NextTime{r_{1}}{\sigma_{r_{1}}} = t_{1} < \NextTime{t}{\hat{e}_{t}} = \NextTime{r}{\sigma_{r}}$. Moreover, $\predictHat{t_{1} - 1}{\sigma_{r}} \leq \predictHat{t_{1} - 1}{\sigma_{t_{1}}} < Z$ means that $\predict{r} = \predictHat{t_{1} - 1}{\sigma_{r}}$ and $\predict{r_{1}} = \predictHat{t_{1} - 1}{\sigma_{t_{1}}}$. Therefore, the pair $\lbrace r_{1}, r\rbrace$ is an inversion. Then this claim is established if $t' = r$. This equation holds because (1) $r = \LastRequest{t}{\hat{e}_{t}} \neq \NULL$, and (2) \Cref{lemma_last_to_next} implies that $\NextTime{r}{\sigma_{r}} \neq \predict{r}$, because otherwise $\predictHat{t}{\hat{e}_{t}} = \NextTime{t}{\hat{e}_{t}} < Z$.
\end{itemize}
This completes the proof.
\end{proof}

\begin{lemma} \label{lemma_not_over_blamed}
For each round $t' \neq \NULL$, it can be blamed by at most two rounds in $\HD{1}{}$.
\end{lemma}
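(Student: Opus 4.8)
The plan is to fix a round $t' \neq \NULL$, set $p = \sigma_{t'}$, and observe that, by the definition of ``blames'', any round $t \in \HD{1}{}$ that blames $t'$ must satisfy either $t' = \LastRequest{t}{\hat{e}_{t}}$ (so that $\hat{e}_{t} = p$) or $t' = \LastRequest{t}{e_{t}}$ (so that $e_{t} = p$). It therefore suffices to establish two one-sided bounds: (a)~at most one round $t \in \HD{1}{}$ has $\LastRequest{t}{\hat{e}_{t}} = t'$, and (b)~at most one round $t \in \HD{1}{}$ has $\LastRequest{t}{e_{t}} = t'$; summing the two yields the desired bound of two. Recall from \Cref{lemma_sets_description} that every $t \in \HD{1}{}$ has $\hat{e}_{t} \neq e_{t}$ and incurs a cache miss in both {\FitF} and {\Sim}, facts that are used below (the cases (a) and (b) can never be realized by the same round, since that would force $\hat{e}_{t} = p = e_{t}$, though this observation is not needed for the bound).

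For (a), suppose towards a contradiction that two rounds $t_{1} < t_{2}$ in $\HD{1}{}$ both satisfy $\LastRequest{t}{\hat{e}_{t}} = t'$. Then $\hat{e}_{t_{1}} = \hat{e}_{t_{2}} = p$, and since $t'$ is the last request of $p$ strictly before $t_{2}$, the page $p$ is not requested at any round of $(t', t_{2})$, an interval that contains $t_{1}$. The cache miss of {\Sim} in round $t_{1}$ gives $\sigma_{t_{1}} \notin \hat{C}_{t_{1}}$, whereas $p = \hat{e}_{t_{1}} \in \hat{C}_{t_{1}}$, so $\sigma_{t_{1}} \neq p$; combining this with the previous sentence, $p$ is not requested at any round of $[t_{1}, t_{2})$. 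Since {\Sim} is lazy (as recorded in the proof of \Cref{lemma_values_scope}), it fetches $p$ into its cache only in a round where $p$ is requested; as $p$ is evicted by {\Sim} in round $t_{1}$ (so $p \notin \hat{C}_{t_{1}+1}$) and not requested again before $t_{2}$, it follows that $p \notin \hat{C}_{t}$ for every $t \in [t_{1}+1, t_{2}]$, contradicting $\hat{e}_{t_{2}} = p$, which requires $p \in \hat{C}_{t_{2}}$. This establishes (a).

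Part (b) follows by the identical argument with {\FitF}, $C_{t}$, $e_{t}$ replacing {\Sim}, $\hat{C}_{t}$, $\hat{e}_{t}$, using that {\FitF} is lazy as well and that {\FitF} incurs a cache miss in every round of $\HD{1}{}$. I do not expect a genuine obstacle here; the only points requiring a bit of care are pinning down the exact ``no request of $p$'' interval as $[t_{1}, t_{2})$ — which uses both that $t'$ is the \emph{last} request of $p$ before $t_{2}$ and that $\sigma_{t_{1}} \neq p$ — and invoking laziness to conclude that an evicted page cannot re-enter the cache without an intervening request to it, both of which are already available from the definitions and from the proof of \Cref{lemma_values_scope}.
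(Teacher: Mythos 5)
Your proof is correct and follows essentially the same route as the paper's: split the rounds blaming $t'$ according to whether $t' = \LastRequest{t}{\hat{e}_{t}}$ or $t' = \LastRequest{t}{e_{t}}$, and for each case argue that a page evicted at two rounds $t_{1} < t_{2}$ must be requested in between (by laziness), which is incompatible with $t'$ being the last request before both. The paper phrases the second half of this argument via the resulting inequality $\LastRequest{t_2}{\cdot} > t'$ rather than via cache membership at $t_{2}$, but the underlying idea is identical.
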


\begin{proof}
Suppose that $t'$ is blamed by $t \in \HD{1}{}$ such that $t' = \LastRequest{t}{e_{t}}$. Then for any round $t'' \in \HD{1}{}$ with $t' < t'' < t$, it cannot blame $t'$ by taking $t' = \LastRequest{t''}{e_{t''}}$. This is because if there exists such a round $t''$, then by definition we have $e_{t} = e_{t''}$. In such a case, there must exist a round $\tilde{t} \in (t'', t)$ with $\sigma_{\tilde{t}} = e_{t}$, otherwise $e_{t} \notin C_{t}$. This conflicts with the definition that $\LastRequest{t}{e_{t}}$ is the last round before $t$ when $e_{t}$ is requested. For any $t'' \in \HD{1}{}$ with $t'' > t$, it cannot blame $t'$ by taking $t' = \LastRequest{t''}{e_{t''}}$, either. Still, if $e_{t} = e_{t''}$, there must exist a round $\tilde{t} \in (t, t'')$ with $\sigma_{\tilde{t}} = e_{t}$. In such a case, $\LastRequest{t''}{e_{t''}} \geq \tilde{t} > t > t'$. The case $t' = \LastRequest{t}{\hat{e}_{t}}$ is symmetric with the case above.
\end{proof}


\begin{lemma} \label{lemma_blaming_2error}
It holds that $\lvert \HD{1}{} \rvert  \leq 2 \cdot \eta + k$ and $\lvert \HD{-1}{1} \rvert \leq 2 \cdot \eta + 2k$.
\end{lemma}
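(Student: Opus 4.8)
The plan is to convert the bound on $\lvert \HD{1}{} \rvert$ into a charging argument driven by the ``blame'' assignment introduced just above the statement, and then to obtain the bound on $\lvert \HD{-1}{1} \rvert$ as an immediate corollary via \Cref{lemma_up_down}.

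First I would partition $\HD{1}{}$ according to whether $\LastRequest{t}{\hat{e}_{t}} = \NULL$. For any round $t \in \HD{1}{}$ we have $\hat{e}_{t} \neq \perp$ by \Cref{lemma_sets_description}, so the rounds $t \in \HD{1}{}$ with $\LastRequest{t}{\hat{e}_{t}} = \NULL$ form a subset of $\lbrace t \mid \hat{e}_{t} \neq \perp \wedge \LastRequest{t}{\hat{e}_{t}} = \NULL \rbrace$, whose size is at most $k$ by \Cref{lemma_not_many_null}.

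Next, for each remaining round $t \in \HD{1}{}$, i.e.\ one with $\LastRequest{t}{\hat{e}_{t}} \neq \NULL$, let $t'$ be the round blamed by $t$. The contrapositive of the first claim of \Cref{lemma_distance1_blame_null} gives $t' \neq \NULL$. By the definition of the blame relation, $t' \neq \NULL$ forces $\NextTime{t'}{\sigma_{t'}} \neq \predict{t'}$, so round $t'$ has a prediction error; and the second claim of \Cref{lemma_distance1_blame_null} supplies a round $t''$ with $\lbrace t', t'' \rbrace \in \INV$, so $t'$ also lies in an inverted pair. Hence $t'$ is counted by the refined quantity $\eta$ of \Cref{section:comparison}. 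Since each such $t'$ is blamed by at most two rounds of $\HD{1}{}$ (\Cref{lemma_not_over_blamed}), the number of rounds $t \in \HD{1}{}$ with $\LastRequest{t}{\hat{e}_{t}} \neq \NULL$ is at most $2 \eta$. Summing the two parts gives $\lvert \HD{1}{} \rvert \leq 2 \eta + k$. For the second bound it then remains only to note that $\HD{-1}{1} \subseteq \HD{-1}{}$, so \Cref{lemma_up_down} yields $\lvert \HD{-1}{1} \rvert \leq \lvert \HD{-1}{} \rvert \leq \lvert \HD{1}{} \rvert + k \leq 2 \eta + 2k$.

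The technically substantial work has already been carried out in \Cref{lemma_distance1_blame_null} and \Cref{lemma_not_over_blamed}, so I do not expect a genuine obstacle here; the only points requiring care are the choice of partition — splitting on $\LastRequest{t}{\hat{e}_{t}} = \NULL$ rather than on $t' = \NULL$, so that the ``$\NULL$-blaming'' rounds are exactly the ones governed by \Cref{lemma_not_many_null} — and checking that a blamed non-null round simultaneously satisfies both conjuncts in the refined definition of $\eta$, namely a prediction error \emph{and} membership in an inverted pair.
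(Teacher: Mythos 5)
Your proof is correct and is exactly the argument the paper intends: the paper's own proof simply cites \Cref{lemma_not_many_null}, \Cref{lemma_distance1_blame_null}, \Cref{lemma_not_over_blamed}, and \Cref{lemma_up_down}, and you have filled in precisely the right partition (on $\LastRequest{t}{\hat{e}_{t}} = \NULL$) and the right charging of non-null blamed rounds to the refined $\eta$.
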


\begin{proof}
The statement $\lvert \HD{1}{} \rvert  \leq 2 \cdot \eta + k$ follows from \Cref{lemma_not_many_null}, \Cref{lemma_distance1_blame_null} and \Cref{lemma_not_over_blamed}. The statement $\lvert \HD{-1}{1} \rvert \leq 2 \cdot \eta + 2k$ then follows from \Cref{lemma_up_down}.
\end{proof}

Next, we analyze $\lvert \HD{0}{1} \rvert$ with the notion of troublemakers defined in \Cref{section:defs}. In particular, for two rounds $t, t'$ with $t' < t$, we say $t'$ is the \emph{parent} of $t$ and $t$ is the \emph{child} of $t'$ if $t \in \HD{0}{1}$ and $t' = \max\lbrace t'' < t \mid \sigma_{t} \in \hat{C}_{t''} \rbrace$.


\begin{lemma} \label{lemma_1parent_1child}
Every round $t \in \HD{0}{1} \setminus \lbrace \tilde{t} \mid \LastRequest{\tilde{t}}{\sigma_{\tilde{t}}} = \NULL \rbrace$ has one parent $t' \in \Gamma \cup \HD{1}{}$, and any round $t'$ has at most one child.
\end{lemma}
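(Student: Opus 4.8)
The plan is to prove both halves of \Cref{lemma_1parent_1child} directly from the definition of a parent, relying only on the fact that \Sim{} and \FitF{} are lazy paging algorithms (as used in the proof of \Cref{lemma_values_scope}) together with the defining properties of a round $t \in \HD{0}{1}$: $\sigma_t \notin \hat{C}_t$ and $\sigma_t \in C_t$. Fix $t \in \HD{0}{1}$ with $\LastRequest{t}{\sigma_t} \neq \NULL$. First I would check that the parent is well defined: since $\sigma_t$ is requested at some round $r < t$, it lies in $\hat{C}_{r+1}$, and $r \neq t-1$ (otherwise $\sigma_t \in \hat{C}_t$, contradicting $\sigma_t \notin \hat{C}_t$), so $r+1 < t$ witnesses that $\{t'' < t \mid \sigma_t \in \hat{C}_{t''}\}$ is nonempty and its maximum $t'$ exists. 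By maximality of $t'$, and using $\sigma_t \notin \hat{C}_t$ to cover the boundary case $t'+1 = t$, we get $\sigma_t \in \hat{C}_{t'} \setminus \hat{C}_{t'+1}$; since the only page that can leave \Sim{}'s cache in a round is the one it evicts, this forces $\hat{e}_{t'} = \sigma_t$, and in particular $\sigma_{t'} \neq \sigma_t$. The same maximality argument shows $\sigma_t$ is not requested in $(t', t)$, so $\NextTime{t'}{\hat{e}_{t'}} = \NextTime{t'}{\sigma_t} = t \in [T]$.

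Next I would verify the remaining conjuncts of the troublemaker condition~\eqref{formula_troublemaker_def}. From $\sigma_t \in C_t$ we get $\hat{e}_{t'} = \sigma_t \in C_{\NextTime{t'}{\hat{e}_{t'}}}$; and since $\sigma_t$ is not requested in $[t', t)$, a backward induction along \FitF{}'s cache evolution (any page of $C_{t''+1}$ other than $\sigma_{t''}$ already belongs to $C_{t''}$) shows $\sigma_t \in C_{t'}$, hence $\hat{e}_{t'} \in I_{t'}$. Thus the first four conjuncts of~\eqref{formula_troublemaker_def} hold, and I finish with a case split on $e_{t'}$: if $e_{t'} = \perp$ or $e_{t'} \notin I_{t'}$, then the last conjunct holds too and $t' \in \Gamma$; otherwise $e_{t'} \neq \perp$ and $e_{t'} \in I_{t'}$, and here $e_{t'} \neq \sigma_t$ (if $e_{t'} = \sigma_t$ then $\sigma_t \notin C_{t'+1}$, contradicting the same backward induction), so $e_{t'} \neq \hat{e}_{t'}$ and \Cref{lemma_sets_description} places $t'$ in $\HD{1}{0} = \HD{1}{}$. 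In all cases $t' \in \Gamma \cup \HD{1}{}$.

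For the ``at most one child'' claim, suppose $t'$ is the parent of two rounds $t_1 < t_2$ in $\HD{0}{1}$. By the identity established above, $\hat{e}_{t'} = \sigma_{t_1}$ and $\hat{e}_{t'} = \sigma_{t_2}$, hence $\sigma_{t_1} = \sigma_{t_2}$. Since $\sigma_{t_1} \notin \hat{C}_{t_1}$, \Sim{} fetches $\sigma_{t_1}$ at round $t_1$, so $\sigma_{t_1} \in \hat{C}_{t_1+1}$ with $t' < t_1 + 1 \leq t_2$; if $t_1 + 1 < t_2$ this contradicts the maximality of $t'$ in the definition of $t_2$'s parent (as $\sigma_{t_2} = \sigma_{t_1} \in \hat{C}_{t_1+1}$), and if $t_1 + 1 = t_2$ it contradicts $\sigma_{t_2} \notin \hat{C}_{t_2}$. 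I expect the middle paragraph to be the main obstacle — cleanly extracting $\hat{e}_{t'} = \sigma_t$, running the \FitF{} backward induction (which must not assume $C_1 = \hat{C}_1$), and invoking \Cref{lemma_sets_description} in exactly the case where $t'$ fails the troublemaker test; the remaining pieces are short bookkeeping.
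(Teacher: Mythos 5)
Your proposal is correct and follows essentially the same route as the paper's proof: existence of the parent via $\sigma_t \in \hat{C}_{r+1}$ with $r = \LastRequest{t}{\sigma_t}$, the identification $\hat{e}_{t'} = \sigma_t$ from $\sigma_t \in \hat{C}_{t'} \setminus \hat{C}_{t'+1}$, the case split on $e_{t'}$ to land in $\Gamma$ or (via \Cref{lemma_sets_description}) in $\HD{1}{}$, and the same uniqueness argument for children. You merely spell out in more detail the backward induction giving $\sigma_t \in C_{t'}$ and the boundary cases of the maximality argument, which the paper states more tersely.
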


\begin{proof}
Since $t \notin \lbrace \tilde{t} \mid \LastRequest{\tilde{t}}{\sigma_{\tilde{t}}} = \NULL \rbrace$, the page $\sigma_{t}$ is requested at round $r = \LastRequest{t}{i} \in [1, t)$, which gives that $\sigma_{t} \in \hat{C}_{r + 1}$. As $t \in \HD{0}{1}$, we know that $\sigma_{t} \notin \hat{C}_{t}$. Therefore, there must exist a round $t'' \in [r + 1, t)$ with $\sigma_{t} = \hat{e}_{t''} \in \hat{C}_{t''}$. Since $\lbrace t'' \mid t'' < t \, \wedge \, \sigma_{t} \in \hat{C}_{t''} \rbrace \neq \emptyset$, the existence of the parent round of $t$ is ensured.

For the parent $t'$ of round $t$, we know that $\hat{e}_{t'} = \sigma_{t}$, because $\sigma_{t} \in \hat{C}_{t'} \setminus \hat{C}_{t' + 1}$. Then we have $\hat{e}_{t'} \neq \perp$ and $\hat{e}_{t'} \in I_{t'}$, where the second equality holds because $\sigma_{t} \in C_{t}$ and $\sigma_{t''} \neq \sigma_{t}$ for every $t'' \in [t', t - 1]$. Then by the definitions of the parent round and $\HD{0}{1}$, we have $A_{t'}(\hat{e}_{t'}) = t \in [T]$ and $\hat{e}_{t'} \in C_{t}$, which means that $\hat{e}_{t'} \in C_{A_{t'}(\hat{e}_{t'})}$. Therefore, $t' \in \Gamma$ if $e_{t'} = \perp$ or $e_{t'} \notin I_{t'}$. If $e_{t'} \neq \perp$ and $e_{t'} \in I_{t}$, then we have $e_{t'} \neq \hat{e}_{t'}$, because otherwise $\hat{e}_{t'} \notin C_{t}$. By \Cref{lemma_sets_description}, in such a case we have $t' \in \HD{1}{}$.

It remains to prove that any round $t'$ has at most one child. Suppose that $t'$ has two children $t_{1}, t_{2}$ with $t_{1} < t_{2}$. In such a case, $\sigma_{t_{2}} = \hat{e}_{t'} = \sigma_{t_{1}} \in \hat{C}_{t_{1} + 1}$, which conflicts with the definition of the parent round.
\end{proof}

\begin{theorem} \label{theorem_reduction}
$\cost(\mathtt{Sim}) - \mathtt{OPT} \leq 4\eta + 4k + \lvert \Gamma \rvert - \lvert \HD{}{-1} \rvert$.
\end{theorem}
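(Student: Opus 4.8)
The plan is to assemble the bound purely from the combinatorial lemmas already established in this section, treating the theorem as the final reduction step. First I would note that, by the definition of $\delta_{t}$ and Belady's optimality of \FitF{}, $\cost(\Sim) - \OPT = \sum_{t \in [T]} \delta_{t}$, and since $\delta_{t} \in \{-1,0,1\}$ and $d_{t+1}-d_{t} \in \{-1,0,1\}$ (Lemma~\ref{lemma_values_scope}), this sum equals $\lvert \HD{}{1} \rvert - \lvert \HD{}{-1} \rvert$. Lemma~\ref{lemma_sets_description} gives $\HD{1}{1} = \emptyset$, so the partition $\HD{}{1} = \HD{-1}{1} \cup \HD{0}{1} \cup \HD{1}{1}$ yields $\cost(\Sim) - \OPT = \lvert \HD{-1}{1} \rvert + \lvert \HD{0}{1} \rvert - \lvert \HD{}{-1} \rvert$. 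It therefore suffices to bound the two positive terms.

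The term $\lvert \HD{-1}{1} \rvert$ is already handled: Lemma~\ref{lemma_blaming_2error} (itself relying on Lemma~\ref{lemma_up_down} and the bound $\lvert \HD{1}{} \rvert \le 2\eta + k$) gives $\lvert \HD{-1}{1} \rvert \le 2\eta + 2k$. For $\lvert \HD{0}{1} \rvert$ I would split $\HD{0}{1}$ into the rounds $t$ with $\LastRequest{t}{\sigma_{t}} = \NULL$ and all the rest. The first set has size at most $k$ by Lemma~\ref{lemma_not_many_null2}. For the remaining rounds, Lemma~\ref{lemma_1parent_1child} says each has a parent lying in $\Gamma \cup \HD{1}{}$, and each round is the parent of at most one child; hence $t \mapsto \text{parent}(t)$ is an injection into $\Gamma \cup \HD{1}{}$, so this set has size at most $\lvert \Gamma \rvert + \lvert \HD{1}{} \rvert \le \lvert \Gamma \rvert + 2\eta + k$ by Lemma~\ref{lemma_blaming_2error}. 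Adding the two contributions, $\lvert \HD{0}{1} \rvert \le \lvert \Gamma \rvert + 2\eta + 2k$.

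Putting it together, $\cost(\Sim) - \OPT = \lvert \HD{-1}{1} \rvert + \lvert \HD{0}{1} \rvert - \lvert \HD{}{-1} \rvert \le (2\eta + 2k) + (\lvert \Gamma \rvert + 2\eta + 2k) - \lvert \HD{}{-1} \rvert = 4\eta + 4k + \lvert \Gamma \rvert - \lvert \HD{}{-1} \rvert$, which is the claim. Given the machinery of Sections~\ref{section:defs}--\ref{section:reduction2-troublemaker}, there is no genuine obstacle here — the argument is bookkeeping. The only points that need care are: confirming that the parent map of Lemma~\ref{lemma_1parent_1child} lands in $\Gamma \cup \HD{1}{}$ (not merely in $\Gamma$), so that both $\lvert \Gamma \rvert$ and $\lvert \HD{1}{} \rvert$ appear; correctly accounting for the exceptional rounds with $\LastRequest{t}{\sigma_{t}} = \NULL$ via Lemma~\ref{lemma_not_many_null2}; and keeping track of the several additive $O(k)$ slacks so the final constant $4k$ is honest. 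The substantive work of bounding $\lvert \Gamma \rvert$ and subtracting against $\lvert \HD{}{-1} \rvert$ is deferred to the subsequent analysis; this theorem merely reduces the cost bound to counting troublemakers.
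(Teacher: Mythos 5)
Your proof is correct and matches the paper's own argument: the same decomposition $\cost(\Sim)-\OPT = \lvert \HD{-1}{1}\rvert + \lvert \HD{0}{1}\rvert - \lvert \HD{}{-1}\rvert$, the same use of Lemma~\ref{lemma_blaming_2error} for $\lvert\HD{-1}{1}\rvert$ and $\lvert\HD{1}{}\rvert$, and the same bound $\lvert\HD{0}{1}\rvert \le \lvert\Gamma\rvert + \lvert\HD{1}{}\rvert + k$ via the parent injection of Lemma~\ref{lemma_1parent_1child} together with Lemma~\ref{lemma_not_many_null2}. Nothing to add.
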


\begin{proof}
By using \Cref{lemma_blaming_2error} and \Cref{lemma_1parent_1child}, we have
\begin{align*}
    \cost(\mathtt{Sim}) - \mathtt{OPT} =&\: \lvert \HD{}{1} \rvert - \lvert \HD{}{-1} \rvert \\
    =&\: \lvert \HD{-1}{1} \rvert + \lvert \HD{0}{1} \rvert - \lvert \HD{}{-1} \rvert \\
    \leq&\: (2\eta + 2k) + (\lvert \HD{1}{} \rvert + \lvert \Gamma \rvert + k) - \lvert \HD{}{-1} \rvert \\
    \leq&\: 4\eta + 4k + \lvert \Gamma \rvert - \lvert \HD{}{-1} \rvert \, .
\end{align*}
In particular, the third transition above follows from \Cref{lemma_not_many_null2}.
\end{proof}
The upper bound on $\lvert \Gamma \rvert - \lvert \HD{}{-1} \rvert$ is studied in the next subsection.


\subsection{Labeling for Troublemakers} \label{section:labeling-troublemakers}

From the high level, the analysis in this part on $\lvert \Gamma \rvert - \lvert \HD{}{-1} \rvert$ is done by showing that each troublemaker $\gamma$ either can be mapped a distinct round in $\HD{}{-1}$, or can be mapped to a round $t < \gamma$ that has a prediction error. We specify the mappings with a procedure called \texttt{Labeling}, which is designed to avoid mapping too many troublemakers to a single prediction error. Notice that \texttt{Labeling} is only used in the analysis, while the paging algorithm is unaware of the output generated by \texttt{Labeling}.


\begin{algorithm}[t]
\caption{Procedure \texttt{Labeling}}
\label{algorithm_labeling}
    \SetAlgoLined
    \KwIn{$\lbrace \NextTime{t}{i} \rbrace_{t \in [T], i \in [n]}$, $\lbrace \predict{t} \rbrace_{t \in [T]}$}
	\KwOut{$\lbrace \lambda_{\gamma} \rbrace_{\gamma \in \Gamma}$}
	\For{each $\gamma \in \Gamma$}{
        Pick an arbitrary element $i \in \Phi_{\gamma}$ and set $\hat{\delta}_{\gamma} = i$;\\
        Set $\lambda_{\gamma}(\gamma + 1) = \hat{\delta}_{\gamma}$; \\
        Set $\tau_{\gamma} = \NextTime{\gamma}{\hat{\delta}_{\gamma}}$; \\
        \If{$\tau_{\gamma} > \NextTime{\gamma}{\hat{e}_{\gamma}}$}{
            Set $\lambda_{\gamma}(t) = \hat{\delta}_{\gamma}$ for all the remaining rounds $t$ in $\ActI{\gamma}$;
        }
        \Else{
            Set $t = \gamma + 2$;\\
            \While{$t < \NextTime{\gamma}{\hat{e}_{\gamma}}$}{
                \If{$\lambda_{\gamma}(t - 1) = \sigma_{t - 1}$}{ \textbf{break}; }
                \If{$\lambda_{\gamma}(t - 1) \neq \hat{e}_{t - 1}$}{
                    Set $\lambda_{\gamma}(t) = \lambda_{\gamma}(t - 1)$;
                }
                \Else{
                    Pick an arbitrary element $i$ from $\Psi_{t}$ and set $\lambda_{\gamma}(t) = i$;\\
                    \If{$\NextTime{t - 1}{\lambda_{\gamma}(t)} > \NextTime{t - 1}{\lambda_{\gamma}(t - 1)}$}{
                        Set $\lambda_{\gamma}(t') = i$ for every round $t' \in [t + 1, \min\lbrace \NextTime{\gamma}{\hat{e}_{\gamma}}, \NextTime{t}{i} \rbrace)$; \\
                        Set $t = \min\lbrace \NextTime{\gamma}{\hat{e}_{\gamma}}, \NextTime{t}{i} \rbrace - 1$; \\
                        \textbf{break};
                    }
                }
                Set $t = t + 1$;
            }
            Set $\lambda_{\gamma}(t') = \perp$ for every $t' \in [t, \NextTime{\gamma}{\hat{e}_{\gamma}})$;
        }
	}
\end{algorithm}

Procedure \texttt{Labeling} takes $\big\langle \lbrace \NextTime{t}{i} \rbrace_{t \in [T], i \in [n]}, \lbrace \predict{t} \rbrace_{t \in [T]} \big\rangle$ as the input, which implicitly encodes the operations of {\FitF} and {\Sim}, and for each troublemaker $\gamma \in \Gamma$, \texttt{Labeling} outputs a labeling function $\lambda_{\gamma}: {\ActI}_{\gamma} \mapsto \big([n] \cup \perp \big)$, which maps each round in the active period ${\ActI}_{\gamma}$ of $\gamma$ to either a page $i$ or an empty value. For each $\gamma \in \Gamma$ and each round $t$ in the active period ${\ActI}_{\gamma}$ with $\lambda_{\gamma}(t) \neq \perp$, we say that page $i = \lambda_{\gamma}(t)$ is labelled by $\gamma$. Procedure \texttt{Labeling} is presented in Algorithm \ref{algorithm_labeling} with notions defined as follows.
\begin{align*}
    \forall t \in [T]:& \qquad
    \ActL_{t} \doteq \bigcup_{\gamma \in \Gamma_{t}} \lambda_{\gamma}(t) \, , \quad \text{and} \quad
    \Phi_{t} \doteq \hat{C}_{t} \setminus \Big( C_{t} \cup \ActL_{t} \Big) \, , \\
    \forall t \in [2, T]:& \qquad
    \Psi_{t} \doteq \hat{C}_{t} \setminus \Big( C_{t} \cup \ActL_{t - 1} \Big) \, .
\end{align*}
Briefly speaking, for every $\gamma \in \Gamma$, \texttt{Labeling} picks an arbitrary page $i = \hat{\delta}_{\gamma}$ from $\Phi_{\gamma}$ and labels $i$ with $\gamma$ for the first round in the active period of $\gamma$, which means setting $\lambda_{\gamma}(\gamma + 1) = i$. For convenience, the NAT of $i$ after $\gamma$ is denoted by $\tau_{\gamma}$. Then we consider the following cases.
\begin{itemize}
    \item $\tau_{\gamma} > \NextTime{\gamma}{\hat{e}_{\gamma}}$: Then the label on $\hat{\delta}_{\gamma}$ is kept throughout the active period $\ActI_{\gamma}$ of $\gamma$.
    \item $\tau_{\gamma} < \NextTime{\gamma}{\hat{e}_{\gamma}}$ and $\hat{\delta}_{\gamma} \in \hat{C}_{\tau_{\gamma}}$: This means that $\hat{\delta}_{\gamma}$ is not evicted by {\Sim} before its NAT after $\gamma$. In such a case, the label on $\hat{\delta}_{\gamma}$ is kept until the last round before its NAT.
    \item $\tau_{\gamma} < \NextTime{\gamma}{\hat{e}_{\gamma}}$ and $\hat{\delta}_{\gamma} \notin \hat{C}_{\tau_{\gamma}}$: In such a case, a labelled page is evicted by {\Sim} before its NAT after $\gamma$. For each round $t$ with such an eviction, we label a new page at round $t + 1$ in $\Psi_{t + 1}$ with $\gamma$. We stop labelling new pages either when the labelled page is requested, or the NAT of the previous labelled page after the previous round is less than the NAT of the current labelled page.
\end{itemize}

Before describing how Procedure \texttt{Labeling} is applied to map each troublemaker to a round with a prediction error or a round in $\HD{}{-1}$, we first prove that this procedure is consistent by showing that $\Phi_{t}$ (resp.~$\Psi_{t}$) is not empty whenever we need to find a new page to label from $\Phi_{t}$ (resp.~$\Psi_{t}$). For every troublemaker round $t \in \Gamma$, define
\begin{equation*}
\zeta_{t} = \begin{cases}
\sigma_{t} & \text{if } e_{t} = \perp \\
e_{t} & \text{otherwise}
\end{cases} \, .
\end{equation*}
Then we have the following results.

\begin{lemma} \label{lemma_special_page_property}
For each troublemaker round $t \in \Gamma$, we have (1) $\zeta_{t} \neq \perp$, (2) $\zeta_{t} \in C_{t} \setminus \hat{C}_{t}$, and (3) for any $\gamma \in \Gamma_{t}$, we have $\zeta_{t} \neq \hat{e}_{\gamma}$.
\end{lemma}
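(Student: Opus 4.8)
The plan is to prove all three assertions by a case analysis on whether \FitF{} evicts a page in round $t$, which is exactly the case split built into the definition of $\zeta_t$. The only ingredients needed are the defining conjunction \eqref{formula_troublemaker_def} of a troublemaker round, the laziness of \FitF{} and \Sim{} (each fetches a page only upon a cache miss, and upon a cache miss the full cache forces exactly one eviction), and the meaning of ``$\gamma$ active at $t$'', namely $\gamma < t < \NextTime{\gamma}{\hat{e}_{\gamma}}$.

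First I would dispatch assertions (1) and (2) together. If $e_t = \perp$, then \FitF{} is idle in round $t$, so by laziness $\sigma_t \in C_t$; and since $t \in \Gamma$ we have $\hat{e}_t \neq \perp$, so \Sim{} suffers a cache miss in round $t$ and hence $\sigma_t \notin \hat{C}_t$. As $\zeta_t = \sigma_t$ in this branch, both claims follow. If $e_t \neq \perp$, then the last clause of \eqref{formula_troublemaker_def} forces $e_t \notin I_t$; but $e_t$ is the page evicted by \FitF{}, so $e_t \in C_t$, and $e_t \in C_t \setminus I_t$ is equivalent to $e_t \notin \hat{C}_t$. As $\zeta_t = e_t$ in this branch, both claims again follow.

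For assertion (3), fix $\gamma \in \Gamma_t$, so $\gamma < t < \NextTime{\gamma}{\hat{e}_{\gamma}}$; in particular $\hat{e}_{\gamma}$ is requested nowhere in the window $(\gamma, \NextTime{\gamma}{\hat{e}_{\gamma}})$, which contains $t$. When $\zeta_t = \sigma_t$, this already yields $\sigma_t \neq \hat{e}_{\gamma}$. When $\zeta_t = e_t$, I would argue by contradiction: if $e_t = \hat{e}_{\gamma}$, then \FitF{} evicts $\hat{e}_{\gamma}$ in round $t$, so $\hat{e}_{\gamma} \notin C_{t+1}$; yet the troublemaker condition on $\gamma$ asserts $\hat{e}_{\gamma} \in C_{\NextTime{\gamma}{\hat{e}_{\gamma}}}$ with $\NextTime{\gamma}{\hat{e}_{\gamma}} > t$, so $\hat{e}_{\gamma}$ would have to be re-fetched by \FitF{} at some round in $[t+1, \NextTime{\gamma}{\hat{e}_{\gamma}})$, which requires a request to $\hat{e}_{\gamma}$ in that window --- impossible by the previous sentence. (Lemma \ref{lemma_observations_on_troublemakers}(1), which already places $\hat{e}_{\gamma}$ in $C_t \setminus \hat{C}_t$, is a convenient companion fact, but the ``no re-fetch'' observation is the operative one here.)

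I expect the only real subtlety --- hence the ``main obstacle'' --- to be this last case of assertion (3): one must exclude the scenario in which \FitF{} evicts $\hat{e}_{\gamma}$ and then re-admits it before its next request after $\gamma$, and this is precisely where the ``first arrival after $\gamma$'' semantics of $\NextTime{\gamma}{\cdot}$ together with the activeness bound $t < \NextTime{\gamma}{\hat{e}_{\gamma}}$ cooperate. Everything else reduces to an immediate unwinding of \eqref{formula_troublemaker_def} and laziness.
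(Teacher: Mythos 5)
Your proof is correct and follows essentially the same route as the paper: the paper dismisses (1) and (2) as obvious (your unwinding of \eqref{formula_troublemaker_def} and laziness is exactly what is implicit there), and for (3) it gives precisely your two observations — $\sigma_t \neq \hat{e}_{\gamma}$ since $t < \NextTime{\gamma}{\hat{e}_{\gamma}}$, and $e_t \neq \hat{e}_{\gamma}$ because $\hat{e}_{\gamma} \in C_{\NextTime{\gamma}{\hat{e}_{\gamma}}}$ forbids any eviction of it by \FitF{} during the active period. No gaps.
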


\begin{proof}
Claim (1) and (2) are obvious. Now consider claim (3). Since $\gamma < t < \NextTime{\gamma}{\hat{e}_{\gamma}}$, $\sigma_{t} \neq \hat{e}_{\gamma}$, because otherwise $t = \NextTime{\gamma}{\hat{e}_{\gamma}}$. By the definition of troublemakers, $\hat{e}_{\gamma} \in C_{\NextTime{\gamma}{\hat{e}_{\gamma}}}$, so for any $t \in [\gamma + 1, \NextTime{\gamma}{\hat{e}_{\gamma}} - 1]$, it holds that $e_{t} \neq \hat{e}_{\gamma}$.
\end{proof}

\begin{lemma} \label{lemma_availability_on_troublemaker}
For each troublemaker round $t \in \Gamma$, it holds that $|\Phi_{t} | \geq 1$.
\end{lemma}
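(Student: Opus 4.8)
The plan is to show that the $k$ pages of $\hat{C}_t$ cannot all be ``used up'' by the set $C_t \cup \ActL_t$ whose effective size against $\hat{C}_t$ is at most $k-1$. Recall $\Phi_t = \hat{C}_t \setminus (C_t \cup \ActL_t)$, so it suffices to prove $|\hat{C}_t \cap (C_t \cup \ActL_t)| \le k-1$, i.e.\ to exhibit one page in $\hat{C}_t$ that lies neither in $C_t$ nor in $\ActL_t$. The natural candidate is $\hat e_\gamma$ for a suitably chosen active troublemaker $\gamma \in \Gamma_t$, but that page is \emph{outside} $\hat C_t$ by \Cref{lemma_observations_on_troublemakers}(1), so instead I will count from the other side: I will bound $|\hat C_t \setminus C_t|$ from below and bound $|\ActL_t \cap (\hat C_t \setminus C_t)|$ strictly below that, forcing the difference to be nonempty.

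First I would establish the counting backbone. Since $|C_t| = |\hat C_t| = k$, we have $|\hat C_t \setminus C_t| = |C_t \setminus \hat C_t| = d_t$. By \Cref{lemma_observations_on_troublemakers}, the pages $\{\hat e_\gamma : \gamma \in \Gamma_t\}$ are pairwise distinct and each lies in $C_t \setminus \hat C_t$; hence $|\Gamma_t| \le d_t$. Now the key structural fact I need is the invariant that, at every round $t$, the labelled pages active at $t$ that actually sit in $\hat C_t \setminus C_t$ number strictly fewer than $d_t$ — more precisely, I expect the right invariant to be that $\ActL_t \cap (\hat C_t \setminus C_t)$ together with the page $\zeta_t$ (from \Cref{lemma_special_page_property}, which lies in $C_t \setminus \hat C_t$, hence is \emph{not} in $\hat C_t$, so this needs care) and the pages $\{\hat e_\gamma\}$ fill up $C_t\setminus\hat C_t$ without exhausting $\hat C_t \setminus C_t$. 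The clean way is: each active troublemaker $\gamma \in \Gamma_t$ ``owns'' both a page $\hat e_\gamma \in C_t \setminus \hat C_t$ and its current label $\lambda_\gamma(t)$; if all labels $\lambda_\gamma(t)$ (for $\gamma \in \Gamma_t$, $\lambda_\gamma(t)\neq\perp$) lay in $\hat C_t \setminus C_t$ and were pairwise distinct, then counting both families inside $C_t \triangle \hat C_t$ gives $|\hat C_t \setminus C_t| \ge |\ActL_t \cap (\hat C_t\setminus C_t)|$ and $|C_t\setminus\hat C_t| \ge |\Gamma_t|$, which alone is not enough — so I must produce one \emph{extra} page on the $C_t\setminus\hat C_t$ side that is not any $\hat e_\gamma$. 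That extra page is exactly $\zeta_t$: by \Cref{lemma_special_page_property} it lies in $C_t \setminus \hat C_t$ and differs from every $\hat e_\gamma$ with $\gamma \in \Gamma_t$. Therefore $|C_t \setminus \hat C_t| \ge |\Gamma_t| + 1$, so $d_t \ge |\Gamma_t| + 1$, and since $|\ActL_t| \le |\Gamma_t|$ we get $|\ActL_t \cap (\hat C_t\setminus C_t)| \le |\Gamma_t| \le d_t - 1 < d_t = |\hat C_t \setminus C_t|$, yielding a page of $\hat C_t \setminus C_t$ outside $\ActL_t$, i.e.\ a member of $\Phi_t$.

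The step I expect to be the main obstacle is justifying that every \emph{non-$\perp$} label $\lambda_\gamma(t)$ for an active $\gamma$ actually lies in $\hat C_t \setminus C_t$ (not merely in $[n]$), since only then does the counting live inside $C_t \triangle \hat C_t$; this has to be proven as an invariant of Procedure \texttt{Labeling}, by induction on $t$ along the active period $\ActI_\gamma$, checking each branch of Algorithm~\ref{algorithm_labeling}: the initial choice picks from $\Phi_\gamma \subseteq \hat C_\gamma \setminus C_\gamma$; the ``carry-over'' branch $\lambda_\gamma(t) = \lambda_\gamma(t-1)$ preserves membership because the break on $\lambda_\gamma(t-1) = \sigma_{t-1}$ guarantees the page was not requested (so it stays outside $C_t$, as \FitF{} only inserts $\sigma_{t-1}$) and it was not the page \Sim{} evicted (the branch condition $\lambda_\gamma(t-1)\neq\hat e_{t-1}$), so it stays in $\hat C_t$; and the ``re-label'' branch picks from $\Psi_t \subseteq \hat C_t\setminus C_t$ by definition. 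I would also need the companion fact that the non-$\perp$ labels of distinct active troublemakers are distinct at round $t$ (an analogue of \Cref{lemma_observations_on_troublemakers}(2) for labels), which again follows by tracking \texttt{Labeling} together with the fact that a page, once it leaves $\hat C_t$, cannot be a current label. Modulo verifying these invariants about \texttt{Labeling} — which is the real work — the arithmetic $d_t \ge |\Gamma_t| + 1 > |\ActL_t|$ closes the proof.
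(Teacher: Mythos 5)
Your argument is correct and is essentially the paper's own proof: $|C_t \setminus \hat C_t| \ge |\Gamma_t| + 1$ via the pairwise-distinct pages $\hat e_\gamma$ (\Cref{lemma_observations_on_troublemakers}) plus the extra page $\zeta_t$ (\Cref{lemma_special_page_property}), hence $|\hat C_t \setminus C_t| \ge |\Gamma_t| + 1 > |\Gamma_t| \ge |\ActL_t|$, so $\Phi_t \neq \emptyset$. The invariants you flag as ``the real work'' (that each non-$\perp$ label lies in $\hat C_t \setminus C_t$ and that labels of distinct active troublemakers are distinct) are not needed here: since $\ActL_t$ is a union of at most $|\Gamma_t|$ singletons, the crude bound $|\hat C_t \setminus (C_t \cup \ActL_t)| \ge |\hat C_t \setminus C_t| - |\ActL_t| \ge 1$ already closes the proof regardless of where those labels live.
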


\begin{proof}
By \Cref{lemma_observations_on_troublemakers} and \Cref{lemma_special_page_property}, we have
\begin{equation*}
C_{t} \setminus \hat{C}_{t} \supseteq \lbrace \hat{e}_{\gamma} \rbrace_{\gamma \in \Gamma_{t}} \cup \lbrace \zeta_{t} \rbrace\, .
\end{equation*}
Still by \Cref{lemma_special_page_property}, it follows that $\zeta_{t} \neq \hat{e}_{\gamma}$ for every $\gamma \in \Gamma_{t}$, then
\begin{equation*}
|C_{t} \setminus \hat{C}_{t}| \geq \big|\lbrace \hat{e}_{\gamma} \rbrace_{\gamma \in \Gamma_{t}} \big| + 1 = \big|\Gamma_{t} \big| + 1 \, .
\end{equation*}
Because $|C_{t}| = |\hat{C}_{t}|$, we have $|\hat{C}_{t} \setminus C_{t}| \geq \Big|\Gamma_{t} \Big| + 1 $. Since for every $\gamma$ with $t \in {\ActI}_{\gamma}$, we have $\gamma \in \Gamma_{t}$, then it always holds that
\begin{equation*}
\Big| \ActL_{t} \Big| \leq |\Gamma_{t}|\,.
\end{equation*}
This finishes the proof.
\end{proof}

\begin{lemma} \label{lemma_labelled_not_optimal}
For each round $t \in [T]$, we have $C_{t} \cap \ActL_{t} = \emptyset$.
\end{lemma}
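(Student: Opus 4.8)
The plan is to prove, by induction on $t$, the equivalent statement that for every round $t$ and every troublemaker $\gamma \in \Gamma_t$ with $\lambda_\gamma(t) \neq \perp$ we have $\lambda_\gamma(t) \notin C_t$. Since $C_t \subseteq [n]$ while $\perp \notin [n]$, and $\mathcal{L}_t = \bigcup_{\gamma \in \Gamma_t} \lambda_\gamma(t)$, this immediately yields $C_t \cap \mathcal{L}_t = \emptyset$. Recall that $\gamma \in \Gamma_t$ iff $t \in \theta_\gamma = [\gamma + 1, \NextTime{\gamma}{\hat{e}_\gamma} - 1]$, so it suffices to fix a troublemaker $\gamma$ and verify, following the structure of \texttt{Labeling} (Algorithm~\ref{algorithm_labeling}), that every page written into $\lambda_\gamma(t)$ for $t \in \theta_\gamma$ avoids $C_t$ (rounds where $\lambda_\gamma(t) = \perp$ being trivial). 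Throughout we use that both \FitF{} and \Sim{} are lazy, so $C_t \setminus C_{t-1} \subseteq \{\sigma_{t-1}\}$, i.e.\ \FitF{} can insert into its cache only the currently requested page.

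For the base round $t = \gamma + 1$ we have $\lambda_\gamma(\gamma+1) = \hat{\delta}_\gamma \in \Phi_\gamma \subseteq \hat{C}_\gamma \setminus C_\gamma$. As $\gamma$ is a troublemaker, $\hat{e}_\gamma \neq \perp$, so \Sim{} faults at round $\gamma$ and $\sigma_\gamma \notin \hat{C}_\gamma$; hence $\hat{\delta}_\gamma \neq \sigma_\gamma$, and since \FitF{} inserts only $\sigma_\gamma$ at round $\gamma$ we get $\hat{\delta}_\gamma \notin C_{\gamma+1}$. In the top-level branch $\tau_\gamma > \NextTime{\gamma}{\hat{e}_\gamma}$, where $\lambda_\gamma(t) = \hat{\delta}_\gamma$ for all $t \in \theta_\gamma$, each such $t$ satisfies $t \leq \NextTime{\gamma}{\hat{e}_\gamma} - 1 < \tau_\gamma = \NextTime{\gamma}{\hat{\delta}_\gamma}$, so $\hat{\delta}_\gamma$ is not requested in $[\gamma, t-1]$; combined with $\hat{\delta}_\gamma \notin C_\gamma$ and laziness this gives $\hat{\delta}_\gamma \notin C_t$. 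Otherwise we are in the \textbf{while} loop, where a page $p$ can be written into $\lambda_\gamma(t)$ for $t \geq \gamma+2$ in two immediate ways. If $p = \lambda_\gamma(t-1)$ (the sub-case $\lambda_\gamma(t-1) \neq \hat{e}_{t-1}$): reaching this line means the guard $\lambda_\gamma(t-1) = \sigma_{t-1}$ failed, so $p = \lambda_\gamma(t-1) \neq \perp$ and $p \neq \sigma_{t-1}$; since $t - 1 \in \theta_\gamma$ we have $\gamma \in \Gamma_{t-1}$, the inductive hypothesis for round $t-1$ gives $p \notin C_{t-1}$, and hence $p \notin C_t$. If $p$ is freshly drawn from $\Psi_t = \hat{C}_t \setminus (C_t \cup \mathcal{L}_{t-1})$, then $p \notin C_t$ by definition of $\Psi_t$.

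The only case needing an extra argument is the ``extension'' inside the \textbf{while} loop, where a page $i$ drawn from $\Psi_s$ at some iteration $s$ is also written into $\lambda_\gamma(t')$ for all $t' \in [s+1, \min\{\NextTime{\gamma}{\hat{e}_\gamma}, \NextTime{s}{i}\})$. First, $i \neq \sigma_s$: this step is entered only when $\NextTime{s-1}{i} > \NextTime{s-1}{\lambda_\gamma(s-1)}$, and since $\lambda_\gamma(s-1) = \hat{e}_{s-1}$ is evicted by \Sim{} at round $s-1$ we have $\NextTime{s-1}{\hat{e}_{s-1}} \geq s$, whereas $\NextTime{s-1}{\sigma_s} = s$, so $i = \sigma_s$ would falsify the triggering inequality. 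Moreover every $t'$ in the extension satisfies $t' < \NextTime{s}{i}$, so $i$ is not requested in $(s, t']$ either, hence not requested anywhere in $[s, t']$. Together with $i \notin C_s$ (as $i \in \Psi_s$) and laziness of \FitF{}, this gives $i \notin C_{t'}$ for every such $t'$. This exhausts all the ways \texttt{Labeling} populates $\lambda_\gamma$, completing the induction. The work is essentially bookkeeping: matching each assignment of Algorithm~\ref{algorithm_labeling} to the right invariant; the one non-immediate point, and the crux of the argument, is extracting $i \neq \sigma_s$ from the NAT comparison that gates the extension step.
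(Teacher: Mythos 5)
Your proof is correct and takes essentially the same approach as the paper's (one-sentence) argument: every freshly labelled page lies outside \FitF's cache at the moment it is labelled, and a labelled page is never requested while its label persists, so laziness of \FitF{} keeps it out of $C_{t}$. Your write-up is simply a careful branch-by-branch verification of these two facts over Algorithm~\ref{algorithm_labeling}, including the one genuinely non-obvious point (that the extension step cannot label $\sigma_{s}$, extracted from the NAT comparison that gates it).
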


\begin{proof}
This proposition is true because for any round $t$ when we find a new page $i$ to label, $i \notin C_{t}$, and the label on any page $i$ is cancelled before $i$ is requested.
\end{proof}

\begin{lemma} \label{lemma_availablity_on_eviction}
For every $\gamma \in \Gamma$ and every $t \in [\gamma + 2, A_{\gamma}(\hat{e}_{\gamma}) - 1]$, if $\hat{e}_{t - 1} = \lambda_{\gamma}(t - 1)$, then $\Psi_{t} \neq \emptyset$.
\end{lemma}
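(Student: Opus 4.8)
The plan is to bound $|\Psi_t|$ from below by a single union‑bound inequality over the three sets $\hat{C}_t,\,C_t,\,\ActL_{t-1}$, and then control the two ``bad'' overlaps separately. Since $|\hat{C}_t| = |C_t| = k$ and $d_t = k - |I_t| = |\hat{C}_t \setminus C_t|$, a union bound gives
\[
|\Psi_t| \;=\; \big|\hat{C}_t \setminus (C_t \cup \ActL_{t-1})\big| \;\ge\; |\hat{C}_t| - |\hat{C}_t \cap C_t| - |\hat{C}_t \cap \ActL_{t-1}| \;=\; d_t - |\hat{C}_t \cap \ActL_{t-1}| .
\]
So it suffices to prove $d_t \ge |\Gamma_{t-1}|$ together with $|\hat{C}_t \cap \ActL_{t-1}| \le |\Gamma_{t-1}| - 1$; the two bounds combine to $|\Psi_t| \ge 1$.

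Before that, a bit of index bookkeeping: from $\gamma + 2 \le t \le \NextTime{\gamma}{\hat{e}_{\gamma}} - 1$ we get $\gamma < t - 1 < \NextTime{\gamma}{\hat{e}_{\gamma}}$, so $\gamma \in \Gamma_{t-1}$ and $t-1$ lies in the active period $\ActI_{\gamma}$, whence $\lambda_{\gamma}(t-1)$ is defined. For the bound on $|\hat{C}_t \cap \ActL_{t-1}|$, recall $\ActL_{t-1} = \bigcup_{\gamma' \in \Gamma_{t-1}} \{\lambda_{\gamma'}(t-1)\}$, which has at most $|\Gamma_{t-1}|$ elements (an empty label $\perp$ contributing none). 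I then discard the contribution of $\gamma$: if $\lambda_{\gamma}(t-1) = \perp$ it contributes nothing to $\ActL_{t-1}$; and if $\lambda_{\gamma}(t-1) = p \ne \perp$, then the hypothesis $\hat{e}_{t-1} = \lambda_{\gamma}(t-1)$ says $p$ is exactly the page that $\Sim$ evicts in round $t-1$, so $p \notin \hat{C}_t$. In either case $\hat{C}_t \cap \ActL_{t-1} \subseteq \bigcup_{\gamma' \in \Gamma_{t-1} \setminus \{\gamma\}} \{\lambda_{\gamma'}(t-1)\}$, giving $|\hat{C}_t \cap \ActL_{t-1}| \le |\Gamma_{t-1}| - 1$.

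The main work is the lower bound $d_t \ge |\Gamma_{t-1}|$, which I would obtain by exhibiting $\{\hat{e}_{\gamma'} \mid \gamma' \in \Gamma_{t-1}\}$ as a set of $|\Gamma_{t-1}|$ distinct pages contained in $C_t \setminus \hat{C}_t$; then $d_t = |C_t \setminus \hat{C}_t| \ge |\Gamma_{t-1}|$. Distinctness is \Cref{lemma_observations_on_troublemakers}(2) applied at round $t-1$. For the containment, split $\Gamma_{t-1}$ into those $\gamma'$ still active at $t$ and those that ``expire'' at $t$. If $\gamma' \in \Gamma_t$, then \Cref{lemma_observations_on_troublemakers}(1) at round $t$ gives $\hat{e}_{\gamma'} \in C_t \setminus \hat{C}_t$ directly. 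If $\gamma' \in \Gamma_{t-1} \setminus \Gamma_t$, then since $\gamma' < t-1$ we must have $\NextTime{\gamma'}{\hat{e}_{\gamma'}} = t$ (it is $> t-1$ and $\le t$); hence $\hat{e}_{\gamma'} \in C_{\NextTime{\gamma'}{\hat{e}_{\gamma'}}} = C_t$ by the definition of a troublemaker, and $\hat{e}_{\gamma'} \notin \hat{C}_t$ because $\hat{e}_{\gamma'} \notin \hat{C}_{t-1}$ (\Cref{lemma_observations_on_troublemakers}(1) at round $t-1$) while the only page $\Sim$ can insert into its cache in round $t-1$ is $\sigma_{t-1}$, which cannot equal $\hat{e}_{\gamma'}$ since $\gamma' < t-1 < \NextTime{\gamma'}{\hat{e}_{\gamma'}}$. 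Combining, $|\Psi_t| \ge d_t - |\hat{C}_t \cap \ActL_{t-1}| \ge |\Gamma_{t-1}| - (|\Gamma_{t-1}| - 1) = 1$.

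I expect the only delicate point to be the ``expiring troublemaker'' case in the $d_t$ bound, specifically the claim that $\hat{e}_{\gamma'}$ does not slip back into $\hat{C}_t$ at exactly round $t-1$; this is forced by $\NextTime{\gamma'}{\cdot}$ being the \emph{first} request time after $\gamma'$. Everything else is routine set arithmetic in the style of \Cref{lemma_availability_on_troublemaker}.
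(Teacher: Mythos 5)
Your proof is correct, but it follows a different route than the paper's. The paper argues by an explicit witness under a case split on whether $\Phi_{t-1}$ is empty: if $\Phi_{t-1}\neq\emptyset$, any page of $\Phi_{t-1}$ survives into $\Psi_t$ (it is not the evicted page, since that page is labelled, and it is not $\sigma_{t-1}$); if $\Phi_{t-1}=\emptyset$, then $C_{t-1}\setminus\hat{C}_{t-1}$ consists exactly of the pages $\{\hat{e}_{\gamma'}\}_{\gamma'\in\Gamma_{t-1}}$, which forces $\sigma_{t-1}\notin C_{t-1}$ and hence an eviction $e_{t-1}\in I_{t-1}$ by \FitF{}, and this $e_{t-1}$ is shown to land in $\Psi_t$. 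You instead run a purely cardinality-based argument at round $t$: $|\Psi_t|\ge d_t-|\hat{C}_t\cap\ActL_{t-1}|$, with $d_t\ge|\Gamma_{t-1}|$ and $|\hat{C}_t\cap\ActL_{t-1}|\le|\Gamma_{t-1}|-1$ because $\gamma$'s own label is precisely the page just evicted. Both proofs ultimately rest on the same two facts from \Cref{lemma_observations_on_troublemakers} (the troublemaker-evicted pages are distinct and sit in $C\setminus\hat{C}$) plus \Cref{lemma_labelled_not_optimal}-type disjointness, but your version avoids the case split and the identification of an explicit witness at the price of having to transport the bound $|C\setminus\hat{C}|\ge|\Gamma_{t-1}|$ from round $t-1$ to round $t$; your handling of the troublemakers that expire exactly at $t$ (where $A_{\gamma'}(\hat{e}_{\gamma'})=t$, so $\hat{e}_{\gamma'}=\sigma_t\in C_t$ by the troublemaker definition, and $\hat{e}_{\gamma'}\notin\hat{C}_t$ since only $\sigma_{t-1}\neq\hat{e}_{\gamma'}$ can enter $\hat{C}$ in round $t-1$) is the genuinely new step, and it is sound. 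The paper's proof has the advantage of exhibiting which page remains available for relabelling; yours is more uniform and arguably easier to check.
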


\begin{proof}
First, consider the case where $\Phi_{t - 1} \neq \emptyset$. For each page $i \in \Phi_{t - 1}$, we know $i \neq \hat{e}_{t - 1}$ because $i \notin \ActL_{t - 1}$ while $\hat{e}_{t - 1} = \lambda_{\gamma}(t - 1) \in \ActL_{t - 1}$. This gives $i \in \hat{C}_{t}$. Moreover, we have $i \neq \sigma_{t - 1}$ because $\hat{e}_{t - 1} \neq \perp$. This gives $i \notin C_{t}$. Therefore, $i \in \hat{C}_{t} \setminus (C_{t} \cup \ActL_{t - 1}) = \Psi_{t}$.

Now consider $\Phi_{t - 1}$ is empty. In such a case, $|\hat{C}_{t - 1} \setminus C_{t - 1}| = |\ActL_{t - 1}|$. \Cref{lemma_observations_on_troublemakers} indicates that for every $\gamma' \in \Gamma_{t'}$ with $t' \in \ActI_{\gamma'}$, we have $\hat{e}_{\gamma'} \in C_{t'} - \hat{C}_{t'}$. Then $|\hat{C}_{t - 1} \setminus C_{t - 1}| = |\ActL_{t - 1}|$ implies that $C_{t - 1} \setminus \hat{C}_{t - 1} = \lbrace \hat{e}_{\gamma'} \rbrace_{\gamma' \in \Gamma_{t - 1}}$. By the definition of active troublemakers, $\sigma_{t - 1} \notin C_{t - 1} - \hat{C}_{t - 1}$. Also, we have $\sigma_{t - 1} \notin I_{t - 1} = C_{t - 1} \cap \hat{C}_{t - 1}$, because $\hat{e}_{t - 1} \neq \perp$. Therefore, $\sigma_{t - 1} \notin C_{t - 1}$, which means that $e_{t - 1} \neq \perp$. Still by the definition of the troublemakers, we have $e_{t - 1} \notin \lbrace \hat{e}_{\gamma'} \rbrace_{\gamma' \in \Gamma_{t - 1}} = C_{t - 1} \setminus \hat{C}_{t - 1}$. Thus, $e_{t - 1} \in I_{t - 1} \subseteq \hat{C}_{t - 1}$. By \Cref{lemma_labelled_not_optimal}, we have $e_{t - 1} \notin \ActL_{t - 1}$ and $e_{t - 1} \neq \hat{e}_{t - 1}$. Putting $e_{t - 1} \neq \hat{e}_{t - 1}$ and $e_{t - 1} \in \hat{C}_{t - 1}$ together, we get $e_{t - 1} \in \hat{C}_{t} - C_{t}$. Therefore, $e_{t - 1} \in \Psi_{t}$.
\end{proof}

\Cref{lemma_availability_on_troublemaker} and \Cref{lemma_availablity_on_eviction} ensure the consistency of Procedure \texttt{Labeling}. The following lemma gives an important property of this procedure that for every round $t$ and every active troublemaker $\gamma \in \Gamma_{t}$, there exists at most one page labelled by $\gamma$.

\begin{lemma} \label{lemma_unique_labelled_page}
For every round $t$ and any page $i \in [n]$, there exists at most one troublemaker $\gamma \in \Gamma_{t}$ so that $\lambda_{\gamma} = i$.
\end{lemma}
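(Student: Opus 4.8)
The plan is to establish, by strong induction on the round $t$, the stronger claim that the map $\gamma \mapsto \lambda_{\gamma}(t)$ is injective on $\{\gamma \in \Gamma_{t} : \lambda_{\gamma}(t) \neq \perp\}$; the lemma is exactly this claim applied to a fixed value $i \in [n]$ (which is in particular $\neq \perp$). Before the induction I would record three structural facts about Procedure~\texttt{Labeling}. (1)~Within $\ActI_{\gamma}$ the rounds carrying a non-$\perp$ value of $\lambda_{\gamma}$ form a prefix of $\ActI_{\gamma}$: inspecting Algorithm~\ref{algorithm_labeling}, once a $\perp$ has been written, the while loop has already broken and only $\perp$'s follow. (2)~Consequently, for $t \geq \gamma + 2$ with $\lambda_{\gamma}(t) \neq \perp$, the value $\lambda_{\gamma}(t)$ is either \emph{inherited}, meaning $\lambda_{\gamma}(t) = \lambda_{\gamma}(t-1)$ (this covers the ``keep'' branch, the rounds covered by a single jump, and the branch where $\hat{\delta}_{\gamma}$ is kept throughout $\ActI_{\gamma}$), or \emph{freshly drawn at $t$}, in which case the pseudocode draws it from $\Psi_{t} = \hat{C}_{t} \setminus (C_{t} \cup \ActL_{t-1})$; and for $t = \gamma + 1$ the value equals $\hat{\delta}_{\gamma}$, drawn from $\Phi_{\gamma} = \hat{C}_{\gamma} \setminus (C_{\gamma} \cup \ActL_{\gamma})$. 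Thus a fresh value always lies outside $\ActL_{t-1}$. (3)~A bookkeeping fact about entries: a troublemaker active at $t$ but not at $t-1$ must be $t-1$ itself, so $\Gamma_{t} \subseteq \Gamma_{t-1} \cup \{t-1\}$, and every $\gamma \in \Gamma_{t}$ with $t \geq \gamma+2$ lies in $\Gamma_{t-1}$; hence an inherited value $\lambda_{\gamma}(t) = \lambda_{\gamma}(t-1)$ lies \emph{inside} $\ActL_{t-1}$.

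The heart of the argument is the claim that \emph{at most one} $\gamma \in \Gamma_{t}$ draws a fresh value at round $t$. A fresh draw at $t$ can only arise in two ways: either $\gamma = t-1$ has just become active and draws $\hat{\delta}_{\gamma}$, or some $\gamma \in \Gamma_{t-1}$ had $\lambda_{\gamma}(t-1) = \hat{e}_{t-1}$, i.e.\ \Sim{} evicted the very page that $\gamma$ was labelling, forcing a new draw from $\Psi_{t}$. By the inductive hypothesis at round $t-1$, the values $\{\lambda_{\gamma}(t-1) : \gamma \in \Gamma_{t-1}, \lambda_{\gamma}(t-1) \neq \perp\}$ are pairwise distinct, so the single page $\hat{e}_{t-1}$ coincides with $\lambda_{\gamma}(t-1)$ for at most one $\gamma$; this bounds the second kind of fresh draw by one. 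Finally the two kinds are mutually exclusive: if $t-1 \in \Gamma$, then the troublemaker condition~\eqref{formula_troublemaker_def} gives $\hat{e}_{t-1} \in I_{t-1} \subseteq C_{t-1}$, while $C_{t-1} \cap \ActL_{t-1} = \emptyset$ by Lemma~\ref{lemma_labelled_not_optimal}, so no active troublemaker labels $\hat{e}_{t-1}$ at round $t-1$.

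Granting the above, the inductive step is immediate. Take distinct $\gamma, \gamma' \in \Gamma_{t}$ with $\lambda_{\gamma}(t), \lambda_{\gamma'}(t) \neq \perp$. They cannot both be fresh at $t$ by the single-fresh-draw claim. If both are inherited, then $\gamma, \gamma' \in \Gamma_{t-1}$ and the inductive hypothesis yields $\lambda_{\gamma}(t) = \lambda_{\gamma}(t-1) \neq \lambda_{\gamma'}(t-1) = \lambda_{\gamma'}(t)$. If exactly one, say $\gamma'$, is fresh at $t$, then $\lambda_{\gamma'}(t) \notin \ActL_{t-1}$ while $\lambda_{\gamma}(t) = \lambda_{\gamma}(t-1) \in \ActL_{t-1}$, so the two values differ. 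The base cases, where $\Gamma_{t}$ is empty or a singleton (this includes the first round at which $\Gamma_{t} \neq \emptyset$, at which $\Gamma_{t} = \{t-1\}$), are vacuous. The step I expect to demand the most care is the single-fresh-draw claim: one must verify against the pseudocode that neither the ``jump'' sub-case nor the break-on-request sub-case covertly produces an extra fresh draw at some round, and that the set $\ActL_{t-1}$ appearing inside $\Phi_{\gamma}$ and $\Psi_{t}$ is genuinely the completed round-$(t-1)$ label set, so that the ``inherited $\in \ActL_{t-1}$ versus fresh $\notin \ActL_{t-1}$'' dichotomy is unambiguous regardless of the order in which \texttt{Labeling} processes the troublemakers.
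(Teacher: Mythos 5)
Your proposal is correct and follows essentially the same route as the paper's proof: an induction on $t$ whose crux is that at most one new (non-inherited) label appears per round --- either from a newly active troublemaker's draw from $\Phi$ or from a re-draw from $\Psi$ after \Sim{} evicts a labelled page --- with the two causes shown mutually exclusive via Lemma~\ref{lemma_labelled_not_optimal} and the new label separated from inherited ones because $\Phi_{\gamma}$ and $\Psi_{t}$ exclude $\ActL_{t-1}$. Your write-up merely makes explicit some bookkeeping (the prefix property and the inherited-versus-fresh dichotomy) that the paper leaves implicit.
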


\begin{proof}
We prove this proposition inductively. It trivially holds for $t = 1$ because $\Gamma_{1} = \emptyset$. Suppose that it holds for the round $t \geq 1$. 
For any page $i$, if $i \in \ActL_{t + 1} \setminus \ActL_{t}$, then either $t \in \Gamma$, or there exists some $\gamma \in \Gamma_{t}$ with $\hat{e}_{t} = \lambda_{\gamma}(t)$. For the former case, at most one page is chosen from $\Phi_{t}$ and gets labelled by $t$; for the latter case, by the induction hypothesis, there exists at most one troublemaker $\gamma$ satisfying $\hat{e}_{t} = \lambda_{\gamma}(t)$, which makes one page in $\Psi_{t + 1}$ get labelled. Since $\Phi_{t} \cap \ActL_{t} = \emptyset$ and $\Psi_{t + 1} \cap \ActL_{t} = \emptyset$, we only need to prove that these two cases cannot happen at the same time. If $t \in \Gamma$, then by definition we have $\hat{e}_{t} \in I_{t}$. \Cref{lemma_labelled_not_optimal} shows that $I_{t} \cap \ActL_{t} = \emptyset$, which implies that $\hat{e}_{t} \neq \lambda_{\gamma}(t)$ for any $\gamma \in \Gamma_{t}$. Therefore, this proposition holds.
\end{proof}

The proof of \Cref{lemma_unique_labelled_page} gives the following byproduct.

\begin{lemma} \label{lemma_one_new_labeled}
For each round $t \in [2, T]$, we have $\vert \ActL_{t} \setminus \ActL_{t - 1} \vert \leq 1$.
\end{lemma}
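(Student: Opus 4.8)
The plan is to read the statement off, almost for free, from the argument already contained in the proof of \Cref{lemma_unique_labelled_page}. Fix a round $t \in [T-1]$; I will bound $\bigl|\ActL_{t+1} \setminus \ActL_{t}\bigr|$ and then re-index via $t \mapsto t-1$. Recall $\ActL_{t+1} = \bigcup_{\gamma \in \Gamma_{t+1}} \lambda_{\gamma}(t+1)$ and $\Gamma_{t+1} \subseteq \Gamma \cap [t]$, so any troublemaker $\gamma$ that contributes a page to $\ActL_{t+1}$ satisfies either $\gamma = t$, or $\gamma < t$; in the latter case $\gamma \in \Gamma_{t}$ as well, since the active period $\ActI_{\gamma} = [\gamma+1, \NextTime{\gamma}{\hat{e}_{\gamma}} - 1]$ contains $t+1$ and hence also $t$.

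First I would classify how a page can be \emph{new} at round $t+1$. Suppose $i \in \ActL_{t+1} \setminus \ActL_{t}$ with $i = \lambda_{\gamma}(t+1)$ for some $\gamma < t$; then $\lambda_{\gamma}(t) \neq i$ (otherwise $i \in \ActL_{t}$), so the page labelled by $\gamma$ is replaced between rounds $t$ and $t+1$. Inspecting Procedure \texttt{Labeling}, the page labelled by a fixed troublemaker can be replaced by a different page of $[n]$ only at a round immediately following an eviction of the currently labelled page, i.e.\ only when $\lambda_{\gamma}(t) = \hat{e}_{t}$ (the only other transitions are to the empty symbol $\perp$ or none at all), and at such a round exactly one replacement page is drawn, from $\Psi_{t+1}$; the batched assignment in the inner loop introduces no further new pages, since all rounds of a batch receive the same page as their predecessor. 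The remaining possibility is $\gamma = t$, i.e.\ $t \in \Gamma$, in which case $t+1$ is the first round of $\ActI_{t}$ and $i = \lambda_{t}(t+1) = \hat{\delta}_{t}$ is the single page picked from $\Phi_{t}$.

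Next I would argue these two sources are mutually exclusive and each contributes at most one page. If $t \in \Gamma$ then, by the troublemaker definition, $\hat{e}_{t} \in I_{t} \subseteq C_{t}$, so \Cref{lemma_labelled_not_optimal} gives $\hat{e}_{t} \notin \ActL_{t}$, hence $\lambda_{\gamma}(t) \neq \hat{e}_{t}$ for every $\gamma \in \Gamma_{t}$; thus the ``evicted label'' source is vacuous whenever $t$ is itself a troublemaker. Moreover, \Cref{lemma_unique_labelled_page} ensures there is at most one $\gamma \in \Gamma_{t}$ with $\lambda_{\gamma}(t) = \hat{e}_{t}$, so the ``evicted label'' source contributes at most one page. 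Combining, $\bigl|\ActL_{t+1} \setminus \ActL_{t}\bigr| \leq 1$, and replacing $t$ by $t-1$ yields the claim for all $t \in [2, T]$.

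The only step requiring genuine care is the classification in the second paragraph: it rests on a faithful reading of Procedure \texttt{Labeling} to confirm that, over the lifetime of one troublemaker, the label can switch to a genuinely new page only at a round following an eviction of the previously labelled page — in particular that the branch performing the batched assignment installs the new page at a single (eviction-following) round and merely copies it forward. Everything else is bookkeeping over the definition of $\Gamma_{t}$ and the two already-cited lemmas.
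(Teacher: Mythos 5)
Your proof is correct and follows essentially the same route as the paper: the paper obtains this lemma as a byproduct of its proof of \Cref{lemma_unique_labelled_page}, which classifies a new page in $\ActL_{t+1}\setminus\ActL_{t}$ as arising either from $t$ itself being a troublemaker (one page from $\Phi_{t}$) or from a replacement triggered by $\hat{e}_{t}=\lambda_{\gamma}(t)$ for some $\gamma\in\Gamma_{t}$ (one page from $\Psi_{t+1}$), and then shows via \Cref{lemma_labelled_not_optimal} that these two cases are mutually exclusive. Your additional care in checking that the batched assignment in \texttt{Labeling} only copies the same page forward is a faithful reading of the procedure and matches what the paper's argument implicitly relies on.
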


Let $t$ be an arbitrary round with $\hat{e}_{t} \neq \perp$. Suppose that there exists a page $i \in \hat{C}_{t}$ satisfying $\NextTime{t}{i} > \NextTime{t}{\hat{e}_{t}}$, $i \notin \ActL_{t}$, and $i \in \ActL_{t'}$ for every $t' \in [t + 1, \min\lbrace \NextTime{t}{\hat{e}_{t}}, t^{\circ} \rbrace]$, where
\begin{equation}
    t^{\circ} = \begin{cases}
    \min \lbrace t'' \mid t'' > t \: \wedge \: \hat{e}_{t''} = i \rbrace & \text{if } \exists t'' \in (t + 1, T] \text{ s.t.~} \hat{e}_{t''} = i \\
    \infty & \text{otherwise}
    \end{cases} \, . \label{formula_abettor_def}
\end{equation}
\Cref{lemma_one_new_labeled} implies that such a page $i$ is unique if it exists. In such a case, we say that the page $i$ is the \emph{competitor} of $\hat{e}_{t}$, and the round $t$ has an \emph{abettor} round $t^{*}$ specified as follows. Let $r = \LastRequest{t}{\hat{e}_{t}}$ and $r' = \LastRequest{t}{i}$, then the abettor of $t$ is
\begin{equation}
    t^{*} = \begin{cases}
    r & \text{if }  r \neq \NULL  \, \wedge \,  \NextTime{r}{\sigma_{r}} \neq \predict{r}  \\
    r' & \text{if } \big(  r = \NULL  \, \vee \,  \NextTime{r}{\sigma_{r}} = \predict{r}  \big) \, \wedge \, \big( r' \neq \NULL \, \wedge \,  \NextTime{r'}{\sigma_{r'}} \neq \predict{r'}  \big) \\
    \NULL & \text{otherwise}
    \end{cases}\, . \label{formula_definition_abettor}
\end{equation}

\begin{lemma} \label{lemma_crossing_property}
For any round $t^{*} \in [T]$, the number of rounds in $\big\lbrace t \, \vert \, \hat{e}_{t} \neq \perp \, \wedge \, \LastRequest{t}{\hat{e}_{t}} \neq \NULL \, \wedge \, t^{*} \text{ is the abettor of } t \big\rbrace$ is at most two.
\end{lemma}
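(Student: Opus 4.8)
The plan is to mimic the structure of the proof of \Cref{lemma_not_over_blamed}, which bounds the number of rounds in $\HD{1}{}$ that blame a given round $t'$. Here the role of ``$e_t$'' is played by the evicted page $\hat{e}_t$ and the role of ``$\hat{e}_t$'' is played by its competitor $i$; both of these pages are requested at some earlier round (since $\LastRequest{t}{\hat{e}_t} \neq \NULL$, and $i$ has a competitor only if $r' = \LastRequest{t}{i} \neq \NULL$ in the relevant branch of \eqref{formula_definition_abettor}), so $r = \LastRequest{t}{\hat{e}_t}$ and $r' = \LastRequest{t}{i}$ are well defined in the cases that matter. I would fix $t^{*}$ and suppose toward a contradiction that three distinct rounds $t_1 < t_2 < t_3$ all have $t^{*}$ as their abettor. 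By \eqref{formula_definition_abettor}, for each $t_\ell$ either $t^{*} = \LastRequest{t_\ell}{\hat{e}_{t_\ell}}$ or $t^{*} = \LastRequest{t_\ell}{i_\ell}$ where $i_\ell$ is the competitor of $\hat{e}_{t_\ell}$. By pigeonhole, at least two of the three rounds — say $t_a < t_b$ — fall into the same one of these two categories.

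I would then derive a contradiction in each of the two categories, exactly as in \Cref{lemma_not_over_blamed}. First category ($t^{*} = \LastRequest{t_a}{\hat{e}_{t_a}} = \LastRequest{t_b}{\hat{e}_{t_b}}$): then $\hat{e}_{t_a} = \sigma_{t^{*}} = \hat{e}_{t_b}$, so \Sim{} evicts the same page at $t_a$ and again at $t_b$; but for $\hat{e}_{t_b}$ to be in \Sim{}'s cache at $t_b$ it must have been requested at some round $\tilde t \in (t_a, t_b)$, contradicting that $t^{*} < t_a$ is the \emph{last} request of this page before $t_b$. Second category ($t^{*} = \LastRequest{t_a}{i_a} = \LastRequest{t_b}{i_b}$): then the competitors satisfy $i_a = \sigma_{t^{*}} = i_b =: i$; since $i$ is the competitor of $\hat{e}_{t_a}$, the definition via \eqref{formula_abettor_def} together with \Cref{lemma_one_new_labeled} forces $i$ to remain in $\ActL_{t'}$ for all $t'$ up to $\min\{\NextTime{t_a}{\hat{e}_{t_a}}, t_a^{\circ}\}$, and in particular $i$ is \emph{not} requested strictly between $t_a$ and that bound, whereas $i \in \hat{C}_{t_b}$ with $i = i_b$ being the competitor of $\hat{e}_{t_b}$ again requires $i$ to have been re-requested after $t_a$ — the same clash between ``last request before $t_b$ is $t^{*} < t_a$'' and ``$i$ must have been requested in $(t_a, t_b)$''. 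Care is needed with the boundary round $t_a^{\circ}$: if \Sim{} evicts $i$ at some round in $(t_a, t_b)$ then $i$ leaves \Sim{}'s cache, so for it to be back in $\hat{C}_{t_b}$ it must be re-fetched, hence re-requested, in $(t_a, t_b)$ — which again contradicts $t^{*} = \LastRequest{t_b}{i}$.

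The main obstacle I anticipate is handling the interaction between the two branches of \eqref{formula_definition_abettor} cleanly, and in particular the boundary cases around $t^{\circ}$: the competitor $i$ of $\hat{e}_{t}$ is guaranteed only to stay labelled (i.e.\ in $\ActL$) up to $\min\{\NextTime{t}{\hat{e}_t}, t^{\circ}\}$, not necessarily up to its own NAT, so one must argue carefully that within this window $i$ is never requested, using the fact (from \Cref{lemma_labelled_not_optimal}) that a labelled page is never in $C_t$ and its label is cancelled before it is requested. Once that window-monotonicity is pinned down, the three cases close by the same ``repeated eviction forces an intermediate request'' argument used in \Cref{lemma_not_over_blamed}, and the pigeonhole step upgrades ``at most one per category'' to ``at most two overall.''
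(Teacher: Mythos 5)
Your pigeonhole framing and your first category (two rounds $t_a<t_b$ with $t^{*}=\LastRequest{t_a}{\hat{e}_{t_a}}=\LastRequest{t_b}{\hat{e}_{t_b}}$) are fine and match the paper, which handles that case exactly as in \Cref{lemma_not_over_blamed}. The genuine gap is in your second category. Your contradiction there rests on the claim that $i\in\hat{C}_{t_b}$ ``requires $i$ to have been re-requested after $t_a$.'' That is false: at round $t_a$ the page evicted by \Sim{} is $\hat{e}_{t_a}$, \emph{not} its competitor $i$, so $i$ can simply persist in \Sim{}'s cache from $t_a$ all the way to $t_b$ without ever being requested or evicted. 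The ``repeated eviction forces an intermediate request'' mechanism only applies when the same page is evicted twice, which is precisely what need not happen here — the paper flags this explicitly (``different from the case considered in the proof of \Cref{lemma_not_over_blamed}, the page $i$ may not be evicted by \FitF{}''), and the same caveat applies to \Sim{}. Your escape hatch via $t_a^{\circ}$ only covers the subcase where \Sim{} does evict $i$ in $(t_a,t_b)$; when it does not, you have no contradiction.

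The missing ingredients are the two tools the paper actually uses for this case. First, a labelling argument: $i\in\ActL_{t'}$ for every $t'\in[t_a+1,\min\{\NextTime{t_a}{\hat{e}_{t_a}},t_a^{\circ}\}]$, so if $t_b$ falls in that window, $i$ cannot be the competitor of $\hat{e}_{t_b}$ because the competitor must satisfy $i\notin\ActL_{t_b}$. Second, and crucially, a remedy-prediction argument (the paper's \Cref{claim_competitor_property}): if $t_a$ has abettor $t^{*}=\LastRequest{t_a}{i}$ with $\LastRequest{t_a}{\hat{e}_{t_a}}\neq\NULL$, then $\predictHat{t'}{i}=Z$ for every $t'\in[\NextTime{t_a}{\hat{e}_{t_a}},\NextTime{t_a}{i})$; since $\LastRequest{t_b}{i}=t^{*}$ forces $t_b<\NextTime{t_a}{i}$, this yields $\predictHat{t_b}{i}=Z$, contradicting the $\predictHat{t_b}{i}<Z$ that the same claim applied to $t_b$ would demand. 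Without this analysis of how Eq.~\eqref{formula_virtual_NAT} drives $\predictHat{\cdot}{i}$ to $Z$ once $\hat{e}_{t_a}$'s arrival passes, the case where $i$ sits quietly in the cache between $\NextTime{t_a}{\hat{e}_{t_a}}$ and $t_b$ cannot be closed, so the proof as proposed does not go through.
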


\begin{proof}
It can be proved in a similar way with \Cref{lemma_not_over_blamed} that $\big\lbrace t \, \vert \, \hat{e}_{t} \neq \perp \, \wedge \, \LastRequest{t}{\hat{e}_{t}} \neq \NULL \, \wedge \, t^{*} \text{ is the abettor of } t \, \wedge \, t^{*} = \LastRequest{t}{\hat{e}_{t}}\big\rbrace$ contains at most one round. It remains to prove that $\big\lbrace t \, \vert \, \hat{e}_{t} \neq \perp \, \wedge \, \LastRequest{t}{\hat{e}_{t}} \neq \NULL \, \wedge \, t^{*} \text{ is the abettor of } t \, \wedge \, t^{*} = \LastRequest{t}{i}\big\rbrace$ contains at most a single round, where $i$ is the competitor of $\hat{e}_{t}$ as defined in Eq.~\eqref{formula_abettor_def}. Notice that different from the case considered in the proof of \Cref{lemma_not_over_blamed}, the page $i$ may not be evicted by {$\FitF$}. Therefore, we need to utilize the properties of procedure \texttt{Labeling} to prove the uniqueness of the round $t$ which satisfies $t^{*} = \LastRequest{t}{i}$.

\begin{claim}\label{claim_competitor_property}
If a round $t$ has an abettor $t^{*} = \LastRequest{t}{i}$ with $i$ being the competitor of $\hat{e}_{t}$ and $r = \LastRequest{t}{\hat{e}_{t}} \neq \NULL$, then it holds that $\predictHat{t}{i} < Z$ and $\predictHat{t'}{i} = Z$ for any round $t' \in [\NextTime{t}{\hat{e}_{t}}, \NextTime{t}{i})$.
\end{claim}

\begin{subproof}
\qedlabel{claim_competitor_property}
Since $t^{*} = \LastRequest{t}{i}$ and $r \neq \NULL$, we have $\NextTime{r}{\sigma_{r}} = \predict{r}$, which by \Cref{lemma_last_to_next} gives $\NextTime{t}{\hat{e}_{t}} = \predictHat{t}{\hat{e}_{t}} < Z$. Following \Cref{lemma_observing_crossing_property}, we have $\predictHat{t}{i} \leq \NextTime{t}{\hat{e}_{t}}$ and $\predictHat{t}{i} < Z$. If there exists a round $t'' \in (t, \NextTime{t}{\hat{e}_{t}})$ so that $\predictHat{t''}{i} = Z$, then it follows from Eq.~\eqref{formula_virtual_NAT} that for any round $t' \in [\NextTime{t}{\hat{e}_{t}}, \NextTime{t}{i})$, it holds that $\predictHat{t'}{i} = Z$, which means that this proposition holds. 
We proceed to prove that $\predictHat{\NextTime{t}{\hat{e}_{t}}}{i} = Z$ if $\predictHat{t''}{i} \neq Z$ holds for every round $t'' \in (t, \NextTime{t}{\hat{e}_{t}})$. In such a case, it can be inferred from Eq.~\eqref{formula_virtual_NAT} that $\predictHat{t''}{i} = \predictHat{t}{i}$. Since $\NextTime{r}{\sigma_{r}} = \predict{r}$, \Cref{lemma_last_to_next} indicates that $\predictHat{t''}{\hat{e}_{t}} = \NextTime{t''}{\hat{e}_{t}} = \NextTime{t}{\hat{e}_{t}} = \predictHat{t}{\hat{e}_{t}}$ and $\predictHat{t''}{\hat{e}_{t}} < Z$ hold for any round $t'' \in (t, \NextTime{t}{\hat{e}_{t}})$. Combining $\predictHat{t''}{\hat{e}_{t}} = \predictHat{t}{\hat{e}_{t}}$ with $\predictHat{t''}{i} = \predictHat{t}{i}$ gives $\predictHat{t''}{\hat{e}_{t}} \geq \predictHat{t''}{i}$. Therefore, $\predictHat{\NextTime{t}{\hat{e}_{t}} - 1}{i} \leq \predictHat{\NextTime{t}{\hat{e}_{t}} - 1}{\hat{e}_{t}} < Z$. Moreover, we have $\predictHat{\NextTime{t}{\hat{e}_{t}} - 1}{i} < \NextTime{t}{\hat{e}_{t}}$ because $\NextTime{t}{\hat{e}_{t}} \geq \predictHat{t}{i} = \predictHat{t''}{i}$ holds for every $t'' \in (t, \NextTime{t}{\hat{e}_{t}})$. By Eq.~\eqref{formula_virtual_NAT}, the conditions for $\predictHat{\NextTime{t}{\hat{e}_{t}}}{i} = Z$ are all satisfied. Thus, the equality $\predictHat{t'}{i} = Z$ holds for any round $t' \in [\NextTime{t}{\hat{e}_{t}}, \NextTime{t}{i})$.
\end{subproof}

Then we have the following arguments. 
\begin{itemize}
    \item For any round $t' \in (t^{*}, t)$, the round $t^{*}$ cannot be the abettor of $t'$ with taking $t^{*} = \LastRequest{t'}{i}$, because otherwise,
    \begin{itemize}
        \item if $\NextTime{t'}{\hat{e}_{t'}} \geq t$, then by the definition of abettors, we have $i \in \ActL_{t}$, which conflicts with the requirement in the definition of abettors; else
        \item if $\NextTime{t'}{\hat{e}_{t'}} < t$, then we get $\predictHat{t}{i} = Z$ with using the second statement in \Cref{claim_competitor_property}, which conflicts with the first statement in \Cref{claim_competitor_property}.
    \end{itemize}
    
    \item For any round $t' \in (t, \min\lbrace \NextTime{t}{\hat{e}_{t}}, t^{\circ} \rbrace]$, the round $t^{*}$ cannot be the abettor of $t'$ with taking $t^{*} = \LastRequest{t'}{i}$ because by definition, $i \in \ActL_{t'}$.
    
    \item For any round $t' \in [\NextTime{t}{\hat{e}_{t}} + 1, t^{\circ})$ with $\hat{e}_{t'} \neq \perp$ and $\LastRequest{t'}{\hat{e}_{t'}} \neq \NULL$, the round $t^{*}$ cannot be the abettor of $t'$ with taking $t^{*} = \LastRequest{t'}{i}$. Suppose on the contrary $t^{*}$ is the abettor of $t'$ with $t^{*} = \LastRequest{t'}{i}$. In such a case, the first statement of \Cref{claim_competitor_property} indicates that $\predictHat{t'}{i} < Z$, which conflicts with the second statement of \Cref{claim_competitor_property}.
    
    \item For any round $t' \in [t^{\circ}, T]$, the round $t^{*}$ cannot be the abettor of $t'$ with taking $t^{*} = \LastRequest{t'}{i}$, because $\LastRequest{t'}{i} > t^{\circ} > t^{*}$.
\end{itemize}
This finishes the proof.
\end{proof}

The following result can be proved by following the same line of arguments with the proof of \Cref{lemma_distance1_blame_null}.

\begin{lemma}\label{lemma_abettor_to_error}
Let $t$ be an arbitrary round that has an abettor $t^{*}$. If $t^{*} = \NULL$, then $\LastRequest{t}{\hat{e}_{t}} = \NULL$. If $\LastRequest{t}{\hat{e}_{t}} \neq \NULL$, then there exists a round $t'$ such that $\lbrace t^{*}, t' \rbrace \in \INV$.
\end{lemma}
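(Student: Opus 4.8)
The plan is to reprise the proof of \Cref{lemma_distance1_blame_null} almost verbatim, with the page $e_t$ appearing there replaced by the competitor $i$ of $\hat e_t$ from Eq.~\eqref{formula_abettor_def}. First I would record the analog of the crossing property of \Cref{lemma_observing_crossing_property}: $\NextTime{t}{\hat e_t} < \NextTime{t}{i}$ holds by the definition of a competitor, while $\predictHat{t}{i} \leq \predictHat{t}{\hat e_t}$ holds because $\hat e_t \neq \perp$ means {\Sim} incurs a cache miss at round $t$ and evicts the page of $\hat C_t$ maximizing $\predictHat{t}{\cdot}$, and $i \in \hat C_t$. I would also note the consequences $i \neq \sigma_t$ and $\hat e_t \neq \sigma_t$ (both since $\sigma_t \notin \hat C_t$ on a miss), which are needed to invoke \Cref{lemma_last_to_next}, and that, writing $r = \LastRequest{t}{\hat e_t}$ and $r' = \LastRequest{t}{i}$, one has $\sigma_r = \hat e_t$, $\NextTime{r}{\sigma_r} = \NextTime{t}{\hat e_t}$, and (when $r' \neq \NULL$) $\sigma_{r'} = i$ and $\NextTime{r'}{\sigma_{r'}} = \NextTime{t}{i}$, because neither $\hat e_t$ nor $i$ is requested in the interval $(r, t]$, resp.\ $(r', t]$.

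For the first claim I would argue by contradiction: suppose $t^{*} = \NULL$ but $r \neq \NULL$. Reading Eq.~\eqref{formula_definition_abettor} with $r \neq \NULL$, the value $t^{*} = \NULL$ forces $\NextTime{r}{\sigma_r} = \predict{r}$, and, in the sub-case $r' \neq \NULL$, also $\NextTime{r'}{\sigma_{r'}} = \predict{r'}$; \Cref{lemma_last_to_next} then gives $\predictHat{t}{\hat e_t} = \NextTime{t}{\hat e_t}$ and $\predictHat{t}{i} = \NextTime{t}{i}$, so $\predictHat{t}{i} > \predictHat{t}{\hat e_t}$, contradicting the crossing property. In the sub-case $r' = \NULL$, \Cref{lemma_last_to_next} gives $\predictHat{t}{i} = Z + 1 > Z > \NextTime{t}{\hat e_t} = \predictHat{t}{\hat e_t}$, again a contradiction.

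For the second claim, with $r \neq \NULL$ we have $\predictHat{t}{\hat e_t} \leq Z$ by \Cref{lemma_last_to_next}, and I would split on $\predictHat{t}{\hat e_t} < Z$ versus $\predictHat{t}{\hat e_t} = Z$. If $\predictHat{t}{\hat e_t} < Z$, the crossing property forces $\predictHat{t}{i} \leq \predictHat{t}{\hat e_t} < Z$, so $r' \neq \NULL$ and Eq.~\eqref{formula_virtual_NAT} yields $\predict{r} = \predictHat{t}{\hat e_t} \geq \predictHat{t}{i} = \predict{r'}$; together with $\NextTime{r}{\sigma_r} = \NextTime{t}{\hat e_t} < \NextTime{t}{i} = \NextTime{r'}{\sigma_{r'}}$ this gives $\{r, r'\} \in \INV$, and a short case analysis of Eq.~\eqref{formula_definition_abettor} (according to whether $\NextTime{r}{\sigma_r} = \predict{r}$) shows $t^{*} \in \{r, r'\}$, so $t'$ may be taken to be the other member of $\{r, r'\}$. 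If $\predictHat{t}{\hat e_t} = Z$, I would let $t_1$ be the first round of $(r, t]$ at which the remedy value $Z$ is assigned to $\hat e_t$; unwinding the trigger condition of Eq.~\eqref{formula_virtual_NAT} at $t_1$ produces the page $\sigma_{t_1}$, last requested before $t_1$ at some $r_1 = \LastRequest{t_1}{\sigma_{t_1}} \neq \NULL$, with $\predict{r_1} = \predictHat{t_1 - 1}{\sigma_{t_1}} \geq \predictHat{t_1 - 1}{\hat e_t} = \predict{r}$, while $\NextTime{r_1}{\sigma_{r_1}} = t_1 \leq t < \NextTime{t}{\hat e_t} = \NextTime{r}{\sigma_r}$; hence $\{r_1, r\} \in \INV$, and since $\predictHat{t}{\hat e_t} = Z$ rules out $\NextTime{r}{\sigma_r} = \predict{r}$ via \Cref{lemma_last_to_next}, Eq.~\eqref{formula_definition_abettor} gives $t^{*} = r$, so $t' = r_1$ works.

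The routine parts are the bookkeeping that identifies which remedy prediction equals which $\predict{\cdot}$ value. The one place that I expect to need care, exactly as in the proof of \Cref{lemma_distance1_blame_null}, is the $\predictHat{t}{\hat e_t} = Z$ branch: one must pin down the round $t_1$ at which the remedy first fires for $\hat e_t$ and check that the trigger inequality $\predictHat{t_1 - 1}{\hat e_t} \leq \predictHat{t_1 - 1}{\sigma_{t_1}}$ of Eq.~\eqref{formula_virtual_NAT} really does translate back into $\predict{r_1} \geq \predict{r}$, which relies on the facts that the remedy has not fired for $\hat e_t$ anywhere in $(r, t_1)$ and that $\hat e_t = \sigma_r$ and $\sigma_{t_1}$ were last requested before $t_1$ at $r$ and $r_1$.
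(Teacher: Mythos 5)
Your proposal is correct and is exactly the argument the paper intends: the paper itself only remarks that \Cref{lemma_abettor_to_error} ``can be proved by following the same line of arguments with the proof of \Cref{lemma_distance1_blame_null}'', and your write-up carries out precisely that adaptation, substituting the competitor $i$ for $e_t$ and verifying the analog of \Cref{lemma_observing_crossing_property} from the definition of a competitor and the eviction rule of \Sim{}.
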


\begin{remark}
Notice that the statement of \Cref{lemma_abettor_to_error} is consistent because for any round $t$ having an abettor, by definition we have $\hat{e}_{t} \neq \perp$.
\end{remark}

For an arbitrary round $t$, if there exists a page $i$ in $\hat{C}_{t}$ satisfies (1) $\NextTime{t}{i} \leq T$, (2) $i \notin \ActL_{t}$ and (3) $i \in \ActL_{t'}$ for every $t' \in [t + 1, \NextTime{t}{i}]$, we say that $t^{\triangle} = \NextTime{t}{i}$ is the \emph{savior} of $t$.

\begin{lemma} \label{lemma_cancel_out_cost}
If a round $t$ has a savior $t^{\triangle}$, then (1) $t^{\triangle} \in \HD{}{-1}$, and (2) for any $t^{\triangle} \in \HD{}{-1}$, it is the savior of at most one step $t$.
\end{lemma}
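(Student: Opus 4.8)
The plan is to prove the two parts separately. Throughout, fix the page $i \in \hat{C}_t$ witnessing the savior of $t$, so that $t^{\triangle} = \NextTime{t}{i}$ and hence $\sigma_{t^{\triangle}} = i$. For part~(1) it suffices to show $\delta_{t^{\triangle}} = -1$, i.e.\ $\sigma_{t^{\triangle}} \notin C_{t^{\triangle}}$ and $\sigma_{t^{\triangle}} \in \hat{C}_{t^{\triangle}}$. The first of these is immediate: taking $t' = t^{\triangle}$ in the savior condition gives $i \in \ActL_{t^{\triangle}}$, and then \Cref{lemma_labelled_not_optimal} forces $i \notin C_{t^{\triangle}}$, so {\FitF} incurs a miss at $t^{\triangle}$. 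For the second, note that $i$ is not requested anywhere in $(t, t^{\triangle})$, so it is enough to establish that $i \in \hat{C}_{t+1}$ and that {\Sim} never evicts $i$ during $[t+1, t^{\triangle}-1]$; this is the core of the argument.

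The idea is to track the label on $i$. Since $i \notin \ActL_t$ but $i \in \ActL_{t+1}$, \Cref{lemma_unique_labelled_page} supplies a unique troublemaker $\gamma^{*} \le t$ with $\lambda_{\gamma^{*}}(t+1) = i$. I would first record an elementary property of \texttt{Labeling}: whenever it \emph{freshly} assigns a page $p$ to a troublemaker at a round $s$ --- either as $\hat{\delta}_\gamma$ at round $s = \gamma+1$, drawn from $\Phi_\gamma$, or as a replacement drawn from $\Psi_s$ inside the while-loop --- one has $p \notin \ActL_{s-1}$. Combining this with \Cref{lemma_unique_labelled_page} in a short induction over the block $[t+1, t^{\triangle}]$ (if $\lambda_{\gamma^{*}}(s) = i$ and $i \in \ActL_{s+1}$, the troublemaker labelling $i$ at $s+1$ cannot be a fresh assignment because $i \in \ActL_s$, hence equals $\gamma^{*}$ by uniqueness) will show that $\lambda_{\gamma^{*}}(s) = i$ for \emph{every} $s \in [t+1, t^{\triangle}]$; in particular $\gamma^{*} \in \Gamma_{t^{\triangle}}$, so $\NextTime{\gamma^{*}}{\hat{e}_{\gamma^{*}}} > t^{\triangle}$. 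The same fresh-assignment property, applied at $s = t+1$, also yields $i \in \hat{C}_{t+1}$, since $i$ was drawn from $\Phi_t$ or $\Psi_{t+1}$, both of which sit inside $\hat{C}_{t+1}$ (when $\gamma^{*} = t$ one additionally uses $\hat{e}_t \in I_t \subseteq C_t \not\ni i$ and $\sigma_t \notin \hat{C}_t \ni i$ to see that $i$ is neither evicted nor overwritten at round $t$).

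It remains to show {\Sim} does not evict $i$ during $[t+1, t^{\triangle}-1]$. The first step here is to rule out that $\gamma^{*}$ follows the ``keep $\hat{\delta}_\gamma$ throughout'' branch: there $\gamma^{*}$ only ever labels $\hat{\delta}_{\gamma^{*}}$, so $i = \hat{\delta}_{\gamma^{*}}$, and the branch condition gives $\NextTime{\gamma^{*}}{i} > \NextTime{\gamma^{*}}{\hat{e}_{\gamma^{*}}} > t^{\triangle} > t \ge \gamma^{*}$; but then $i$ is not requested in $(\gamma^{*}, t]$, whence $\NextTime{\gamma^{*}}{i} = \NextTime{t}{i} = t^{\triangle}$ --- a contradiction. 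Hence $\gamma^{*}$ runs the while-loop. If {\Sim} evicted $i$ at some $t' \in [t+1, t^{\triangle}-1]$, then since $\lambda_{\gamma^{*}}(t') = i = \hat{e}_{t'}$ while $\sigma_{t'} \neq i$, the while-loop replaces the label at round $t'+1$ by a page drawn from $\Psi_{t'+1}$, which excludes $i$ because $i \in \ActL_{t'}$; thus $\lambda_{\gamma^{*}}(t'+1) \neq i$, contradicting $\lambda_{\gamma^{*}}(s) = i$ for all $s \in [t+1, t^{\triangle}]$. So $i$ remains in {\Sim}'s cache from $t+1$ through $t^{\triangle}$, i.e.\ $i \in \hat{C}_{t^{\triangle}}$; therefore {\Sim} does not miss at $t^{\triangle}$, $\delta_{t^{\triangle}} = -1$, and $t^{\triangle} \in \HD{}{-1}$.

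Part~(2) is short. If $t^{\triangle}$ were the savior of two rounds $t_1 < t_2$ with witnesses $i_1, i_2$, then $\NextTime{t_1}{i_1} = t^{\triangle} = \NextTime{t_2}{i_2}$ forces $i_1 = \sigma_{t^{\triangle}} = i_2 =: i$; since $t_1 < t_2 < t^{\triangle}$, the round $t_2$ lies in $[t_1+1, \NextTime{t_1}{i}]$, so the savior condition for $t_1$ gives $i \in \ActL_{t_2}$, whereas the savior condition for $t_2$ requires $i \notin \ActL_{t_2}$ --- a contradiction. The main obstacle is establishing $i \in \hat{C}_{t^{\triangle}}$ in part~(1): this is the only place where one genuinely has to unwind \texttt{Labeling} (the fresh picks from $\Phi_\gamma$ and $\Psi_s$, the replacement and break rules of the while-loop, and the exclusion of the keep-throughout branch), whereas everything else reduces to \Cref{lemma_labelled_not_optimal} and \Cref{lemma_unique_labelled_page}.
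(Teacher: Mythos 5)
Your proof is correct in substance, and it is instructive to compare it with the paper's: the paper disposes of part~(1) with the single sentence that it ``trivially follows from the definition of $\HD{}{-1}$'', and of part~(2) with exactly the observation you make (any later round $t'$ with the same witnessing page has $\sigma_{t^{\triangle}} \in \ActL_{t'}$, violating condition~(2) of the savior definition). So your part~(2) coincides with the paper's, while for part~(1) you supply the argument the paper suppresses. The half of part~(1) concerning \FitF{} ($i \in \ActL_{t^{\triangle}}$ plus \Cref{lemma_labelled_not_optimal}) is indeed immediate; the half concerning \Sim{} ($i \in \hat{C}_{t^{\triangle}}$) genuinely requires unwinding \texttt{Labeling}, and your chain --- unique troublemaker $\gamma^{*}$ via \Cref{lemma_unique_labelled_page}, the fresh-assignment property $p \notin \ActL_{s-1}$, the induction showing $\lambda_{\gamma^{*}}(s) = i$ on all of $[t+1, t^{\triangle}]$, and the exclusion of the keep-throughout branch --- is sound and is the right way to make the paper's ``trivially'' rigorous.

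One branch of the procedure is not covered by your eviction argument as written: the \emph{bulk-assignment} branch inside the while-loop (the one that sets $\lambda_{\gamma}(t') = i$ for a whole interval and then breaks). If $i$ were labelled via that branch, the loop is no longer iterating, so ``the while-loop replaces the label at round $t'+1$'' does not apply, and the label could in principle outlive $i$'s presence in $\hat{C}$. You should rule this branch out explicitly, and you can: the bulk range is capped at $\min\lbrace \NextTime{\gamma^{*}}{\hat{e}_{\gamma^{*}}}, \NextTime{s_0}{i} \rbrace$ where $s_0 = t+1$ is the fresh-assignment round, and since $i$ is not requested in $(t, t^{\triangle})$ we have $\NextTime{t+1}{i} = \NextTime{t}{i} = t^{\triangle}$ whenever $t^{\triangle} > t+1$, so the bulk label cannot reach round $t^{\triangle}$, contradicting $\lambda_{\gamma^{*}}(t^{\triangle}) = i$; and when $t^{\triangle} = t+1$ the page $i$ is drawn from $\Psi_{t+1} \subseteq \hat{C}_{t+1} = \hat{C}_{t^{\triangle}}$ and there is nothing left to prove. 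With that case added, the proof is complete.
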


\begin{proof}
The first claim trivially follows from the definition of $\HD{}{-1}$. For any round $t' \in [t + 1, t^{\triangle}]$, $t^{\triangle}$ is not the savior of $t'$, because $\sigma_{t^{\triangle}} \in \ActL_{t'}$. Therefore, the second claim holds.
\end{proof}

\begin{theorem} \label{theorem_troublemakers_bounding}
$\lvert \Gamma \rvert - \lvert \HD{}{-1} \rvert \leq 2 \cdot \eta + k$.
\end{theorem}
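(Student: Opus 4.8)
The plan is a charging argument built on Procedure \texttt{Labeling}. I would partition $\Gamma$ into three classes and bound each separately: $\Gamma^{\triangle}$, the troublemakers that yield a \emph{savior}, each mapped into $\HD{}{-1}$; $\Gamma^{*}_{\eta}$, the troublemakers that yield a \emph{competitor} whose associated abettor is a prediction error lying inside an inversion, charged at most twice per such error; and $\Gamma^{*}_{0}$, the troublemakers that yield a competitor but whose witness round $w$ has $\LastRequest{w}{\hat{e}_{w}} = \NULL$, charged injectively to the at most $k$ rounds identified in \Cref{lemma_not_many_null}. Establishing $|\Gamma^{\triangle}| \le |\HD{}{-1}|$, $|\Gamma^{*}_{\eta}| \le 2\eta$ and $|\Gamma^{*}_{0}| \le k$ then gives $|\Gamma| - |\HD{}{-1}| \le 2\eta + k$.

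The substantive work is to show that every troublemaker $\gamma$ lands in one of these classes, by reading off the run of \texttt{Labeling} on $\gamma$ a \emph{witness round} $w_{\gamma}$ and a \emph{witness page} $i_{\gamma}$ with $i_{\gamma}$ labelled by $\gamma$ on an interval beginning exactly at $w_{\gamma}+1$, so $i_{\gamma} \in \ActL_{w_{\gamma}+1} \setminus \ActL_{w_{\gamma}}$. The three sources of such a pair are: (i) if $\tau_{\gamma} > \NextTime{\gamma}{\hat{e}_{\gamma}}$, then $\hat{\delta}_{\gamma}$ has larger next arrival time than $\hat{e}_{\gamma}$ and is labelled by $\gamma$ throughout its active period, so it is a competitor of $\hat{e}_{\gamma}$ at $w_{\gamma} = \gamma$, producing an abettor $t^{*}_{\gamma}$; (ii) if the while loop performs a \emph{hard} relabelling, i.e.\ the page $\lambda_{\gamma}(t-1) = \hat{e}_{t-1}$ is evicted by \Sim{} at round $t-1$ and replaced from $\Psi_{t}$ by a page of strictly larger next arrival time, then that page is a competitor of $\hat{e}_{t-1}$ at $w_{\gamma} = t-1$, again producing an abettor; (iii) if the loop instead breaks because the currently labelled page is requested inside the active period, then that page $i_{\gamma}$ was inserted at some round $w_{\gamma}+1$ (as $\hat{\delta}_{\gamma}$ or via a soft relabelling from $\Psi_{w_{\gamma}+1}$), remained both in $\hat{C}$ and labelled by $\gamma$ up to its request, and its request $\NextTime{w_{\gamma}}{i_{\gamma}} \le T$ is a savior of $w_{\gamma}$. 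The point I expect to be hardest is confirming that these alternatives are \emph{exhaustive}: a label can persist to the very end of an active period without the page ever being requested (this already happens in case (i) when \Sim{} never evicts $\hat{\delta}_{\gamma}$ before $\NextTime{\gamma}{\hat{e}_{\gamma}}$, and in a soft-relabelling variant of case (iii)), and then one must either still exhibit a savior/competitor by looking one step past the active period or account for the offending troublemakers through a further bounded term. The definitions of $\Phi_{t}$, $\Psi_{t}$ together with the availability guarantees of \Cref{lemma_availability_on_troublemaker} and \Cref{lemma_availablity_on_eviction} are the tools for this case check.

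Once exhaustiveness is in place, the key structural fact is that $\gamma \mapsto w_{\gamma}$ is injective: by \Cref{lemma_one_new_labeled} at most one page enters $\ActL$ at each round, and by \Cref{lemma_unique_labelled_page} that page is labelled by at most one active troublemaker, so $w_{\gamma}$ determines $i_{\gamma}$, which determines $\gamma$. The three bounds then follow. For $\Gamma^{\triangle}$: distinct $w_{\gamma}$ give distinct saviors by \Cref{lemma_cancel_out_cost}(2), and every savior lies in $\HD{}{-1}$ by \Cref{lemma_cancel_out_cost}(1), so $|\Gamma^{\triangle}| \le |\HD{}{-1}|$. For the competitor classes: if $\LastRequest{w_{\gamma}}{\hat{e}_{w_{\gamma}}} = \NULL$ then $w_{\gamma}$ is one of the $\le k$ rounds of \Cref{lemma_not_many_null} (here $\hat{e}_{w_{\gamma}} \neq \perp$ because $w_{\gamma}$ has a competitor), and injectivity gives $|\Gamma^{*}_{0}| \le k$; otherwise \Cref{lemma_abettor_to_error} gives $t^{*}_{\gamma} \neq \NULL$ and places $t^{*}_{\gamma}$ inside an inversion, while Eq.~\eqref{formula_definition_abettor} forces $\NextTime{t^{*}_{\gamma}}{\sigma_{t^{*}_{\gamma}}} \neq \predict{t^{*}_{\gamma}}$, so $t^{*}_{\gamma}$ is counted in $\eta$, and since \Cref{lemma_crossing_property} bounds by two the number of rounds having any fixed abettor, injectivity gives $|\Gamma^{*}_{\eta}| \le 2\eta$. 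Summing the three estimates completes the proof.
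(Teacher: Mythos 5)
Your proposal is essentially the paper's proof: the same charging of each troublemaker to a distinct ``witness'' round (the paper calls it a \emph{broker}) read off the run of \texttt{Labeling}, the same trichotomy (initial label outlives $\NextTime{\gamma}{\hat{e}_{\gamma}}$ $\Rightarrow$ abettor; hard relabelling $\Rightarrow$ abettor; label requested $\Rightarrow$ savior), the same injectivity via \Cref{lemma_unique_labelled_page} and \Cref{lemma_one_new_labeled}, and the same final counting through \Cref{lemma_crossing_property}, \Cref{lemma_abettor_to_error}, \Cref{lemma_cancel_out_cost} and \Cref{lemma_not_many_null}. The one point you flag as open --- exhaustiveness when a label survives to the end of the active period without its page being requested --- is not an obstruction: in your case (i) this is exactly the competitor/abettor situation (the competitor definition only requires the label to persist up to $\min\{\NextTime{t}{\hat{e}_{t}}, t^{\circ}\}$, not that the page be requested), and in the soft-relabelling branch each replacement strictly decreases the labelled page's next arrival time, so the last softly-labelled page satisfies $\NextTime{t-1}{i} < \tau_{\gamma} < \NextTime{\gamma}{\hat{e}_{\gamma}}$ and must therefore either be evicted (forcing another relabelling, contradicting its being last) or be requested inside the active period, yielding the savior. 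This monotonicity observation is the inductive claim the paper uses to close case 3, and adding it completes your argument.
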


\begin{proof}
The main idea of this proof is to show that each troublemaker can be mapped to a distinct \emph{broker} round, and each broker either has an abettor or has a savior. 
In particular, we classify the troublemakers $\gamma \in \Gamma$ into the following three categories.
\begin{enumerate}
    \item $\lbrace \gamma \in \Gamma \mid \tau_{\gamma} > A_{\gamma}(\hat{e}_{\gamma}) \rbrace$: Here, the troublemaker $\gamma$ as a round has an abettor $t^{*}$, because $\hat{\delta}_{\gamma} \in \ActL_{t}$ for every step $t \in [\gamma + 1, \min\lbrace \NextTime{\gamma}{\hat{e}_{\gamma}}, t^{\circ} \rbrace]$ where $t^{\circ}$ is defined in the same way with Eq.~\eqref{formula_abettor_def}. In such a case, we say $\gamma$ is the broker of itself.
    \item $\lbrace \gamma \in \Gamma \mid \tau_{\gamma} < A_{\gamma}(\hat{e}_{\gamma}) \wedge \hat{\delta}_{\gamma} \in \hat{C}_{\tau_{\gamma}} \rbrace$: Now the troublemaker $\gamma$ as a round has a savior $\tau_{\gamma}$, because by the definition of the troublemaker, $\NextTime{\gamma}{\hat{e}_{\gamma}} \leq T$, which gives $\tau_{\gamma} \leq T$. Moreover, it holds for every round $t \in [\gamma + 1, A_{\gamma}(\hat{\delta}_{\gamma})]$ that $\hat{\delta}_{\gamma} \in \ActL_{t}$. The broker for such a troublemaker $\gamma$ is also itself.
    \item $\lbrace \gamma \in \Gamma \mid \tau_{\gamma} < A_{\gamma}(\hat{e}_{\gamma}) \wedge \hat{\delta}_{\gamma} \notin \hat{C}_{\tau_{\gamma}} \rbrace$: In such a case, let $i$ be the last page labelled by $\gamma$ and $t$ be the first round with $\lambda_{\gamma}(t) = i$. Since $\hat{\delta}_{\gamma} \notin \hat{C}_{\tau_{\gamma}}$, we have $t - 1 \in \ActI_{\gamma}$. Let $i' = \lambda_{\gamma}(t - 1)$, then we have $i' \in \hat{C}_{t - 1}$ because $\hat{e}_{t - 1} = i'$. Also, we have $i \in \hat{C}_{t - 1}$, because $i$ is chosen from $\Psi_{t} \subseteq \hat{C}_{t} \setminus C_{t}$ and $i \neq \hat{e}_{t - 1}$. Now consider two subcases.
    \begin{itemize}
        \item $\NextTime{t - 1}{i'} < \NextTime{t - 1}{i}$: In such a case, the round $t - 1$ has an abettor $t^{*}$, because we have $i \in \ActL_{t'}$ for every $t' \in [t, \min\lbrace \NextTime{t - 1}{i}, \tau_{\gamma}, t^{\circ} \rbrace]$, which satisfies the requirement in the definition of abettors because $\tau_{\gamma} > \NextTime{t - 1}{i'}$.
        \item $\NextTime{t - 1}{i'} > \NextTime{t - 1}{i}$: It can be proved inductively that $\NextTime{t - 1}{i} < \tau_{\gamma}$, which implies that $\NextTime{t - 1}{i} \leq T$. By definition, it follows that $t - 1$ has a savior $t^{\triangle} = \NextTime{t - 1}{i}$.
    \end{itemize}
    The round $t - 1$ is said to be the \emph{broker} of the troublemaker $\gamma$. \Cref{lemma_unique_labelled_page} ensures that the round $t - 1$ cannot be the broker of two different troublemakers, because $\hat{e}_{t - 1} = \lambda_{\gamma}(t - 1)$. Moreover, by \Cref{lemma_one_new_labeled}, the broker $t - 1$ is not a troublemaker.
\end{enumerate}
To sum up, each troublemaker $\gamma$ can be mapped to a distinct broker $t$, and each broker $t$ either has an abettor or has a savior. 
Then by \Cref{lemma_crossing_property}, \Cref{lemma_abettor_to_error} and \Cref{lemma_cancel_out_cost}, this theorem holds.
\end{proof}

The following result is obtained by combining \Cref{theorem_reduction} and \Cref{theorem_troublemakers_bounding}.

\begin{theorem} \label{theorem_single_expert_bound}
$\cost(\mathtt{Sim}) - \mathtt{OPT} \leq 6\eta + 5k$.
\end{theorem}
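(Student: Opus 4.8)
The plan is to obtain this bound as an immediate consequence of the two structural results already in place, with essentially no new argument required. \Cref{theorem_reduction} has reduced the cost gap to a purely combinatorial residual: $\cost(\Sim) - \OPT \leq 4\eta + 4k + \lvert \Gamma \rvert - \lvert \HD{}{-1} \rvert$. It is worth keeping in view how that reduction went, since it explains where the constants originate: one decomposes $\cost(\Sim) - \OPT = \lvert \HD{}{1}\rvert - \lvert \HD{}{-1}\rvert = \lvert \HD{-1}{1}\rvert + \lvert \HD{0}{1}\rvert - \lvert \HD{}{-1}\rvert$ (using $\HD{1}{1} = \emptyset$ from \Cref{lemma_sets_description}), bounds $\lvert \HD{-1}{1}\rvert \leq 2\eta + 2k$ via the blaming argument of \Cref{lemma_blaming_2error}, and bounds $\lvert \HD{0}{1}\rvert$ by $\lvert \HD{1}{}\rvert + \lvert \Gamma\rvert + k$ through the parent/child injection of \Cref{lemma_1parent_1child} together with \Cref{lemma_not_many_null2}; plugging in $\lvert \HD{1}{}\rvert \leq 2\eta + k$ gives the stated form.

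The second ingredient, \Cref{theorem_troublemakers_bounding}, disposes of the residual term by establishing $\lvert \Gamma \rvert - \lvert \HD{}{-1} \rvert \leq 2\eta + k$: every troublemaker is assigned, via Procedure \texttt{Labeling}, a distinct broker round that carries either an abettor (routed at most two-to-one into inverted rounds with a prediction error by \Cref{lemma_crossing_property} and \Cref{lemma_abettor_to_error}) or a savior (routed injectively into $\HD{}{-1}$ by \Cref{lemma_cancel_out_cost}), so that $\lvert \Gamma \rvert \leq 2\eta + k + \lvert \HD{}{-1}\rvert$ once the $\NULL$-valued exceptions are absorbed into the additive $k$.

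Combining the two bounds is then pure arithmetic: substituting $\lvert \Gamma \rvert - \lvert \HD{}{-1}\rvert \leq 2\eta + k$ into the bound of \Cref{theorem_reduction} yields $\cost(\Sim) - \OPT \leq 4\eta + 4k + 2\eta + k = 6\eta + 5k$. In particular this sharpens and re-establishes the asymptotic claim $\regret(\Sim) \leq O(\eta + k)$ of \Cref{theorem:bound-regret-Sim}. There is no real obstacle at this final stage — all the difficulty has been front-loaded into the troublemaker-counting machinery of \Cref{section:labeling-troublemakers} — so the only thing worth a second glance is that the several additive $k$ contributions (from $\HD{-1}{1}$, from \Cref{lemma_not_many_null2}, and from the broker accounting of $\Gamma$) come from independently derived bounds and therefore simply add, giving the clean constant $5k$.
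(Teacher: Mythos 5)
Your proof is correct and is exactly the paper's argument: the theorem is obtained by adding the bound of \Cref{theorem_reduction} to the bound $\lvert \Gamma \rvert - \lvert \HD{}{-1} \rvert \leq 2\eta + k$ of \Cref{theorem_troublemakers_bounding}, giving $4\eta + 4k + 2\eta + k = 6\eta + 5k$. Your recap of where the constants in the two ingredient theorems come from is also accurate.
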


Because it is assumed that the number $\eta$ of prediction errors of the single {\expert} satisfies $\eta \in o(T)$, we have $\cost(\mathtt{Sim}) - \mathtt{OPT} \in o(T)$. Therefore, {\Sim} has the vanishing regret when there is a single {\expert}.


\section{Multiple NAT \Experts{}}
\label{section:multiple-experts-oracle}

This section extends the result obtained in \Cref{section:single-expert-oracle} to the more general scenario where there are $M > 1$ NAT {\experts} for both the bandit access model and the full-information model, respectively.

\subsection{Bandit Access Model}
\label{section:bandit}

For the bandit access model, in this part, we design an algorithm called \emph{Sightless Chasing and Switching} ({\SightlessChasing}) and prove that it has the vanishing regret.

\begin{algorithm}[!ht]
\caption{Algorithm {\SightlessChasing}}
\label{algorithm_SCaS}
    \SetAlgoLined
    \KwIn{$\lbrace \sigma_{t} \rbrace_{t \in [T]}$, MBP algorithm $\mathtt{INF}$, initial cache profile $\hat{C}_{1}$}
    \KwOut{$\lbrace \hat{e}_{t} \rbrace_{t \in [T]}$}
    Initialize the MBP algorithm $\mathtt{INF}$ with the number of rounds $\Upsilon = \lceil T / \tau \rceil$ and the set of arms $X = [M]$; \\
    \For{each round $t \in [T]$}{
        \If{$t \bmod \tau = 1$}{
            Invoke $\mathtt{INF}$ to choose a {\expert} $j_{t} \in [M]$;\\
            Set $j = j_{t}$;\\
        }
        \Else{
            Set $j = j_{t'}$ with $t' = \lfloor t / \tau \rfloor \cdot \tau + 1$;\\
        }
        Query the {\expert} $j$ to obtain the prediction $\predictJ{t}{j}$;\\
        \For{each page $i \in [n]$}{
            \If{$i = \sigma_{t}$}{
                Set $\predictHat{t}{i} = \predictJ{t}{j}$; \\
            }
            \ElseIf{$t \bmod \tau = 1$}{
                Set $\predictHat{t}{i} = Z + 1$; \\
            }
            \ElseIf{$\predictHat{t - 1}{i} = t \, \wedge \, i \neq \sigma_{t} \, \wedge \, \predictHat{t - 1}{i} \leq \predictHat{t - 1}{\sigma_{t}} < Z$}{
                Set $\predictHat{t}{i} = Z$;\\
            }
            \Else{
                Set $\predictHat{t}{i} = \predictHat{t - 1}{i}$;\\
            }
        }
        \If{$\sigma_{t} \notin \hat{C}_{t}$}{
            Set $\hat{e}_{t}$ be the page $i \in C_{t}$ that maximizes $\predictHat{t}{i}$ with breaking ties arbitrarily;\\
            Update $\hat{C}_{t + 1} = (\hat{C}_{t} \setminus \lbrace \hat{e}_{t} \rbrace) \cup \lbrace \sigma_{t} \rbrace$;\\
        }
        \Else{
            Set $\hat{e}_{t} = \perp$, and set $\hat{C}_{t + 1} = \hat{C}_{t}$;\\
        }
        \If{$t \bmod \tau = 0$}{
            Set $f = 0$; \\
            \For{each round $t' \in [t - \tau + 1, t]$}{
                \If{$(t' = t - \tau + 1) \, \vee \,
                (\hat{e}_{t'} \neq \perp) \, \vee \,
                \Big( (\hat{e}_{t'} = \perp) \, \wedge \, (t' > t - \tau + 1) \, \wedge \, (\predictHat{t' - 1}{\sigma_{t'}} = Z + 1) \Big) $}{
                    Set $f = f + 1$; \label{algoline_virtual_CAPEX_increment}\\
                }
            }
            Send $\frac{f}{\tau}$ to $\mathtt{INF}$ as the cost incurred by $j_{t - \tau + 1}$ in the epoch $\frac{t}{\tau}$;\\
        }
    }
\end{algorithm}

The procedure of {\SightlessChasing} is described in Algorithm \ref{algorithm_SCaS}. It is assumed that {\SightlessChasing} is provided with blackbox accesses to the online algorithm \emph{Implicitly Normalized Forecaster} (\texttt{INF}) \cite{Audibert2009MPA} for the \emph{Multiarmed Bandit Problem (MBP)}~\cite{Auer2002NMB}. The MBP problem is an online problem defined over $\Upsilon \in \mathbb{Z}_{> 0}$ rounds and a set $X$ of arms. An oblivious adversary specifies a cost function $F_{\upsilon}: X \mapsto [0, 1]$ for each round $\upsilon \in [\Upsilon]$ that maps each arm $x \in X$ to a cost in $[0, 1]$. An algorithm for MBP needs to choose an arm $x_{\upsilon}$ at the beginning of each round $\upsilon \in [\Upsilon]$, and then the cost $F_{\upsilon}(x_{\upsilon})$ incurred by the chosen arm $x_{\upsilon}$ is revealed to the algorithm. The objective of MBP is to minimize the cumulative cost incurred by the chosen arms $\lbrace x_{\upsilon} \rbrace_{\upsilon \in \Upsilon}$.

\begin{theorem}[\cite{Audibert2009MPA}]\label{theorem_INF_regret}
The algorithm \texttt{INF} ensures that the chosen arms $\lbrace x_{\upsilon} \rbrace_{\upsilon \in [\Upsilon]}$ satisfy
\begin{equation*}
    \sum_{\upsilon \in [\Upsilon]} F_{\upsilon}(x_{\upsilon}) - \min_{x^{*} \in X} \sum_{\upsilon \in [\Upsilon]} F_{\upsilon}(x^{*}) \in O\big(\sqrt{|X| \cdot \Upsilon}\big) \, .
\end{equation*}
\end{theorem}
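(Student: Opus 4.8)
The plan is to treat Theorem~\ref{theorem_INF_regret} as a black box imported from Audibert and Bubeck~\cite{Audibert2009MPA}: the statement is exactly their worst-case (pseudo-)regret guarantee for the Implicitly Normalized Forecaster in its polynomial ``poly-\texttt{INF}'' instantiation, so the ``proof'' here is a pointer to that work rather than a new argument. For completeness I would recall the shape of their analysis. In round $\upsilon$, \texttt{INF} plays a distribution $p_{\upsilon}$ over the arm set $X$ defined implicitly through a fixed strictly decreasing convex potential $\psi:\mathbb{R}_{<0}\to\mathbb{R}_{>0}$: one maintains the importance-weighted loss estimates $\tilde F_{\upsilon}(x)=\frac{F_{\upsilon}(x_{\upsilon})}{p_{\upsilon,x_{\upsilon}}}\,\mathbf{1}\{x=x_{\upsilon}\}$, which are conditionally unbiased for $F_{\upsilon}(x)$, sets $\tilde V_{\upsilon,x}=\sum_{s<\upsilon}\tilde F_{s}(x)$, and puts $p_{\upsilon,x}=\psi(\tilde V_{\upsilon,x}+C_{\upsilon})$, where $C_{\upsilon}$ is the unique scalar making $\sum_{x}p_{\upsilon,x}=1$.

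First I would track the sequence of implicit normalizers $C_{1},C_{2},\dots$ as a potential and bound each increment $C_{\upsilon+1}-C_{\upsilon}$ from below by the instantaneous expected regret $\sum_{x}p_{\upsilon,x}F_{\upsilon}(x)-F_{\upsilon}(x^{*})$ minus a controllable second-order term; the two facts that make this work are convexity/smoothness of $\psi$ and the elementary identity $\sum_{x}p_{\upsilon,x}\tilde F_{\upsilon}(x)^{2}=\frac{F_{\upsilon}(x_{\upsilon})^{2}}{p_{\upsilon,x_{\upsilon}}}\le\frac{1}{p_{\upsilon,x_{\upsilon}}}$, whose conditional expectation is at most $|X|$. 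Telescoping over $\upsilon\in[\Upsilon]$ and taking expectations then yields
\[
\mathbb{E}\!\left[\sum_{\upsilon\in[\Upsilon]}F_{\upsilon}(x_{\upsilon})\right]-\min_{x^{*}\in X}\sum_{\upsilon\in[\Upsilon]}F_{\upsilon}(x^{*})\;\le\;(\text{boundary terms of }C)+(\text{accumulated second-order terms}),
\]
with every quantity on the right expressible in terms of $|X|$, $\Upsilon$, and the parameters of $\psi$.

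The step I expect to be the real crux — and the reason \texttt{INF} is used rather than plain exponential weights — is choosing $\psi$ together with its learning rate so that the boundary term and the accumulated second-order term are \emph{simultaneously} $O(\sqrt{|X|\,\Upsilon})$: the exponential potential would only deliver $O(\sqrt{|X|\,\Upsilon\log|X|})$, whereas a polynomial potential $\psi(x)=(\eta/(-x))^{q}$ with $q$ slightly above $1$ and $\eta\asymp\sqrt{\Upsilon/|X|}$ balances the two contributions so as to shave the logarithmic factor, which is precisely the bound asserted in Theorem~\ref{theorem_INF_regret}. Since all of this is carried out in detail in~\cite{Audibert2009MPA}, in the sequel I would simply invoke Theorem~\ref{theorem_INF_regret} verbatim; the substantive remaining work, internal to Algorithm~\ref{algorithm_SCaS}, is to feed \texttt{INF} the right per-epoch cost signal and then to translate its regret guarantee into a regret bound for \SightlessChasing{}.
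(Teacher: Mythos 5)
Your proposal takes exactly the same route as the paper: Theorem~\ref{theorem_INF_regret} is imported verbatim from Audibert and Bubeck~\cite{Audibert2009MPA} with no proof given here, and your accompanying sketch of the poly-\texttt{INF} analysis (implicit normalization via a polynomial potential, importance-weighted estimates, telescoping the normalizer, and balancing the boundary and second-order terms to remove the $\log|X|$ factor of exponential weights) is an accurate outline of the argument in that reference. Nothing further is needed.
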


Our algorithm {\SightlessChasing} partitions the rounds into consecutive epochs of length $\tau \in \mathbb{Z}_{> 0}$ and initializes \texttt{INF} by setting $\Upsilon = \big\lceil \frac{T}{\tau} \big\rceil$ and $X = [M]$, which means that each epoch in the online paging problem is mapped to a round in MBP, and each {\expert} is taken as an arm. The choice of the value for $\tau$ is discussed later. At the beginning of the first round $t_{1}^{\upsilon} = (\upsilon - 1)\tau + 1$ in each epoch $\upsilon \in [\Upsilon]$, {\SightlessChasing} accesses \texttt{INF} to pick one {\expert} $j_{t_{1}^{\upsilon}}$. Then, {\SightlessChasing} simulates the algorithm {\Sim}, which is proposed in \Cref{section:single-expert-oracle}, throughout the epoch $\upsilon$ with taking $t_{1}^{\upsilon}$ as its initial round, $\hat{C}_{t_{1}^{\upsilon}}$ as its initial cache profile, and $j_{t_{1}^{\upsilon}}$ as the single {\expert}. At the end of the last round $t_{\tau}^{\upsilon} = \upsilon \cdot \tau$ in epoch $\upsilon$, {\SightlessChasing} sends
\begin{equation}\label{formula_virtual_bandit_cost}
    F_{\upsilon}(j_{t_{1}^{\upsilon}}) = \frac{1}{n}\Big\vert \Big\lbrace t' \in [t_{1}^{\upsilon}, t_{\tau}^{\upsilon}] \Big\vert
    (t' = t_{1}^{\upsilon}) \, \vee \,
    (\hat{e}_{t'} \neq \perp) \, \vee \,
    \Big( (\hat{e}_{t'} = \perp) \, \wedge \, (t' > t_{1}^{\upsilon}) \, \wedge \, (\predictHat{t' - 1}{\sigma_{t'}} = Z + 1) \Big)
    \Big\vert 
\end{equation}
to \texttt{INF} as the cost $F_{\upsilon}(j_{t_{1}})$ of choosing $j_{t_{1}}$ for $\upsilon$. 

Notice that in MBP, the cost functions are generated by an oblivious adversary. We take this setting as a requirement that the cost function $F_{\upsilon}$ for each round $\upsilon$ in MBP should not depend on the arms chosen in the previous rounds $x_{1}, x_{2}, \dots, x_{\upsilon - 1}$. The following result shows that by feeding \texttt{INF} a cost that can be larger than the normalized cost that is actually incurred by {\SightlessChasing} in the epoch, this requirement is satisfied.

\begin{lemma}\label{lemma_stateless_costs}
Let $F_{\upsilon}\Big(j_{t_{1}^{\upsilon}} \Big\vert j_{t_{1}^{1}}, \dots, j_{t_{1}^{\upsilon - 1}}  \Big)$ be the cost sent by {\SightlessChasing} to \texttt{INF} at the end of an arbitrary epoch $\upsilon$ conditioned on the the {\experts} chosen for the previous epochs $j_{t_{1}^{1}}, \dots, j_{t_{1}^{\upsilon - 1}}$. Then for any different sequence of {\experts} $\tilde{j}_{t_{1}^{1}}, \dots, \tilde{j}_{t_{1}^{\upsilon - 1}}$, we have
\begin{equation*}
    F_{\upsilon}\Big(j_{t_{1}^{\upsilon}} \Big\vert j_{t_{1}^{1}}, \dots, j_{t_{1}^{\upsilon - 1}}  \Big) = F_{\upsilon}\Big(j_{t_{1}^{\upsilon}} \Big\vert j_{t_{1}^{1}}', \dots, j_{t_{1}^{\upsilon - 1}}'  \Big) \, .
\end{equation*}
\end{lemma}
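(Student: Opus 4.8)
The plan is to observe that the earlier epochs' choices $j_{t_{1}^{1}},\dots,j_{t_{1}^{\upsilon-1}}$ influence epoch~$\upsilon$ \emph{only} through the cache profile $\hat{C}_{t_{1}^{\upsilon}}$ that {\SightlessChasing} hands to the simulated {\Sim} at round $t_{1}^{\upsilon}$; so it suffices to prove that the integer count $f$ computed in line~\ref{algoline_virtual_CAPEX_increment} (equivalently, the cardinality in~\eqref{formula_virtual_bandit_cost}) is unchanged when $\hat{C}_{t_{1}^{\upsilon}}$ is replaced by any other $k$-page set. I would do this by writing $f = D + N$, where $D$ is the number of \emph{distinct} pages requested during epoch~$\upsilon$ and $N$ is the number of rounds $t'$ of the epoch at which $\sigma_{t'}$ has already been requested earlier in the epoch and yet a cache miss occurs at $t'$, and then arguing that each summand is independent of $\hat{C}_{t_{1}^{\upsilon}}$.

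First I would record two easy facts. Since the remedy predictions are reset to $Z+1$ at $t_{1}^{\upsilon}$ and thereafter evolve by a recursion that mentions only $\{\predictJ{t'}{j}\}$ and $\{\sigma_{t'}\}$ and never the cache, the values $\predictHat{t'}{i}$ inside the epoch are the same for every choice of $\hat{C}_{t_{1}^{\upsilon}}$; and the within-epoch analogue of \Cref{lemma_last_to_next} gives $\predictHat{t'-1}{\sigma_{t'}}=Z+1$ (for $t'>t_{1}^{\upsilon}$) exactly when $\sigma_{t'}$ has not been requested since the start of the epoch --- call such a round \emph{fresh}. Plugging this into the definition of $f$, a round $t'$ is counted iff $\sigma_{t'}$ is fresh or a cache miss occurs at $t'$; the fresh rounds are precisely the first occurrences of the $D$ distinct pages of the epoch, so they contribute exactly $D$, and the remaining contributions are the $N$ non-fresh cache misses. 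As $D$ depends only on the request subsequence, everything reduces to showing that $N$ is cache-independent.

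The core of the argument is a saturation phenomenon. Let $c_{t'}$ be the number of pages in $\hat{C}_{t'}$ that have already been requested in the epoch. Because every page \emph{not} yet requested in the epoch carries remedy prediction $Z+1$, which strictly exceeds the remedy prediction $\le Z$ of every page that has (again the within-epoch version of \Cref{lemma_last_to_next}), a cache miss evicts an already-requested page only at a round with $c_{t'}=k$. This yields a cache-independent recursion for $c_{t'}$ (it is $1$ after round $t_{1}^{\upsilon}$, non-decreasing, increases by one precisely when a fresh page is requested while $c<k$, and is frozen at $k$ afterwards), and it shows that no non-fresh cache miss can occur while $c<k$. Let $t^{\dagger}$ be the cache-independent round right after which $c$ first reaches $k$ (if $c$ never reaches $k$, then $N=0$ trivially). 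Since no already-requested page leaves the cache before $t^{\dagger}$, the configuration $\hat{C}_{t^{\dagger}+1}$ is \emph{forced} to equal the set of the $k$ distinct pages requested in $[t_{1}^{\upsilon},t^{\dagger}]$, independently of $\hat{C}_{t_{1}^{\upsilon}}$. From $t^{\dagger}+1$ on, the cache consists only of already-requested pages and stays that way, so {\Sim}'s rule there (on a miss, evict the cached page maximising $\predictHat{t'}{\cdot}$, i.e.\ run {\FitF} with the remedy predictions in place of the NATs) is a deterministic function of the remedy predictions, the requests, and the starting configuration $\hat{C}_{t^{\dagger}+1}$, all cache-independent; hence its entire run of cache profiles, evictions, and misses on $[t^{\dagger}+1,t_{\tau}^{\upsilon}]$ is determined by $j$ and $\sigma$ alone, which pins down $N$. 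Combining, $f=D+N$ is the same for all initial profiles, which is the assertion of the lemma.

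The step I expect to be the main obstacle is exactly this ``forcing'' of $\hat{C}_{t^{\dagger}+1}$ together with the cache-independence of the post-$t^{\dagger}$ evolution: one must verify carefully that no already-requested page slips out before saturation (which is precisely the $Z+1$-versus-$\le Z$ gap, via the within-epoch version of \Cref{lemma_last_to_next}) and that the tie-breaking among equal remedy predictions is fixed in advance (say, by page index), so that ``{\FitF} with remedy predictions'' is a deterministic function of $(j,\sigma,\hat{C}_{t^{\dagger}+1})$. Note that the tie-breaking among the $Z+1$-valued pages \emph{before} $t^{\dagger}$ is immaterial: it only changes which surplus pages are discarded, never the counts $c_{t'}$ nor the forced set $\hat{C}_{t^{\dagger}+1}$.
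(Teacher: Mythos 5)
Your proposal is correct and follows essentially the same route as the paper's proof: identify the \emph{fresh} rounds via the $Z+1$ versus $\le Z$ gap in the remedy predictions, observe that the counted rounds are exactly the fresh rounds plus the non-fresh cache misses, note that no already-requested page can be evicted before the cache saturates with the first $k$ distinct pages of the epoch (so the configuration at that point is forced), and conclude that the subsequent evolution is determined by $j$ and $\sigma$ alone. Your explicit remark that the tie-breaking rule must be fixed in advance for the post-saturation run to be deterministic is a valid point of care that the paper leaves implicit.
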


\begin{proof}
For the epoch $\upsilon$, a round $t$ is said to be \emph{fresh} if for any earlier round $t' < t$ in $\upsilon$, it holds that $\sigma_{t} \neq \sigma_{t'}$. We first consider the case where are at least $k$ fresh rounds in the epoch, which means that at least $k$ different pages are requested. Let $t$ be the first round after the first $k$ fresh rounds. 
We first consider the interval $[t_{1}^{\upsilon}, t - 1]$. For each page $i \in [n]$ and each round $t' \in [t_{1}^{\upsilon}, t - 1]$, we say $i$ is \emph{marked} at $t'$ if there exists a round $t'' \in [t_{1}^{\upsilon}, t']$ with $\sigma_{t''} = i$, otherwise $i$ is said to be \emph{unmarked} at $t'$. Then it can be inferred from \Cref{lemma_last_to_next} that $\predictHat{t'}{i} = Z + 1$ if $i$ is unmarked at $t'$, and $\predictHat{t'}{i} \leq Z$ if $i$ is marked at $t'$.
Thus, there is no marked page getting evicted before $t$, and any round $t' \in [t_{1}^{\upsilon}, t - 1]$ that satisfies $\hat{e}_{t'} \neq \perp$ must be a fresh round.

\begin{claim}\label{claim_fresh_rounds}
For each round $t' \in [t_{1}^{\upsilon}, t - 1]$, it is counted by Eq.~\eqref{formula_virtual_bandit_cost} if and only if $t'$ is fresh.
\end{claim}

\begin{subproof}
\qedlabel{claim_fresh_rounds}
By definition, the round $t_{1}^{\upsilon}$ is a fresh round. Now consider a round $t' \in [t_{1}^{\upsilon} + 1, t - 1]$. As mentioned above, if $\hat{e}_{t'} \neq \perp$, then $t'$ is fresh. Also, if $t'$ is fresh, then it can be inferred from \Cref{lemma_last_to_next} that $\predictHat{t' - 1}{\sigma_{t'}} = Z + 1$. If $t'$ is not fresh, which means that $\sigma_{t'}$ is marked, then it holds that $\hat{e}_{t'} = \perp$ and $\predictHat{t' - 1}{\sigma_{t'}} \leq Z$. Therefore, this claim holds.
\end{subproof}

Therefore, the contribution of the rounds in $[t_{1}^{\upsilon}, t - 1]$ to $F_{\upsilon}(j_{t_{1}^{\upsilon}})$ is always $\frac{k}{n}$, which is independent of $j_{t_{1}}^{1}, \dots, j_{t_{1}}^{\upsilon - 1}$. A similar result can also be obtained when there are less than $k$ fresh rounds in the epoch $\upsilon$.

At the beginning of round $t$, $\hat{C}_{t}$ contains exactly the first $k$ different pages required in the epoch $\upsilon$, and for each page $i \in \hat{C}_{t}$, the remedy prediction $\predictHat{t}{i}$ is computed only based on $\lbrace \predictJ{t'}{j_{t_{1}^{\upsilon}}} \rbrace_{t' \in [t_{1}^{\upsilon}, t]}$ and $\lbrace \sigma_{t'} \rbrace_{t' \in [t_{1}^{\upsilon}, t]}$. Therefore, for any $t' \geq t$, the decision on $\hat{e}_{t'}$ is independent of the choices over $j_{t_{1}}^{1}, \dots, j_{t_{1}}^{\upsilon - 1}$.

Furthermore, since at round $t - 1$, every page $i \in \hat{C}_{t - 1}$ is marked, then for any round $t' \geq t$ with $\hat{e}_{t} = \perp$, the page $\sigma_{t'}$ has been requested at least once in the interval $[t_{1}^{\upsilon}, t' - 1]$, which means that $\predictHat{t' - 1}{\sigma_{t'}} \leq Z$. This observation is formally stated in the following claim.

\begin{claim}\label{claim_rounds_all_marked}
For any round $t' \geq t$, it is counted by Eq.~\eqref{formula_virtual_bandit_cost} if and only if $\hat{e}_{t'} \neq \perp$.
\end{claim}

Thus, the contribution of the rounds in $[t, \upsilon \cdot \tau]$ to $F_{\upsilon}(j_{t_{1}^{\upsilon}})$ does not depend on $\lbrace j_{t_{1}}^{1}, \dots, j_{t_{1}}^{\upsilon - 1} \rbrace$, either.
\end{proof}

For an epoch $\upsilon \in \Upsilon$ and a {\expert} $j \in [M]$ chosen for the epoch $\upsilon \in \Upsilon$, let $\Evictions_{\upsilon}(j) = \vert \lbrace t \in [t_{1}^{\upsilon}, t_{\tau}^{\upsilon}] \mid \hat{e}_{t} \neq \perp \rbrace \vert$. The following result is also obtained by combining \Cref{claim_fresh_rounds} with \Cref{claim_rounds_all_marked}.

\begin{lemma}\label{lemma_cost_function_proxy}
For each epoch $\upsilon$,
\begin{equation*}
    \Evictions_{\upsilon}(j) \; \leq \; \tau \cdot F_{\upsilon}(j) \; \leq \; \Evictions_{\upsilon}(j) + k \, . 
\end{equation*}
\end{lemma}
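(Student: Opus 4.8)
The plan is to read the lemma off from Claims~\ref{claim_fresh_rounds} and~\ref{claim_rounds_all_marked}, which already pin down exactly which rounds of an epoch are counted by Eq.~\eqref{formula_virtual_bandit_cost}. The lower bound $\Evictions_{\upsilon}(j) \le \tau \cdot F_{\upsilon}(j)$ does not even need these claims: one of the three alternatives defining the set on the right-hand side of Eq.~\eqref{formula_virtual_bandit_cost} is ``$\hat{e}_{t'} \neq \perp$'', so that set contains every eviction round of the epoch and hence has size at least $\Evictions_{\upsilon}(j)$.

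For the upper bound I would first treat the case in which the epoch contains at least $k$ \emph{fresh} rounds and split it at the round $t$ used in the proof of \Cref{lemma_stateless_costs}, namely the first round after the first $k$ fresh rounds (reading $[t, t_{\tau}^{\upsilon}]$ as empty if $t$ overshoots the epoch). On $[t_{1}^{\upsilon}, t-1]$, \Cref{claim_fresh_rounds} says the counted rounds are precisely the fresh ones, and by the choice of $t$ there are exactly $k$ of them; on $[t, t_{\tau}^{\upsilon}]$, \Cref{claim_rounds_all_marked} says the counted rounds are precisely those with $\hat{e}_{t'} \neq \perp$. Consequently $\tau \cdot F_{\upsilon}(j)$ equals $k$ plus the number of eviction rounds lying in $[t, t_{\tau}^{\upsilon}]$, while $\Evictions_{\upsilon}(j)$ equals that same quantity plus the number of eviction rounds in $[t_{1}^{\upsilon}, t-1]$. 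Writing $m$ for the number of the first $k$ fresh rounds that are \emph{not} eviction rounds, and invoking the marking argument from the proof of \Cref{lemma_stateless_costs} (no already-requested page is evicted before $t$, so every eviction round of $[t_{1}^{\upsilon}, t-1]$ is fresh), the number of eviction rounds in $[t_{1}^{\upsilon}, t-1]$ is $k - m$, so $\tau \cdot F_{\upsilon}(j) - \Evictions_{\upsilon}(j) = m$. Finally $0 \le m \le k$: a fresh round that is not an eviction round requests a page that is in the cache although it has not been requested earlier in the epoch, and by laziness such a page must have belonged to $\hat{C}_{t_{1}^{\upsilon}}$ all along; since distinct fresh rounds request distinct pages, the map $t' \mapsto \sigma_{t'}$ injects these rounds into $\hat{C}_{t_{1}^{\upsilon}}$, whence $m \le |\hat{C}_{t_{1}^{\upsilon}}| = k$. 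This yields both inequalities.

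The complementary case, in which the epoch has fewer than $k$ fresh rounds, is the same argument applied to the whole epoch in place of $[t_{1}^{\upsilon}, t-1]$: by \Cref{claim_fresh_rounds} the counted rounds are exactly the fresh rounds, the eviction rounds form a subset of them, and the gap between the two counts is the number of non-eviction fresh rounds, which is again at most $k$ by the same injection into $\hat{C}_{t_{1}^{\upsilon}}$. I do not expect a genuine obstacle here, since the substantive work — the marking argument showing that a never-yet-requested page still resident in the cache has remedy prediction $Z+1$, and hence that no already-requested page is evicted before round $t$ — was already carried out inside the proof of \Cref{lemma_stateless_costs}. The only point that demands a little care is the epoch boundary: one must check that $t_{1}^{\upsilon}$ is classified correctly (it is always fresh, and it is counted regardless of whether it triggers an eviction) and that $[t_{1}^{\upsilon}, t-1]$ and $[t, t_{\tau}^{\upsilon}]$ genuinely partition the epoch, so that no round is tallied twice.
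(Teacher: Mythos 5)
Your proposal is correct and follows the same route the paper intends: the paper proves this lemma simply by asserting that it "is obtained by combining" Claims~\ref{claim_fresh_rounds} and~\ref{claim_rounds_all_marked}, and your write-up supplies exactly the bookkeeping that combination requires (counted rounds are the eviction rounds together with the fresh non-eviction rounds, and the latter inject into the initial cache profile, hence number at most $k$). Your explicit handling of the boundary cases (fewer than $k$ fresh rounds, $t$ overshooting the epoch) is detail the paper omits but does not change the argument.
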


\begin{theorem}\label{theorem_regret_multi_oracles}
By taking $\tau = \Big\lfloor T^{\frac{1}{3}} \Big\rfloor$, the regret of {\SightlessChasing} is bounded by $O\Big( kT^{\frac{2}{3}}\sqrt{M} + \eta^{\min}\Big)$.
\end{theorem}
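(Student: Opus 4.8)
The plan is to lift the single-{\expert} guarantee of \Cref{theorem_single_expert_bound} to the multi-{\expert} bandit setting by treating each length-$\tau$ epoch as one round in a multiarmed bandit instance handed to \texttt{INF}. To invoke the regret bound of \Cref{theorem_INF_regret} we first need the epoch costs $F_{\upsilon}$ to behave like those of an oblivious adversary, i.e.\ to be independent of which {\experts} were chosen in earlier epochs; this is exactly \Cref{lemma_stateless_costs} (the ``$\predictHat{t'-1}{\sigma_{t'}} = Z+1$'' clause in Eq.~\eqref{formula_virtual_bandit_cost} is there precisely so that $F_{\upsilon}(j)$ counts only the number of freshly requested pages during the epoch's warm-up phase plus the number of evictions afterwards, both of which are functions of $\sigma$ restricted to the epoch). \Cref{theorem_INF_regret} then gives
\[
\sum_{\upsilon \in [\Upsilon]} F_{\upsilon}(j_{t_{1}^{\upsilon}})
\;\le\;
\min_{j^{*} \in [M]} \sum_{\upsilon \in [\Upsilon]} F_{\upsilon}(j^{*})
\;+\;
O(\sqrt{M \Upsilon}) .
\]
By the left inequality of \Cref{lemma_cost_function_proxy} the left-hand side is at least $\tfrac{1}{\tau}\sum_{\upsilon}\Evictions_{\upsilon}(j_{t_{1}^{\upsilon}}) = \tfrac{1}{\tau}\cost(\SightlessChasing)$, so everything reduces to upper bounding the cost $\min_{j^{*}}\sum_{\upsilon}F_{\upsilon}(j^{*})$ of the best {\expert} in hindsight.

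Fix a {\expert} $j^{\star}$ with $\eta^{j^{\star}} = \eta^{\min}$ and bound the minimum by its value at $j^{\star}$. Let $\Sim^{j^{\star}}_{\mathrm{rs}}$ be the ``restarted'' algorithm that runs \Sim{} with {\expert} $j^{\star}$ but overwrites all remedy predictions at the start of every epoch (exactly as \SightlessChasing{} does when $t \bmod \tau = 1$). Because $F_{\upsilon}(j^{\star})$ is independent of \SightlessChasing{}'s history, I may evaluate it along the trajectory of $\Sim^{j^{\star}}_{\mathrm{rs}}$, and the right inequality of \Cref{lemma_cost_function_proxy} yields $\tau\sum_{\upsilon}F_{\upsilon}(j^{\star}) \le \cost(\Sim^{j^{\star}}_{\mathrm{rs}}) + k\Upsilon$. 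It remains to prove
\[
\cost(\Sim^{j^{\star}}_{\mathrm{rs}}) \;\le\; \OPT + O(\eta^{\min} + k\Upsilon) .
\]
The point is that the remedy predictions of $\Sim^{j^{\star}}_{\mathrm{rs}}$ obey the recursion~\eqref{formula_virtual_NAT} verbatim, except that at each of the $\Upsilon$ epoch boundaries the (at most $k$) cached pages have their remedy predictions reset to $Z+1$. Hence the entire analysis of \Cref{section:single-expert-oracle} --- which is deliberately carried out \emph{without} assuming $C_{1} = \hat{C}_{1}$ --- applies to $\Sim^{j^{\star}}_{\mathrm{rs}}$ once every reset is charged as at most $k$ extra prediction errors, so that re-running the blaming/troublemaker accounting of \Cref{theorem_reduction} and \Cref{theorem_troublemakers_bounding} with the effective error count $\eta^{\min} + O(k\Upsilon)$ in place of $\eta$ delivers $\cost(\Sim^{j^{\star}}_{\mathrm{rs}}) - \OPT \le O(\eta^{\min} + k\Upsilon)$.

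Chaining the two steps,
\[
\cost(\SightlessChasing) \;\le\; \OPT + O(\eta^{\min} + k\Upsilon + \tau\sqrt{M\Upsilon}) .
\]
With $\tau = \lfloor T^{1/3} \rfloor$ we get $\Upsilon = \lceil T/\tau \rceil = O(T^{2/3})$, so $k\Upsilon = O(kT^{2/3})$ and $\tau\sqrt{M\Upsilon} = O(T^{1/3}\sqrt{M\,T^{2/3}}) = O(T^{2/3}\sqrt{M})$, whence
\[
\regret(\SightlessChasing) \;=\; O(kT^{2/3} + T^{2/3}\sqrt{M} + \eta^{\min}) \;\subseteq\; O(kT^{2/3}\sqrt{M} + \eta^{\min}),
\]
the claimed bound.

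The step I expect to be the main obstacle is the inequality $\cost(\Sim^{j^{\star}}_{\mathrm{rs}}) \le \OPT + O(\eta^{\min} + k\Upsilon)$ --- that is, certifying that periodically wiping \Sim{}'s remedy predictions costs only $O(k)$ per epoch. This requires re-inspecting the proofs of \Cref{section:single-expert-oracle} to check that each reset round perturbs the counts $\lvert \HD{1}{} \rvert$, $\lvert \Gamma \rvert$ and $\lvert \HD{}{-1} \rvert$ by only $O(k)$ --- intuitively a reset touches only the at most $k$ currently cached pages and so behaves like $O(k)$ isolated prediction errors rather than breaking the distance invariants of \Cref{lemma_values_scope}. \Cref{claim_fresh_rounds} and \Cref{claim_rounds_all_marked}, together with the exact form of $F_{\upsilon}$, are precisely what keep this per-epoch overhead at $O(k)$; the remaining work is to thread them carefully through the combinatorial accounting rather than to find a new idea.
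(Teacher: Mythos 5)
Your overall architecture matches the paper's proof exactly: \texttt{INF}'s regret bound via \Cref{theorem_INF_regret} after \Cref{lemma_stateless_costs}, the two-sided proxy \Cref{lemma_cost_function_proxy} to translate between $F_{\upsilon}$ and $\Evictions_{\upsilon}$, the single-predictor guarantee to control the best predictor in hindsight, and the same final arithmetic with $\tau = \lfloor T^{1/3}\rfloor$. The one place you diverge is the step you yourself flag as the main obstacle: bounding $\min_{j^*}\sum_{\upsilon}F_{\upsilon}(j^*)$. You propose a \emph{global} re-analysis of a restarted $\Sim^{j^\star}_{\mathrm{rs}}$ over all of $[T]$, charging each epoch-boundary reset as $O(k)$ extra prediction errors and re-threading the blaming/troublemaker accounting. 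You do not actually carry this out, and as stated it is not a routine substitution: a reset does not merely corrupt $O(k)$ predictions, it invalidates the invariants tying $\predictHat{t}{i}$ to $\predict{\LastRequest{t}{i}}$ (e.g.\ \Cref{lemma_last_to_next}, and hence the blame and abettor maps into $\INV$), so one would have to redefine $\LastRequest{}{}$ and the inversion sets relative to epochs anyway.

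The paper avoids this entirely by applying \Cref{theorem_single_expert_bound} \emph{epoch by epoch}: within each epoch, $\SightlessChasing$ literally simulates \Sim{} from scratch with initial cache profile $\hat{C}_{t_1^{\upsilon}}$, and since the Section~\ref{section:single-expert-oracle} analysis never assumes $C_1 = \hat{C}_1$ (the very fact you cite), the bound $\Evictions_{\upsilon}(j) - \lvert\{t \in [t_1^{\upsilon}, t_{\tau}^{\upsilon}] : e_t \neq \perp\}\rvert \le 6\eta_{\upsilon}^{j} + 5k$ holds per epoch with the epoch-local error count $\eta_{\upsilon}^{j}$; summing and using $\sum_{\upsilon}\eta_{\upsilon}^{j} \le \eta^{j}$ and $\OPT = \sum_{\upsilon}\lvert\{t : e_t \neq \perp\}\rvert$ gives exactly the $O(\eta^{\min} + k\Upsilon)$ you want, with the $5k$-per-epoch additive term playing the role of your "reset charge." So your proposal is correct in outline and arrives at the right bound, but the load-bearing inequality is left as an unverified sketch where a simpler, already-available modular argument closes it; I recommend replacing the global re-derivation with the per-epoch invocation of \Cref{theorem_single_expert_bound}.
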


\begin{proof}
\Cref{lemma_stateless_costs} allows us to apply \Cref{theorem_INF_regret} to get the following result.
\begin{equation*}
    \sum_{\upsilon \in [\Upsilon]}F_{\upsilon}(j_{t_{1}^{\upsilon}}) - \min_{j^{*} \in [M]}\sum_{\upsilon \in [\Upsilon]}F_{\upsilon}(j^{*}) \in O\Bigg(\sqrt{M\cdot \frac{T}{\tau}}  \Bigg) \, .
\end{equation*}
\Cref{lemma_cost_function_proxy} indicates that
\begin{equation*}
    \tau \cdot \min_{j^{*} \in [M]}\sum_{\upsilon \in [\Upsilon]}F_{\upsilon}(j^{*}) - \min_{j \in [M]} \sum_{\upsilon \in [\Upsilon]} \Evictions_{\upsilon}(j) \leq k \cdot \Upsilon \, .
\end{equation*}

For any $j \in [M]$, since $\Evictions_{\upsilon}(j)$ is the number of cache misses in the epoch $\upsilon$ by simulating the algorithm {\Sim} with taking $t_{1}^{\upsilon}$ as the initial round, $\hat{C}_{t_{1}^{\upsilon}}$ as the initial cache profile, and $j$ as the single {\expert}, then it can be inferred from \Cref{theorem_single_expert_bound} that
\begin{equation*}
    \Evictions_{\upsilon}(j) - \big\vert \lbrace t \in [t_{1}^{\upsilon}, t_{\tau}^{\upsilon}] \, \vert \, e_{t} \neq \perp \rbrace \big\vert \leq 6 \cdot \eta_{\upsilon}^{j} + 5k \, ,
\end{equation*}
where
\begin{equation*}
    \eta_{\upsilon}^{j} = \Big\vert \big\lbrace t \in [t_{1}^{\upsilon}, t_{\tau}^{\upsilon}] \, \big\vert \, \predictJ{t}{j} \neq \NextTime{t}{\sigma_{t}} \, \wedge \, \exists t' \in [t_{1}^{\upsilon}, t_{\tau}^{\upsilon}] \text{ s.t.~} \lbrace t, t' \rbrace \in \INV^{j} \big\rbrace \Big\vert \, .
\end{equation*}
Since
\begin{equation*}
    \OPT = \sum_{\upsilon \in [\Upsilon]}\big\vert \lbrace t \in [t_{1}^{\upsilon}, t_{\tau}^{\upsilon}] \, \vert \, e_{t} \neq \perp \rbrace \big\vert \, ,
    \quad \text{and} \quad
    \eta^{j} \leq \sum_{\upsilon \in [\Upsilon]} \eta_{\upsilon}^{j} \, ,
\end{equation*}
we can obtain
\begin{align*}
    \min_{j* \in [M]}\tau \cdot \sum_{\upsilon \in [\Upsilon]}F_{\upsilon}(j^{*}) 
    \leq& \min_{j \in [M]}\sum_{\upsilon \in [\Upsilon]} \Evictions_{\upsilon}(j) + k\Upsilon \\
    \leq& \min_{j \in [M]} \sum_{\upsilon \in [\Upsilon]} \Big( \big\vert \lbrace t \in [t_{1}^{\upsilon}, t_{\tau}^{\upsilon}] \vert e_{t} \neq \perp \rbrace \big\vert + 6 \cdot \eta_{\upsilon}^{j} + 5k \Big) + k\Upsilon \\
    = &6\min_{j \in [M]}\eta^{j} + \mathtt{OPT} + 6k \Upsilon \, .
\end{align*}
Still by \Cref{lemma_cost_function_proxy}, we have
\begin{equation*}
    \cost({\SightlessChasing}) = \sum_{\upsilon \in [\Upsilon]}\Evictions_{\upsilon}(j_{t_{1}}^{\upsilon}) \leq \tau \cdot \sum_{\upsilon \in [\Upsilon]}F_{\upsilon}(j_{t_{1}}^{\upsilon}) \, .
\end{equation*}
Then this proof is completed.
\end{proof}

By assumption, we have $\min_{j \in [M]}\eta^{j} \in o(T)$. Therefore, \Cref{theorem_regret_multi_oracles} implies that {\SightlessChasing} has a vanishing regret.

Notice that the result in this section cannot be obtained by using the results in \cite{Emek2020stateful} directly, because the algorithms proposed in \cite{Emek2020stateful} requires to know the cache profile of the algorithm {\Sim} that follows each {\expert}, which is unavailable in the bandit access model of the NAT \expert{} setting.


\subsection{Full Information Access Model}
\label{section:full_info_oracle}

It is straightforward that {\SightlessChasing} can also guarantee the $O(kT^{2/3}\sqrt{M} + \eta^{\min})$-regret for the full information access model. 
However, it turns out that by exploiting the fact that all the prediction sequences $\big\lbrace \lbrace \predictJ{t}{j} \rbrace_{t \in [T] } \big\rbrace_{j \in [M]}$ are revealed at the beginning, we can get a better upper bound on the regret.

In particular, given the prediction sequences, one can simulate algorithm {\Sim} with using $\predictJ{}{j} = \lbrace \predictJ{t}{j} \rbrace_{t \in [T]}$ for every $j \in [M]$. For each round $t$, let the cache profile and the evicted page of {\Sim} that uses $\predictJ{}{j}$ be $\hat{C}_{t}^{j}$ and $\hat{e}_{t}^{j}$, respectively. Then for every $j \in [M]$, $\hat{C}_{t}^{j}$ and the cumulative loss $\vert \lbrace t' < t \mid \hat{e}_{t'}^{j} \neq \perp \rbrace \vert$ can be computed online through using $\lbrace \sigma_{t'}, \predictJ{t'}{j} \rbrace_{t' < t}$. This observation makes the following result applicable.

\begin{theorem}[\cite{Blum2000online}] \label{theorem_Blum_online_learning}
Given any $Q$ algorithms for the online paging problem with cache size $k$, if at each round $t \in [T]$, the cache profile and the cumulative loss of every algorithm $q \in [Q]$ are known, there exists a random online procedure that combines these $Q$ algorithms and guarantees that on any request sequence $\sigma$, the expected cost is at most
\begin{equation*}
    (1 + 2\epsilon)\cost^{*} + \Big(\frac{1}{\epsilon} + \frac{7}{6} \Big)k \cdot \log(Q)
\end{equation*}
for any $\epsilon < \frac{1}{4}$, where $\cost^{*}$ is the cost incurred by the best of the $Q$ algorithms on $\sigma$.
\end{theorem}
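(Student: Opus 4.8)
The plan is to reconstruct the multiplexer of Blum and Burch \cite{Blum2000online}, whose precise constants the paper relies on. Write $A_{1},\dots,A_{Q}$ for the given algorithms, let $C_{t}^{q}$ be the cache configuration of $A_{q}$ at the start of round $t$, and let $\ell_{t}^{q}=1_{\sigma_{t}\notin C_{t}^{q}}\in\{0,1\}$ be its round-$t$ cost; by hypothesis $C_{t}^{q}$ and the cumulative loss $L_{t}^{q}=\sum_{s<t}\ell_{s}^{q}$ are known at the start of round $t$, and we may assume the combined procedure $\mathcal{B}$ also observes $A_{q}$'s eviction in round $t$. I would have $\mathcal{B}$ run the multiplicative-weights (Hedge) scheme over $[Q]$ with weights $w_{t}^{q}=\beta^{L_{t}^{q}}$ for a parameter $\beta=\beta(\epsilon,k)\in(0,1)$ to be fixed, inducing the distribution $p_{t}^{q}=w_{t}^{q}/W_{t}$ with $W_{t}=\sum_{r}w_{t}^{r}$, and in parallel maintain a \emph{realized index} $I_{t}\in[Q]$ together with the invariant that $\mathcal{B}$'s cache equals $C_{t}^{I_{t}}$ at the start of round $t$. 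On a round with $I_{t}\neq I_{t-1}$, $\mathcal{B}$ first re-synchronizes its cache to $C_{t}^{I_{t}}$, at a cost of at most $|C_{t}^{I_{t}}\setminus C_{t}^{I_{t-1}}|\le k$ fetches (using the standard fact that restricting to lazy paging strategies never increases the cache-miss count), and then serves $\sigma_{t}$ by mimicking $A_{I_{t}}$, incurring $\ell_{t}^{I_{t}}$. This yields the deterministic inequality
\[
\cost(\mathcal{B})\ \le\ \sum_{t\in[T]}\ell_{t}^{I_{t}}\ +\ k\cdot\bigl|\{\,t:I_{t}\neq I_{t-1}\,\}\bigr|\,.
\]

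Next I would bound each term in expectation. Drawing $I_{t}$ with marginal $p_{t}$ makes $\Ex[\ell_{t}^{I_{t}}]=\langle p_{t},\ell_{t}\rangle$, and the textbook Hedge telescoping---$\ln W_{T+1}$ bounded above by $\ln Q-(1-\beta)\sum_{t}\langle p_{t},\ell_{t}\rangle$ through $W_{t+1}=W_{t}\bigl(1-(1-\beta)\langle p_{t},\ell_{t}\rangle\bigr)$, and below by $\cost^{*}\ln\beta$ since $W_{T+1}\ge\beta^{\cost^{*}}$---yields $\sum_{t}\langle p_{t},\ell_{t}\rangle\le\frac{\ln(1/\beta)}{1-\beta}\,\cost^{*}+\frac{\ln Q}{1-\beta}$. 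To control the switches I would couple the realized indices across consecutive rounds by the maximal coupling of $p_{t-1}$ and $p_{t}$: given the realized $I_{t-1}$ and the already-known $p_{t-1},p_{t}$, keep $I_{t}=I_{t-1}$ with probability $\min(p_{t-1}^{I_{t-1}},p_{t}^{I_{t-1}})/p_{t-1}^{I_{t-1}}$ and otherwise redraw $I_{t}$ from the distribution $r\mapsto(p_{t}^{r}-p_{t-1}^{r})_{+}/\mathrm{TV}(p_{t-1},p_{t})$. A direct check shows that this keeps the marginal of $I_{t}$ equal to $p_{t}$ (so the first term above is untouched) and makes $\Pr[I_{t}\neq I_{t-1}]=\mathrm{TV}(p_{t-1},p_{t})=\tfrac12\|p_{t}-p_{t-1}\|_{1}$, whence $\Ex\bigl[\,|\{t:I_{t}\neq I_{t-1}\}|\,\bigr]=\tfrac12\sum_{t}\|p_{t}-p_{t-1}\|_{1}$. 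The coupling is online-implementable, using only $I_{t-1}$, $p_{t-1}$ and $p_{t}$.

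The heart of the argument---and the step I expect to be the main obstacle---is to bound the accumulated movement $\sum_{t}\|p_{t}-p_{t-1}\|_{1}$ of the Hedge distribution by a \emph{small} multiple of the accumulated fractional cost $\sum_{t}\langle p_{t},\ell_{t}\rangle$; this is what keeps the switching penalties from blowing up the multiplicative factor. Substituting the explicit update ($\ell_{t}\in\{0,1\}^{Q}$) into the normalization, one computes componentwise that
\[
\|p_{t+1}-p_{t}\|_{1}\ =\ 2\,\langle p_{t},\ell_{t}\rangle\cdot\frac{(1-\beta)\bigl(1-\langle p_{t},\ell_{t}\rangle\bigr)}{1-(1-\beta)\langle p_{t},\ell_{t}\rangle}\ \le\ \frac{2(1-\beta)}{\beta}\,\langle p_{t},\ell_{t}\rangle\,,
\]
using $\langle p_{t},\ell_{t}\rangle\le1$ in both the numerator and the denominator. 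Summing and combining with the two facts above,
\[
\Ex[\cost(\mathcal{B})]\ \le\ \Bigl(1+k\cdot\tfrac{1-\beta}{\beta}\Bigr)\sum_{t}\langle p_{t},\ell_{t}\rangle\ \le\ \Bigl(1+k\cdot\tfrac{1-\beta}{\beta}\Bigr)\Bigl(\tfrac{\ln(1/\beta)}{1-\beta}\,\cost^{*}+\tfrac{\ln Q}{1-\beta}\Bigr)\,.
\]
Finally I would pick $\beta$ so that $k\tfrac{1-\beta}{\beta}$ has order $\epsilon$ (for instance $1/\beta=1+\epsilon/k$, giving $k\tfrac{1-\beta}{\beta}=\epsilon$, $\tfrac{1}{1-\beta}=\tfrac{k}{\epsilon}+1$ and $\tfrac{\ln(1/\beta)}{1-\beta}\le1+\tfrac{\epsilon}{k}$), which already gives a bound of the shape $(1+O(\epsilon))\,\cost^{*}+O\bigl(\tfrac{k\log Q}{\epsilon}\bigr)$; the finer tuning of $\beta$ from \cite{Blum2000online}---where the exact constants $2\epsilon$ and $\tfrac{7}{6}$, as well as the hypothesis $\epsilon<\tfrac14$, come into play---sharpens this to $(1+2\epsilon)\,\cost^{*}+\bigl(\tfrac1\epsilon+\tfrac76\bigr)k\log Q$. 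Everything else---the laziness reduction, the timing that makes $p_{t}$, $C_{t}^{q}$ and $L_{t}^{q}$ available before $\mathcal{B}$ commits in round $t$, and the final constant-chasing---is routine bookkeeping.
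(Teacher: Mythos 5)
The paper does not prove this statement: it is imported verbatim from Blum and Burch~\cite{Blum2000online} and used as a black box in \Cref{section:full_info_oracle}, so there is no ``paper's own proof'' to compare against. Your reconstruction of the Blum--Burch multiplexer --- Hedge weights $w_t^q=\beta^{L_t^q}$, the standard $\sum_t\langle p_t,\ell_t\rangle\le \frac{\ln(1/\beta)}{1-\beta}\cost^{*}+\frac{\ln Q}{1-\beta}$ potential bound, a switching penalty of at most $k$ per change of followed algorithm (via laziness), and an explicit coupling of the realized indices to bound the expected number of switches --- is indeed the right architecture and is the standard way this result is derived.

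One concrete slackness worth flagging: the last inequality in your movement estimate is looser than it needs to be. Starting from your own identity
\[
\|p_{t+1}-p_{t}\|_{1}\ =\ 2\,\langle p_{t},\ell_{t}\rangle\cdot\frac{(1-\beta)\bigl(1-\langle p_{t},\ell_{t}\rangle\bigr)}{1-(1-\beta)\langle p_{t},\ell_{t}\rangle}\,,
\]
note that $\frac{1-L}{1-(1-\beta)L}\le 1$ for all $L\in[0,1]$ (since $\beta L\ge 0$), so $\|p_{t+1}-p_t\|_1\le 2(1-\beta)\langle p_t,\ell_t\rangle$; the extra $1/\beta$ you introduced by lower-bounding the denominator by $\beta$ is unnecessary. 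Even more directly, the coupling that Blum--Burch (and most expositions) use is ``stay on expert $q$ with probability $w_{t+1}^q/w_t^q=\beta^{\ell_t^q}$, otherwise resample from $p_{t+1}$''; one checks it preserves the marginal and gives $\Pr[I_{t+1}\ne I_t]\le 1-W_{t+1}/W_t=(1-\beta)\langle p_t,\ell_t\rangle$. Either route replaces your multiplier $1+k\frac{1-\beta}{\beta}$ by $1+k(1-\beta)$, which is what makes the choice $\beta\approx 1-\epsilon/k$ (rather than $1/\beta\approx 1+\epsilon/k$) land near the constants $1+2\epsilon$ and $\tfrac1\epsilon+\tfrac76$ in the cited statement. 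With your looser movement bound the constant in front of $\cost^{*}$ does not quite reach $1+2\epsilon$ for small $k$ (e.g.\ $k=1$). That said, the overall proof strategy is correct and you have correctly identified where the delicate part lies.
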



\begin{theorem} \label{theorem_regret_multi_fA}
For the full-information access model of the NAT \expert{} setting, there exists an online paging algorithm having an $O(\sqrt{kT\log M} + \eta^{\min} + k)$-regret.
\end{theorem}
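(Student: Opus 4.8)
\subparagraph*{Proof proposal.}
The plan is to reduce to the single-\expert{} analysis of \Cref{section:single-expert-oracle} together with the online-learning multiplexer of Blum and Burch (\Cref{theorem_Blum_online_learning}). For each $j \in [M]$, let $\Sim^{j}$ denote the algorithm $\Sim$ run on the prediction sequence $\{\predictJ{t}{j}\}_{t \in [T]}$. Under the full information access model these sequences are revealed up front, so --- as recorded in the paragraph preceding \Cref{theorem_Blum_online_learning} --- at every round $t$ the cache profile and the cumulative number of evictions of each $\Sim^{j}$ can be reconstructed online from $\{\sigma_{t'}, \predictJ{t'}{j}\}_{t' < t}$. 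Hence \Cref{theorem_Blum_online_learning} applies with the $Q = M$ subroutines $\Sim^{1}, \dots, \Sim^{M}$, yielding a randomized online paging algorithm whose expected cost is at most $(1 + 2\epsilon)\cost^{*} + (\tfrac{1}{\epsilon} + \tfrac{7}{6})k\log M$ for any $\epsilon < \tfrac14$, where $\cost^{*}$ is the cost of the best $\Sim^{j}$ on $\sigma$.

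First I would bound $\cost^{*}$: applying \Cref{theorem_single_expert_bound} to the \expert{} realizing $\eta^{\min}$ gives $\cost^{*} \le \OPT + 6\eta^{\min} + 5k$. Substituting this into the multiplexer guarantee and subtracting $\OPT$ yields
\[
\regret \;\le\; 2\epsilon\,\OPT + (1 + 2\epsilon)\bigl(6\eta^{\min} + 5k\bigr) + \Bigl(\tfrac{1}{\epsilon} + \tfrac{7}{6}\Bigr)k\log M .
\]
Since $\OPT \le T$, it remains to optimize over $\epsilon$. For $T \ge 64\,k\log M$ I would take $\epsilon = \sqrt{k\log M / T}$, which makes the $2\epsilon\,\OPT$ and $\epsilon^{-1}k\log M$ terms both $O(\sqrt{kT\log M})$ while keeping $1 + 2\epsilon = O(1)$, and the leftover $\tfrac{7}{6}k\log M$ term is also $O(\sqrt{kT\log M})$ in this range; for $T < 64\,k\log M$ one either uses $\epsilon = \tfrac18$ when $k\log M \le T$ (so that $\tfrac14 T = O(k\log M) = O(\sqrt{kT\log M})$) or falls back on the trivial bound $\regret \le \cost \le T < \sqrt{kT\log M}$ when $T < k\log M$ (a lazy online algorithm incurs at most one miss per round). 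In all cases $\regret = O(\sqrt{kT\log M} + \eta^{\min} + k)$, as claimed.

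I do not expect a genuine obstacle here: the combinatorial heart of the argument is already isolated in \Cref{theorem_single_expert_bound}, and the reduction to online learning is standard once one checks that the full information model lets the multiplexer track each subroutine's state and cumulative loss at no extra cost --- which is exactly the observation already recorded before \Cref{theorem_Blum_online_learning}. The only fiddly point is the case analysis in the choice of $\epsilon$ needed to respect the constraint $\epsilon < \tfrac14$, which is routine.
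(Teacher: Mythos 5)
Your proposal is correct and follows essentially the same route as the paper: plug the $M$ algorithms $\Sim^{1},\dots,\Sim^{M}$ into the multiplexer of \Cref{theorem_Blum_online_learning}, choose $\epsilon = \Theta\bigl(\sqrt{k\log M/T}\bigr)$, and invoke \Cref{theorem_single_expert_bound} to bound the cost of the best subroutine. The only difference is that you spell out the case analysis needed to respect $\epsilon < \tfrac14$, which the paper elides.
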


\begin{proof}
Denote the algorithm {\Sim} that follows the predictions of each {\expert} $j \in [M]$ by $\Sim^{j}$. Then $\lbrace \Sim^{j} \rbrace_{j \in [M]}$ can be viewed as $M$ algorithms and mixed with the procedure described in \Cref{theorem_Blum_online_learning}. 
By choosing $\epsilon \in \Theta\Big(\sqrt{\frac{k\log M}{T}}\Big)$, the expected cost incurred by combining these $M$ algorithms is bounded by
\begin{equation*}
    \min_{j \in [M]}\cost\big( \Sim^{j} \big) + O\left( \sqrt{k T \cdot \log M} \right) \, .
\end{equation*}
Then this theorem follows from \Cref{theorem_single_expert_bound}.
\end{proof}

\clearpage
\appendix

\begin{figure}[!t]
\centering
\LARGE{APPENDIX}
\end{figure}

\section{Regret Lower Bound}
\label{section:impossibility}

By proving \Cref{theorem:logarithmic-gap-with-no-additional-info}, we show in this section that for any online paging algorithm \Alg{}, the regret of \Alg{} is low-bounded by $\Omega\Big( \frac{T}{k} \Big)$
when no additional information is available.

\begin{proof}[Proof of \Cref{theorem:logarithmic-gap-with-no-additional-info}]
Consider the random instance $\sigma$ described in the theorem's statement.
By the linearity of expectation, the expected cost of any online paging algorithm \Alg{} satisfies
$\mathbb{E}[\cost(\Alg)] \in O\Big( \frac{T}{k + 1} \Big)$.
To analyze $\OPT$, we partition the $T$ rounds into phases so that each phase is a minimal time interval during which every page is requested at least once.
By definition, if \FitF{} experiences a cache miss in round $t$, then its cache configuration at time
$t + 1$
contains all the pages that will be requested until the end of the current phase, possibly, with the exception of the last one.
Therefore, \FitF{} suffers at most $2$ cache misses per phase.
Since the phase lengths $L$ are i.i.d.~random variables, it can be inferred from renewal theory that $\OPT$ is up-bounded by
$O\Big( \frac{T}{\mathbb{E}[L]} \Big)$.
The proof is completed by observing that each phase corresponds to an instance of the coupon collector problem with
$n = k + 1$
coupons, hence
$\mathbb{E}[L] = \Theta (n \log n) = \Theta (k \log k)$.
\end{proof}

\section{Explicit {\Experts}}
\label{section:multiple-experts-fA}

Our goal in this section is to quantify the difference in the numbers of prediction errors between the explicit {\expert} setting and the NAT \expert{} setting.

\begin{lemma}\label{lemma_errors_transition}
If a NAT {\expert} $j$ is consistent with a page sequence $\pi$, then
\[
\eta_{e}(\pi) - n \leq \eta_{N}^{j} \leq 2\eta_{e}(\pi) \, .
\]
\end{lemma}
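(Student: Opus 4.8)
Write $E=\{t\in[T]:\pi_t\neq\sigma_t\}$ for the set of rounds with an explicit prediction error, so $|E|=\eta_e(\pi)$, and $B=\{t\in[T]:\predictJ{t}{j}\neq\NextTime{t}{\sigma_t}\}$ for the set of rounds with a NAT prediction error, so $|B|=\eta_N^j$. Both inequalities will be proved by exhibiting (almost) injective charging maps between $E$ and $B$, using repeatedly the consistency identity $\predictJ{t}{j}=\min\{t'>t:\pi_{t'}=\sigma_t\}$, the identity $\NextTime{t}{i}=\min\{t'>t:\sigma_{t'}=i\}$, and the fact that, thanks to the two $n$-page suffixes, both quantities always lie in $(t,T+n]$.

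\textbf{Upper bound $\eta_N^j\le 2\eta_e(\pi)$.} I would split $B=B_{<}\cup B_{>}$ according to whether $\predictJ{t}{j}<\NextTime{t}{\sigma_t}$ or $\predictJ{t}{j}>\NextTime{t}{\sigma_t}$. For $t\in B_{<}$ put $s=\predictJ{t}{j}$; then $t<s<\NextTime{t}{\sigma_t}$, so $\pi_s=\sigma_t$ while $\sigma_s\neq\sigma_t$ (as $s$ precedes the next arrival of $\sigma_t$), and a one-line check on the suffix shows $s\le T$, hence $s\in E$. For $t\in B_{>}$ put $u=\NextTime{t}{\sigma_t}$; then $u\le T$, $\sigma_u=\sigma_t$, but $\pi_u\neq\sigma_t$ (otherwise $\predictJ{t}{j}\le u$), so $u\in E$. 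Each of these maps is injective: if $t_1<t_2$ were both sent to the same $s$ (resp.\ $u$), then $\sigma_{t_1}=\sigma_{t_2}$ and $t_2$ would be an occurrence of that page strictly between $t_1$ and $s$ (resp.\ $u$), contradicting minimality of $\predictJ{t_1}{j}$ (resp.\ $\NextTime{t_1}{\sigma_{t_1}}$). Thus $|B_{<}|,|B_{>}|\le|E|$ and $\eta_N^j=|B_{<}|+|B_{>}|\le 2\eta_e(\pi)$.

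\textbf{Lower bound $\eta_e(\pi)-n\le\eta_N^j$.} The key is to charge each $s\in E$ through its \emph{actual} page $p=\sigma_s$ rather than through $\pi_s$. Let $t=\LastRequest{s}{p}$. If $t\neq\NULL$, then $t<s$, $\sigma_t=p$, and since $p$ is not requested in $(t,s)$ whereas $\sigma_s=p$, we get $\NextTime{t}{\sigma_t}=s$; but $\predictJ{t}{j}=s$ would force $\pi_s=p=\sigma_s$, contradicting $s\in E$, so $t\in B$. This map is injective: if $s_1<s_2$ in $E$ both map to $t$, then $\sigma_{s_1}=\sigma_t=\sigma_{s_2}$, and $s_1$ is then a request of that page in $(t,s_2)$, contradicting $t=\LastRequest{s_2}{\sigma_{s_2}}$. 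Finally, $\LastRequest{s}{\sigma_s}=\NULL$ only when $s$ is the first occurrence of page $\sigma_s$ in $[T]$, which happens for at most $n$ rounds; hence at least $|E|-n$ rounds of $E$ are mapped injectively into $B$, giving $\eta_e(\pi)-n\le\eta_N^j$.

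\textbf{What is routine, what is the obstacle.} The bookkeeping around the suffix augmentation — verifying that the charged rounds $s,u,t$ all lie in $[T]$ and that the relevant minima are attained — is routine and is the only place needing care. The one genuine idea, and the pitfall if one proceeds naively, is in the lower bound: the ``obvious'' charge $s\mapsto\LastRequest{s}{\pi_s}$ is \emph{not} injective (many error rounds may predict a page whose last prior request coincides), and the remedy is to route through $\sigma_s$ instead, which simultaneously pins down exactly the $n$ first-occurrence rounds that must be discarded.
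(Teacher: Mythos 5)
Your proof is correct and follows essentially the same route as the paper's: the upper bound charges each NAT-error round $t$ to the explicit-error round $\min\{\predictJ{t}{j}, \NextTime{t}{\sigma_{t}}\}$ with a two-to-one bound, and the lower bound charges each explicit-error round to the last prior request of its page, discarding the at most $n$ first occurrences. One cosmetic slip: in the injectivity argument for $B_{<}$, the occurrence of the page at $t_2$ yields $\NextTime{t_1}{\sigma_{t_1}} \leq t_2 < s = \predictJ{t_1}{j}$, contradicting the defining inequality of $B_{<}$ via the minimality of $\NextTime{t_1}{\sigma_{t_1}}$ (not of $\predictJ{t_1}{j}$), but the contradiction stands.
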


\begin{proof}
For each round $t$ with $\predictJ{t}{j} \neq \NextTime{t}{\sigma_{t}}$, let $t' = \min\lbrace \predictJ{t}{j}, \NextTime{t}{\sigma_{t}} \rbrace$. We first prove that $\sigma_{t'} \neq \pi_{t'}$. Suppose that on the contrary $\sigma_{t'} = \pi_{t'}$. Consider the case where $t' = \predictJ{t}{j} < \NextTime{t}{\sigma_{t}}$. Since $\predictJ{}{j} = \lbrace \predictJ{t}{j} \rbrace_{t \in [T]}$ is consistent with $\pi$, we have $\sigma_{t'} = \pi_{t'} = \pi_{\predictJ{t}{j}} = \sigma_{t}$, which leads to a contradiction that $\NextTime{t}{\sigma_{t}} \leq t' < \NextTime{t}{\sigma_{t}}$. It can be proved in a similar way that $\sigma_{t'} = \pi_{t'}$ also results in a contradiction when $t' = \NextTime{t}{\sigma_{t}} < \predictJ{t}{j}$. Therefore, we can always map a round $t$ satisfying $\predictJ{t}{j} \neq \NextTime{t}{\sigma_{t}}$ to a round $t'$ that satisfies  $\sigma_{t'} \neq \pi_{t'}$. 

Let $t$, $t'$ be a pair of rounds with $\predictJ{t}{j} < \NextTime{t}{\sigma_{t}}$ and $t' = \predictJ{t}{j}$. Then for any round $t'' \in (t, t')$ with $\predictJ{t''}{j} < \NextTime{t''}{\sigma_{t''}}$, it can be proved that $t' \neq \predictJ{t''}{j}$. Suppose that on the contrary that $t' = \predictJ{t''}{j}$. Since $\predictJ{}{j}$ is consistent with $\pi$, we have $\sigma_{t} = \pi_{t'} = \pi_{t''}$. In such a case, $\predictJ{t}{j} = t'' < t'$, which results in a contradiction. Similarly, it can be proved that if $\predictJ{t}{j} > \NextTime{t}{\sigma_{t}}$ and $t' = \NextTime{t}{\sigma_{t}}$, then for any round $t'' \in (t, t')$ with  $\NextTime{t''}{\sigma_{t''}} < \predictJ{t''}{j}$, $t' \neq \NextTime{t''}{\sigma_{t''}}$.

Therefore, we can map each round $t$ with $\predictJ{t}{j} \neq \NextTime{t}{\sigma_{t}}$ to a round $t'$ with $\sigma_{t'} \neq \pi_{t'}$ such that at most two such rounds $t$ are mapped to the same $t'$. This proves the right part of our claim that $\eta_{N}^{j} \leq 2 \eta_{e}(\pi)$.

Let $t$ be a round such that $\pi_{t} \neq \sigma_{t}$. If there is no such round $t' \in [t - 1]$ mapped to $t$ in the way we described above, then it can be proved that for any $t' \in [t - 1]$, $\sigma_{t'} \neq \sigma_{t}$. There are at most $n$ rounds that can satisfy this condition. Therefore, we have $\eta_{e}(\pi) - n \leq \eta_{N}^{j}$. This finishes the proof.
\end{proof}


\clearpage
\bibliographystyle{alpha}
\bibliography{references}

\end{document}